\tikzset{
point/.style={circle,fill=black,inner sep=1pt},
vertex/.style={circle,fill=black,inner sep=1.5pt},   
bvertex/.style={circle,fill=black,inner sep=2.8pt},
Bvertex/.style={circle,fill=black,inner sep=4pt}, 
specialEP/.style={rectangle,fill=white,draw,inner sep=3pt},  
whitevex/.style={circle,fill=white,draw, inner sep=2pt},
linelabel/.style={sloped,above,very near start, inner sep=1pt,execute at begin node=$\scriptstyle,execute at end node=$},
baseline=(current  bounding  box.center),doubled/.style={double distance= 1pt,line width=1.5pt},
th/.style={line width=0.5 pt, gray},  
med/.style={line width=1 pt}  
}
\definecolor{orange}{rgb}{1,0.5,0}
\def\bR{\mathbb{R}}
\def\bN{\mathbb{N}}
\def\NN{\mathbb{N}}
\def\bZ{\mathbb{Z}}
\def\cC{\mathcal{C}}
\def\cV{\mathcal{V}}
\def\cO{\mathcal{O}}
\def\cF{\mathcal{F}}
\def\cG{\mathcal{G}}
\def\cL{\mathcal{L}}
\def\cN{\mathcal{N}}
\def\cE{\mathcal{E}}
\def\cK{\mathcal{K}}
\def\cH{\mathcal{H}}
\def\eps{\varepsilon}
\def\ph{\varphi}
\def\wt{\widetilde}
\def\indic{\hbox{\raise-2pt \hbox{\indbf 1}}}
\let\io=\infty
\def\*{{\hfill\break\null\hfill\break}}
\def\bmedia#1{{\bigl\langle#1\bigr\rangle}}
\def\tende#1{\,\vtop{\ialign{##\crcr\rightarrowfill\crcr
             \noalign{\kern-1pt\nointerlineskip}
             \hskip3.pt${\scriptstyle #1}$\hskip3.pt\crcr}}\,}
\def\otto{\,{\kern-1.truept\leftarrow\kern-5.truept\to\kern-1.truept}\,}
\newtheorem{theorem}{Theorem}[section]  
\newtheorem{prop}[theorem]{Proposition}
\newtheorem{lemma}[theorem]{Lemma}
\numberwithin{equation}{section}
\def\tl#1{{\tilde{#1}}}
\def\be{\begin{equation}}
\def\ee{\end{equation}}
\newcommand{\hc}{\mbox{h.c.}}
\let\a=\alpha     \let\g=\gamma     \let\d=\delta     \let\e=\varepsilon
        \let\k=\kappa     \let\l=\lambda
                  \let\p=\pi        \let\r=\rho
\let\s=\sigma \let\t=\tau         \let\ph=\varphi   
   \let\o=\omega     
 \let\D=\Delta       \let\L=\Lambda    
\let\O=\Omega
\definecolor{lightblue}{rgb}{0, 0.33, 0.71}
\def\aa{\mathfrak{a}}
\def \blue#1 {\textcolor{blue}{#1}}
\title{A new second order upper bound for the \\ ground state energy of dilute Bose gases}
\author{Giulia Basti\footnote{Gran Sasso Science Institute, Viale Francesco Crispi 7, 67100 L'Aquila, Italy}\;, Serena Cenatiempo$^\ast$, Benjamin Schlein\footnote{Institute of Mathematics, University of Zurich, Winterthurerstrasse 190, 8057 Zurich.}}
\begin{document}

\maketitle

\begin{abstract} We establish an upper bound for the ground state energy per unit volume of a dilute Bose gas in the thermodynamic limit, capturing the correct second order term, as predicted by the Lee-Huang-Yang formula. This result has been first established in \cite{YY} by H.-T. Yau and J. Yin. Our proof, which applies to repulsive and compactly supported $V \in L^3 (\bR^3)$, gives better rates  and, in our opinion, is substantially simpler. 
\end{abstract}

\section{Introduction and main result}

We consider $N$ bosons in a finite box $\L_L=[-\tfrac L 2, \tfrac L 2]^3 \subset \bR^3$, interacting via a two-body non negative, radial, compactly supported potential $V$ with scattering length $\aa$. The Hamilton operator has the form
\begin{equation} \label{eq:HN-0}
H_L = -\sum_{i=1}^N \D_i+\sum_{1 \leq i<j \leq N}V(x_i-x_j)
\end{equation}
and acts on the Hilbert space $L^2_s(\L_L^N)$, the subspace of $L^2(\L_L^N)$ consisting of functions that are symmetric with respect to permutations of the $N$ particles (we use here units with particle mass $m=1/2$ and $\hbar =1$). We assume Dirichlet boundary conditions and denote by $E(N,L)$ the corresponding ground state energy. We are interested in the energy per unit volume in the thermodynamic limit, defined by 
\begin{equation} \label{eq:e-rho}
e(\r)=\lim_{\substack{N,L\to+\infty\\ \r=N/L^3}} \frac{E(N,L)}{L^3}\,.
\end{equation}

Bogoliubov \cite{B} and later, in more explicit terms, Lee-Huang-Yang \cite{LHY} predicted that, in the dilute limit $\r \aa^3 \ll 1$, the specific ground state energy \eqref{eq:e-rho} is so that 
\be \label{eq:LHY}
e(\r)=  4 \pi \aa \r^2  \left[ 1+ \frac{128}{15\sqrt \pi} (\r \aa^3)^{1/2}+ o( (\r \aa^3)^{1/2}) \right]
\ee
In particular, up to lower order corrections, it only depends on the interaction potential through the scattering length $\aa$. 

The validity of the leading term on the r.h.s. of (\ref{eq:LHY}) was established by Dyson, who obtained an upper bound in \cite{Dy}, and by Lieb-Yngvason, who proved the matching lower bound in \cite{LY}. 
A rigorous upper bound with the correct second order contribution was first derived in \cite{YY} by Yau-Yin for regular potentials, improving a previous estimate from \cite{ESY}, which only recovered the correct formula (as an upper bound) in the limit of weak coupling. The approach of \cite{YY} has been reviewed and adapted to a grand canonical setting in \cite{Aaen}. As for the lower bound, preliminary results have been obtained in \cite{GiuS} and \cite{BriS}, where (\ref{eq:LHY}) was shown in particular regimes, where the potential scales with the density $\rho$. Finally, a rigorous lower bound matching (\ref{eq:LHY}) has been recently obtained, for $L^1$ potentials, by Fournais-Solovej in \cite{FS} (for hard core potentials, the best available lower bound is given in \cite{BrFS} and it matches (\ref{eq:LHY}), up  to corrections of the same size as the second order term). 

Our goal, in this paper, is to show a new upper bound for (\ref{eq:LHY}). With respect to the upper bound established in \cite{YY}, our result holds for a larger class of potentials (in \cite{YY}, the upper bound is proven for smooth potentials), it gives a better rate (although still far from optimal) and, most importantly in our opinion, it relies on a simpler proof.    
\begin{theorem} \label{thm:main}
Let $V \in L^{3}(\bR^3)$ be non-negative, radially symmetric, with $\mathrm{supp}(V)\subset B_R(0)$ and scattering length $\aa\leq R$ . Then, the specific ground state energy $e(\rho)$ of the Hamilton operator $H_L$  defined in \eqref{eq:HN-0} satisfies
\begin{equation}\label{eq:main}
e(\r)\leq  4\pi \r^2 \aa \Big[ 1 + \frac{128}{15 \sqrt \pi} (\r \aa^3)^{1/2} \Big] + C \r^{5/2+ 1/10}
\end{equation} 
for some $C>0$ and for $\r$ small enough.
\end{theorem}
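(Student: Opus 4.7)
\textbf{The plan} is to build a trial wavefunction on $\L_L$ by a tiling argument, and to construct the trial state on each cell via Bogoliubov-type transformations in the spirit of the methods developed for the Gross-Pitaevskii regime, adapted to the thermodynamic limit. Fix an intermediate length scale $\ell = \r^{-\a}$ with $\a$ to be optimized (we expect $\a$ close to $3/5$), and partition $\L_L$ into $n^3 \simeq (L/(\ell+2R))^3$ disjoint closed cubes of side $\ell$, separated by empty corridors of width $2R$. Placing in each cell a trial state $\Psi_\ell$ supported in the interior cube and vanishing near its boundary, the assumption $\mathrm{supp}(V)\subset B_R(0)$ ensures that inter-cell interactions vanish, and the product wavefunction belongs to the Dirichlet form domain on $\L_L$. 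The geometric loss from the corridors contributes $O(\r^2 R/\ell) = O(R\r^{2+\a})$ to the energy per unit volume, which is $\leq \r^{5/2+1/10}$ provided $\a \geq 3/5$.

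\textbf{Trial state on a cell.} On $\L_\ell$ with periodic boundary conditions and $N_\ell \simeq \r \ell^3$ bosons, I would take $\Psi_\ell = U^* e^{S_3} e^{B(\eta)}\Omega$, where $U^*: \cF^{\leq N_\ell}_+ \to L^2_s(\L_\ell^{N_\ell})$ is the standard excitation map factoring out the condensate, $B(\eta) = \tfrac12\sum_{p\neq 0}\eta_p(b_p^* b_{-p}^* - \text{h.c.})$ is a generalized Bogoliubov transformation with coefficients $\eta_p$ obtained from the Fourier transform of $1-f_\ell$, and $f_\ell$ solves the modified scattering equation $[-\Delta + V/2] f_\ell = \lambda_\ell f_\ell$ in $B_\ell$ with $f_\ell = 1$ on $\partial B_\ell$. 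The conjugation by $e^{B(\eta)}$ encodes the short-distance correlation structure and renormalizes the bare potential to the coupling constant $8\pi\aa$. The cubic factor $e^{S_3}$, with $S_3$ of the form $\sum_{r,v} \eta_r(b_{r+v}^* b_{-r}^* b_v - \text{h.c.})$, is designed to cancel the residual cubic terms produced by the previous conjugation, which would otherwise contribute at order $\r^{5/2}$. A final standard Bogoliubov rotation may then be applied to diagonalize the remaining quadratic part, producing the Bogoliubov dispersion $\eps(p) = \sqrt{p^4 + 16\pi\aa\r\, p^2}$; its renormalized zero-point energy, in the thermodynamic limit, yields the Lee-Huang-Yang correction $4\pi\aa\r^2 \cdot \tfrac{128}{15\sqrt\pi}(\r\aa^3)^{1/2}$.

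\textbf{Main obstacle.} The heart of the proof is controlling the error term $\cE_\ell$ arising when the quartic interaction $\tfrac{1}{2\ell^3}\sum_{p,q,r}\hat V(r)\,a^*_{p+r} a^*_{q-r} a_p a_q$ is conjugated by $e^{B(\eta)}$ and $e^{S_3}$: one needs to show that $\langle\Omega, \cE_\ell \Omega\rangle \leq C\r^{5/2 + 1/10}\ell^3$ uniformly in $\ell$. This requires careful bounds on multi-commutators of $V$ with $B(\eta)$ and $S_3$, exploiting the decay $|\eta_p| \lesssim \aa/(\ell|p|^2)$ (which in turn uses $V \in L^3$ to control the Fourier convolutions generated by $\hat V$), together with the fact that the number of excitations $\cN$ is small compared to $N_\ell$ on the state $e^{S_3}e^{B(\eta)}\Omega$. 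Optimizing $\a$ so that the tiling loss $R\r^{2+\a}$, the scattering-equation cutoff error, and the remaining Bogoliubov-type remainders are all of the same order leads to the final rate $\r^{5/2+1/10}$.
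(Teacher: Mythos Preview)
Your overall architecture --- tiling, Bogoliubov transformation built from the scattering solution, cubic correction, diagonalization producing the LHY integral --- is the right picture, but there is a gap in your localization accounting that forces a different choice of scales and, with it, different techniques than the ones you propose.

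You only count the \emph{geometric} corridor loss $O(\rho^2 R/\ell)$. But the state $U^* e^{S_3} e^{B(\eta)}\Omega$ is defined through Fourier modes on a torus and does not vanish at the cell boundary. To make it Dirichlet you must multiply by a cutoff living on a layer of width $d<\ell$, and this produces a kinetic cost of order $\rho/(d\ell)$ per unit volume (this is exactly the term $C L^{-4}\ell^{-1}\langle\cN\rangle$ in Prop.~\ref{prop:localization} of the paper). Requiring both this error and the density--adjustment error $\rho^2 d/\ell$ to be $o(\rho^{5/2})$ forces $\ell>\rho^{-1}$; your choice $\ell=\rho^{-3/5}$ is too small, and with it the Dirichlet localization already destroys the LHY term. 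This is why the paper works on boxes of size $L=\rho^{-\gamma}$ with $\gamma=11/10>1$.

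Once the box is that large you are far from the Gross--Pitaevskii regime (which corresponds to $\gamma=1/2$), and the BBCS machinery you invoke does not transfer directly: the number of excitations is no longer $O(1)$ but a growing power of $N$, and the commutator expansions of the unitary $e^{S_3}$ are not controlled. The paper deals with this through three concrete departures from your proposal: it works grand--canonically with a Weyl operator $W_{N_0}$ instead of the excitation map $U^*$; the cubic phase $A_\nu$ contains only \emph{creation} operators (so $e^{A_\nu}\Omega$ is computed as an explicit power series rather than by commutator estimates); and $A_\nu$ carries hard momentum cutoffs $\Theta_{r,v}$ that forbid repeated low and high momenta, which collapses the Wick combinatorics to a tractable form. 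These are the new ingredients needed to close the error analysis on boxes large enough for the localization to be harmless.
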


{\it Remark:} since Dirichlet boundary conditions lead to the largest energy, the upper bound (\ref{eq:main}) holds in fact for arbitrary boundary conditions. 

\medskip

{\it Remark:} at the cost of a longer proof, we could improve the bound on the error, up to the order $\rho^{5/2+2/9}$ (this is the rate determined by Lemma \ref{lm:cubic}). 

\medskip

The proof of \ref{eq:main} is based on the construction of an appropriate trial state. However, we do not construct directly a trial state in $L^2_s (\L_L^N)$ for the Hamiltonian (\ref{eq:HN-0}). Instead, to simplify the analysis, it is very convenient to 1) consider smaller boxes (rather than letting $N, L \to \infty$ first and considering small $\r$ at the end, we will consider a diagonal limit, with $L = \rho^{-\g}$, for some $\g > 1$), 2) work with periodic rather than Dirichlet boundary conditions and 3) work in the grand-canonical, considering states with variable number of particles, rather than the canonical setting. In other words, our trial state will be defined on the bosonic Fock space 
\[\label{eq:Fock} 
\cF(\L_{ L})= \bigoplus_{n \geq 0} L^2_s (\Lambda_L^{n}) = \bigoplus_{n \geq 0} L^2 (\Lambda_L)^{\otimes_s n} \]
where $L^2_s (\Lambda_L^{n})$ is the subspace of $L^2 (\Lambda_L^n)$ consisting of wave functions that are symmetric w.r.t. permutations. On $\cF(\L_{ L})$, we consider the number of particles operator $\cN$ defined through $(\cN \psi)^{(n)}= n \psi^{(n)}$. Moreover, we introduce the Hamiltonian 
operator $\cH$, setting  
\be \label{eq:cH}
(\cH \psi)^{(n)}= \cH^{(n)} \psi^{(n)}
\ee with
\[ \label{eq:cHn}
\cH^{(n)} = \sum_{j=1}^n -\D_{x_j} + \sum_{1 \leq i<j \leq n}  V(x_i-x_j) 
\]
imposing now (in contrast to what we did in (\ref{eq:HN-0})) periodic boundary conditions. The upper bound for the energy of (\ref{eq:cH}) will then imply Theorem \ref{thm:main} thanks to the following localization result, whose standard proof \cite{R, YY, Aaen} is discussed for completeness in Appendix \ref{app:localization}. 

\begin{prop} \label{prop:localization}
Let  $e(\r)$ be defined as in \eqref{eq:e-rho}, with Dirichlet boundary conditions. Let $R < \ell < L$, with $R$ the radius of the support of the potential $V$, as defined in Theorem \ref{thm:main}. 
Then, for any normalized $\Psi_{L} \in \cF (\L_{L})$ satisfying periodic boundary conditions and such that 
\be \label{eq:loc-assump}
 \langle \Psi_{L}, \cN \Psi_{L} \rangle  \geq \rho(1+c' \r) (L + 2 \ell + R)^3 \,,\quad  \langle \Psi_{L}, \cN^2 \Psi_{L} \rangle  \leq C' \rho^{2} (L + 2 \ell + R)^6
\ee
for some $c' , C' > 0$. Then we have 
\be \label{eq:localization}
e(\r) \leq   \frac{\langle \Psi_{L}, \cH \Psi_{L} \rangle }{L^3} + \frac{C}{L^{4} \ell} \langle \Psi_{L}, \cN \Psi_{L} \rangle 
\ee 
for a universal constant $C > 0$.
\end{prop}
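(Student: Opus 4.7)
My approach is to construct from $\Psi_L$ a Dirichlet Fock trial state on the enlarged box $\Lambda_{\tilde L}$ (with $\tilde L := L + 2\ell + R$), control its normalized energy, and then invoke the thermodynamic characterization of $e(\rho)$. I pick a smooth cutoff $\chi : \bR^3 \to [0,1]$ with $\chi \equiv 1$ on $\Lambda_{L-2\ell}$, $\mathrm{supp}\,\chi \subset \Lambda_L$, and $\|\nabla\chi\|_\infty \le C/\ell$, $\|\Delta\chi\|_\infty \le C/\ell^2$. Setting on each $n$-particle sector
\[
\tilde\Psi^{(n)}(x_1, \ldots, x_n) := \prod_{j=1}^n \chi(x_j)\, \Psi_L^{(n)}(x_1, \ldots, x_n),
\]
the state $\tilde\Psi$ is supported inside $\Lambda_L \subset \Lambda_{\tilde L}$ at distance $\ge \ell + R/2$ from $\partial\Lambda_{\tilde L}$, hence is a Dirichlet trial state; the condition $R < \ell$ further ensures that no interaction pair can reach the boundary, allowing disjoint copies of $\tilde\Psi$ to be tiled without cross-coupling.

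\emph{Energy estimate.} An IMS-type integration by parts (with vanishing boundary contributions, since $\chi = 0$ on $\partial\Lambda_L$) gives
\[
\int \sum_j |\nabla_j \tilde\Psi^{(n)}|^2 \, dx = \int \prod_i \chi(x_i)^2 \sum_j |\nabla_j \Psi_L^{(n)}|^2 \, dx - \int \sum_j \chi(x_j) \Delta \chi(x_j) \prod_{i \neq j} \chi(x_i)^2 \, |\Psi_L^{(n)}|^2 \, dx.
\]
The first summand is a $\chi$-weighted kinetic energy of $\Psi_L$, bounded (upon division by $\|\tilde\Psi\|^2$) by the kinetic energy of $\Psi_L$ itself; the second, using $\|\Delta\chi\|_\infty \lesssim 1/\ell^2$ together with the fact that $\mathrm{supp}(\Delta\chi)$ has volume $\lesssim L^2\ell$ and that the one-particle density of $\Psi_L$ is approximately $\rho$, contributes at most $C\langle\Psi_L, \cN \Psi_L\rangle/(L\ell)$ relative to $\|\tilde\Psi\|^2$. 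For the interaction, $\chi \le 1$ and $V(x_i - x_j) \le V_{\mathrm{per}}(x_i - x_j)$ on $\mathrm{supp}\,\tilde\Psi$ (the cross-torus pairs counted in $\cH$ are dropped in the Dirichlet setting) yield a comparable bound. Combining,
\[
\frac{\langle\tilde\Psi, H_D \tilde\Psi\rangle}{\|\tilde\Psi\|^2} \le \langle\Psi_L, \cH \Psi_L\rangle + \frac{C}{L\ell} \langle\Psi_L, \cN \Psi_L\rangle,
\]
with $H_D$ the Dirichlet Hamiltonian on $\Lambda_{\tilde L}$.

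\emph{Extraction of $e(\rho)$.} To convert this grand-canonical bound into a statement about $e(\rho)$, I would project $\tilde\Psi$ onto some $N$-particle sector with $N \ge \rho \tilde L^3$. Chebyshev's inequality, applied with the moment bounds $\langle\cN\rangle \ge \rho(1 + c'\rho)\tilde L^3$ and $\langle\cN^2\rangle \le C' \rho^2 \tilde L^6$, picks such an $N$ for which the projected state retains enough mass to serve as a valid canonical Dirichlet trial state; the $c'\rho \tilde L^3$ excess of particles in $\Psi_L$ is exactly what absorbs the cutoff-induced density loss ($\sim \rho L^2 \ell$) and the variance-induced slack. Tiling a larger Dirichlet box $\Lambda_{k \tilde L}$ with $k^3$ disjoint copies of this state, passing to the thermodynamic limit $k \to \infty$ and using monotonicity of $e(\cdot)$ in $\rho$, yields
\[
e(\rho) \le \frac{\langle\Psi_L, \cH \Psi_L\rangle}{\tilde L^3} + \frac{C \langle\Psi_L, \cN \Psi_L\rangle}{L \ell \, \tilde L^3} \le \frac{\langle\Psi_L, \cH \Psi_L\rangle}{L^3} + \frac{C}{L^4 \ell} \langle\Psi_L, \cN \Psi_L\rangle,
\]
using $\tilde L \ge L$. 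The main obstacle lies precisely in the second passage: one must simultaneously balance the cutoff-induced density loss, the particle-number fluctuations after projection, and the delicate correction stemming from the norm ratio $\|\tilde\Psi\|^2/\|\Psi_L\|^2$ in the energy estimate (which can drop substantially when many particles sit in the boundary layer). The precise form of the hypotheses---the $(1+c'\rho)$ margin on $\langle\cN\rangle$ and the second-moment bound on $\cN^2$---is tailored exactly to control these corrections uniformly.
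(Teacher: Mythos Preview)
Your overall architecture (localize to Dirichlet, tile, pass to the thermodynamic limit) matches the paper's, but the energy estimate in your first step contains a genuine gap, which you in fact identify yourself at the end. With a generic cutoff $\chi$, the claim that the $\chi$-weighted kinetic term divided by $\|\tilde\Psi\|^2$ is bounded by the kinetic energy of $\Psi_L$ is not true in general: both numerator and denominator are damped by the weight $\prod_i\chi(x_i)^2$, and the ratio can move in either direction depending on where $|\Psi_L|^2$ and $|\nabla\Psi_L|^2$ concentrate. Similarly, your bound on the IMS error term relies on the one-particle density of $\Psi_L$ being approximately $\rho$; a periodic state is not translation-invariant in general, and nothing in the hypotheses gives you this.

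The paper resolves both issues at once by two devices you are missing. First, instead of a generic smooth $\chi$ it uses the explicit product cutoff $Q_{L,\ell}(x)=\prod_k q_{L,\ell}(x^{(k)})$ with $q$ built from cosines; the identity $\cos^2 t+\cos^2(t-\pi/2)=1$, combined with the $L$-periodicity of $\Psi_L$, gives $\|\Psi_{L+2\ell,u}^{\mathrm D}\|=1$ \emph{exactly}, for every translation $u\in\Lambda_L$. There is therefore no norm ratio to worry about. Second, the localization error $\int\sum_j q\,q''(x_j-u)\,|\Psi_L|^2$ is handled not by a density heuristic but by \emph{averaging over} $u\in\Lambda_L$: the $u$-integral replaces the boundary-layer indicator by its uniform volume fraction $O(\ell/L)$, so the averaged error is $\le C\langle\Psi_L,\cN\Psi_L\rangle/(L\ell)$ regardless of where the particles sit, and one then picks a single good $\bar u$.

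For the grand-canonical to canonical passage the paper also takes a different route from your projection/Chebyshev sketch. It tiles first (still in Fock space, with corridors of width $R$), and then introduces a chemical potential: writing $\langle\cH\rangle\ge\mu\langle\cN\rangle+\langle(\cH-\mu\cN)\chi(\cN\le M\tilde L^3)\rangle-\mu\langle\cN\chi(\cN>M\tilde L^3)\rangle$, the middle term is bounded below via $e_{\tilde L}(\rho')\ge(1+R/\tilde L)^3 e\big(\rho'(1+R/\tilde L)^{-3}\big)$ (another tiling), the tail is controlled by the second-moment assumption, and one arrives at $\langle\cH\rangle/\tilde L^3\ge\mu\rho-(1+R/\tilde L)^3 e^*(\mu)$. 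Taking $\tilde L\to\infty$ and inverting the Legendre transform (using convexity of $e$) recovers $e(\rho)$. This avoids having to exhibit a single $N$-sector that simultaneously carries enough mass and a controlled energy, which your sketch leaves open; it also does not use monotonicity of $e(\cdot)$, which you invoke without justification.
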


The bulk of the paper contains the proof of the following proposition, establishing the existence of a trial state with the correct energy per unit volume and the correct expected number of particles, on boxes of size $L = \tl{\rho}^{-\g}$. We use here the notation $\tl{\rho}$ for the density to stress the fact the upper bound (\ref{eq:enL}) will be inserted in (\ref{eq:localization}) to prove an upper bound for the specific ground state energy $e (\r)$, for a slightly different density $\r < \tl{\rho}$ (to make up for the corrections on the r.h.s. of (\ref{eq:loc-assump})). 
\begin{prop} \label{prop:energy-trial}  
 As in Theorem \ref{thm:main} assume that $V \in L^3 (\bR^3)$ is non-negative, radially symmetric with $\text{supp } V \subset B_R (0)$ and scattering length $\frak{a} \leq R$. For $\gamma > 1$ and $\tl{\rho} > 0$ let $L = \tl{\rho}^{-\g}$. Then, for every $0< \eps < 1/4$, there exists $\Psi_{\tl{\rho}} \in \cF (\L_L)$ satisfying periodic boundary conditions such that 
 \begin{equation}\label{eq:tlrho-def} 
\langle \Psi_{\tl{\rho}} , \cN \Psi_{\tl{\rho}} \rangle \geq \tl{\rho} L^3 \,, \qquad  \langle \Psi_{\tl{\rho}} , \cN^2 \Psi_{\tl{\rho}} \rangle \leq C \tl \r^{2} L^6 \end{equation} 
and 
\begin{equation}\label{eq:enL}
	\frac{\bmedia{\Psi_{\tl\r},\cH \Psi_{\tl \r}}}{ L^3} \leq 4\pi\aa \tl\r^{\,2}\bigg(1+\frac{128}{15\sqrt{\pi}}(\aa^3 \tl \r)^{1/2}\bigg)+ \cE \,,
\end{equation} 
with 
\[ \cE \leq C \,\tl\r^{\,5/2} \cdot \max \{\tl{\rho}^\eps , \, \tl{\rho}^{4-3\g -6\eps} , \, \tl{\rho}^{9/4-3\g/2-3\eps} \}\,. \]
\end{prop}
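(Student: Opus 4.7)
The plan is to construct $\Psi_{\tilde\rho}$ in the form $\Psi_{\tilde\rho} = W\, T\, e^{A}\,\Omega$, where $\Omega \in \cF(\L_L)$ is the Fock vacuum and: $W = \exp(\sqrt{N_0}\,a_0^* - \sqrt{N_0}\,a_0)$ is the Weyl operator creating a coherent occupation of the zero-momentum mode with $N_0 \simeq \tilde\rho L^3$; $T = \exp\bigl(\tfrac12 \sum_{p\neq 0}\eta_p (a_p^* a_{-p}^* - a_{-p} a_p)\bigr)$ is a generalized Bogoliubov transformation whose kernel $\eta_p$ is built from the zero-energy scattering equation for $V$ truncated on a ball of radius $\ell \in (R, L/2)$; and $e^A$ is a cubic transformation with antisymmetric cubic generator $A$, inserted to compensate the singular cubic fluctuation term left over after the first two conjugations. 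This is the architecture developed in recent proofs of the LHY formula in the Gross--Pitaevskii regime, adapted here to a mesoscopic box of side $L = \tilde\rho^{-\gamma}$.

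The bound \eqref{eq:enL} is extracted in three steps. Conjugation of $\cH$ by $W$ replaces $a_0^\sharp$ by $\sqrt{N_0}$ plus fluctuations; the constant it produces gives, after using the scattering equation to trade $\tfrac12 \widehat V(0)$ for $4\pi\aa$, the leading term $4\pi\aa\tilde\rho^2 L^3$, plus fluctuation operators of orders two, three and four on the non-zero modes. Conjugation by $T$ diagonalizes the quadratic part and produces the zero-point sum
\[
\frac12 \sum_{p \in \tfrac{2\pi}{L}\bZ^3 \setminus\{0\}} \Big[\sqrt{p^4 + 16\pi\aa\tilde\rho\, p^2} - p^2 - 8\pi\aa\tilde\rho + \frac{(8\pi\aa\tilde\rho)^2}{2p^2}\Big],
\]
which, by a standard Riemann-sum/tail-integration argument, equals $\tfrac{128}{15\sqrt{\pi}}\cdot 4\pi\aa \,\tilde\rho^{5/2}\aa^{3/2}\, L^3$ up to an error $O(\tilde\rho^{5/2+\eps}L^3)$: this is the LHY correction. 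The residual cubic term in $T^*W^*\cH WT$ is still of expected size $\tilde\rho^{5/2} L^3$ and is cancelled at leading order by $e^A$: choosing $A$ so that $[\cK, A]$ equals (minus) this cubic term to leading order, with $\cK$ the conjugated kinetic operator, the residual commutators in the Baker--Campbell--Hausdorff expansion of $e^{-A} \cdot e^A$ are quartic or higher in fluctuations and are bounded by Lemma \ref{lm:cubic}.

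The moment bounds \eqref{eq:tlrho-def} are routine calculations on states of the form $WTe^A\Omega$: $T$ and $e^A$ affect $\cN$ only through quantities controlled by $\sum_p \sinh^2\eta_p = O(\tilde\rho^{1/2} L^3)$ and by the small operator norm of $A$ on bounded-particle sectors, so $\langle \cN \rangle \simeq N_0$ and $\langle \cN^2 \rangle \lesssim N_0^2$. The main obstacle, and the content of the bulk of the paper, is the control of the cubic transformation $e^A$: $A$ must be chosen precisely enough to cancel the leading cubic term in $T^*W^*\cH WT$, while keeping every higher commutator strictly smaller than the LHY correction $\tilde\rho^{5/2}(\aa^3\tilde\rho)^{1/2} L^3$. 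The trade-off among the size of $A$, the scattering truncation scale $\ell$, and the box size $L = \tilde\rho^{-\gamma}$ is what produces the three competing rates $\tilde\rho^{\eps}$, $\tilde\rho^{4-3\gamma-6\eps}$, $\tilde\rho^{9/4-3\gamma/2-3\eps}$ in the final bound for $\cE$.
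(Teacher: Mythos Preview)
Your overall architecture---Weyl operator, Bogoliubov transformation, cubic correction---matches the paper, and the extraction of the leading term and the LHY correction proceeds roughly as you describe. However, your account of the cubic step is substantively wrong. The paper's $A_\nu$ is \emph{not} antisymmetric: it consists purely of creation operators,
\[
A_\nu = \frac{1}{\sqrt{N}} \sum_{\substack{r\in P_H,\, v\in P_S:\\ r+v\in P_H}} \eta_r\, \sigma_v\, a^*_{r+v} a^*_{-r} a^*_{-v}\, \Theta_{r,v},
\]
so $e^{A_\nu}$ is not unitary and there is no conjugation $e^{-A}(\cdot)e^{A}$ or Baker--Campbell--Hausdorff expansion. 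Instead one computes $\langle \xi_\nu, \cG_N \xi_\nu\rangle / \|\xi_\nu\|^2$ directly, with $\xi_\nu = e^{A_\nu}\Omega$ an explicit superposition of $3m$-particle states. Lemma~\ref{lm:cubic} does not bound commutators; it evaluates these expectations by Fock-space contractions, and the cutoff $\Theta_{r,v}$ (which you do not mention) is inserted precisely to force each triple of created particles to contract only with its own partners, making the combinatorics manageable. This is the main technical novelty relative to the Gross--Pitaevskii-regime papers you appear to be extrapolating from.

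Two smaller corrections: the Bogoliubov kernel is not the scattering coefficient everywhere but is split, $\nu_p = \tau_p$ (the explicit diagonalizing angle) on the low-momentum set $P_L$ and $\nu_p = \eta_p$ on $P_L^c$; and the scattering truncation scale $\ell$ is not among the parameters generating the three competing error rates---those arise from the momentum-cutoff parameter $\eps$ (defining $P_L$, $P_S$, $P_H$) and the box exponent $\gamma$, via the rescaling $\eps(3\gamma-1)\to\eps$ at the end of the proof.
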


{\it Remark:} the condition $\gamma > 1$ is needed to make sure that the localization error in (\ref{eq:localization}) is negligible. While we will chose $\gamma = 11/10$ to optimize the rate, our analysis allows us to take any $1 < \gamma < 4/3$. With a longer proof, our techniques could be extended to all $1 < \gamma < 5/3$. This suggests that our trial state captures the correct correlations of the ground state, up to length scales of the order $\rho^{-5/3}$. 

\medskip

With Prop. \ref{prop:localization} and Prop. \ref{prop:energy-trial} we can prove Theorem \ref{thm:main}. 

\begin{proof}[Proof of Theorem \ref{thm:main}] 
For given $\rho > 0$, we would like to choose $\tl{\rho}$, or equivalently $L = \tl{\rho}^{-\g}$, so that (\ref{eq:tlrho-def}) implies (\ref{eq:loc-assump}). Fixing $c' > 0$ and $\ell = L^\alpha$, for some $\alpha \in (0;1)$, this leads to the implicit equation 
\begin{equation} \label{eq:impl1} L = \tl{\rho}^{-\g} = \left[ \rho (1+ c' \r) (1 + 2 L^{\alpha-1} + R L^{-1})^3 \right]^{-\g}\,. \end{equation} 
Setting $L = \big(\rho (1+c' \r) \big)^{-\g} x$, we rewrite (\ref{eq:impl1}) as 
\[ x = \big(1 + 2 \, \big(\rho(1+c' \r) \big)^{\g (1-\a)} / x^{1-\a} + R \, \big(\rho(1+c' \r) 
\big)^\g /x \big)^{-3\g} \]
and we conclude that the existence of a solution $L = L(\rho)$ of (\ref{eq:impl1}) follows from the implicit function theorem, if $\rho > 0$ is small enough (the solution stems from $x=1$ for $\rho = 0$). By construction $L = \tl{\rho}^{-\g}$, with \[ \tl{\rho} = \rho  (1+c' \r) (1 + 2 \tl{\rho}^{\g (1-\a)} + R \tl{\rho}^\g)^3\]  and thus 
\begin{equation}\label{eq:rho-rhotl} 
\rho \leq \tl{\rho}  \leq \rho (1 + C \rho + C \rho^{\g (1-\a)}) \,.
\end{equation} 
From Prop. \ref{prop:energy-trial}, we find $\Psi_{\tl{\rho}} \in \cF (\L_L)$ such that (\ref{eq:tlrho-def}) and (\ref{eq:enL}) hold true. In particular, (\ref{eq:tlrho-def}) implies (\ref{eq:loc-assump}) (with $\ell = L^\alpha$, $C'=C$). 
Thus, from Prop. \ref{prop:localization} we conclude  
\[ e(\rho) \leq \frac{\langle \Psi_{\tl{\rho}}, \cH \Psi_{\tl{\rho}} \rangle}{L^3} + \frac{C}{L^4 \ell} \langle \Psi_{\tl{\rho}} , \cN \Psi_{\tl{\rho}} \rangle \]
Inserting (\ref{eq:enL}) and (\ref{eq:tlrho-def}), we obtain (since (\ref{eq:tlrho-def}) also implies that $\langle \Psi_{\tl{\rho}} , \cN \Psi_{\tl{\rho}} \rangle \leq C \wt{\rho} L^3$) 
\[ e (\rho) \leq 4 \pi \frak{a} \tl{\rho}^2 \left[ 1 + \frac{128}{15\sqrt{\pi}} (\frak{a}^3 \tl{\rho})^{1/2} \right] + C \tl{\rho}^{1+\g (1+\a)} + C \tl\r^{\,5/2} \cdot \max \{\tl{\rho}^\eps , \, \tl{\rho}^{4-3\g -6\eps} , \, \tl{\rho}^{9/4-3\g/2-3\eps} \}\,. \]
With (\ref{eq:rho-rhotl}), we conclude that
\[ \begin{split} e (\rho) \leq \; &4 \pi \frak{a} \rho^2 \left[ 1 + \frac{128}{15\sqrt{\pi}} (\frak{a}^3 \rho )^{1/2} \right] \\ &+ C \r^{5/2} \cdot \max \{ \rho^{\g (1-\a)-1/2} , \,  \rho^{\g (1+\a) -3/2} , \,   \rho^\eps , \, \rho^{4-3\g -6\eps} , \, \rho^{9/4-3\g/2-3\eps} \}  \end{split} \]
where we neglected errors of order $C \r^3$, which are subleading being $\e \in (0; 1/4)$.

Comparing the first two errors, we choose $\alpha = 1/(2\g)$ . Comparing instead third and fourth errors, we set $\eps = (4-3\g)/7$ (both choices are consistent with the conditions $\a \in (0;1)$ and $\e \in (0;1/4)$, because $\g > 1$). Since, with these choices, the last error is of smaller order, we obtain 
 \[ e (\rho) \leq \; 4 \pi \frak{a} \rho^2 \left[ 1 + \frac{128}{15\sqrt{\pi}} (\frak{a}^3 \rho )^{1/2} \right]  + C \r^{5/2} \cdot \max \{ \rho^{\g-1}, \rho^{(4-3\g)/7} \}\,. \]
 Choosing $\gamma = 11/10$, we find (\ref{eq:main}). 
\end{proof}

The proof of Prop. \ref{prop:energy-trial} occupies the rest of the paper (excluding Appendix \ref{app:localization}, where we show Prop. \ref{prop:localization}). In Section \ref{sec:trial} we 
define our trial state. To this end, we will start with a coherent state describing the Bose-Einstein condensate. Similarly as in \cite{GA,ESY}, we will then act on the coherent state with a Bogoliubov transformation to add the expected correlation structure. Finally, we will apply the exponential of a 
cubic expression in creation operators. While the Bogoliubov transformation creates pairs of excitations with opposite momenta $p, -p$, the cubic operator creates three excitations at a time, two with large momenta $r+v, -r$  and one with low momentum $v$. This last step is essential, since, as follows from \cite{ESY,NRS}, quasi-free states cannot approximate the ground state energy to the precision of (\ref{eq:LHY}). We remark that the idea of creating triples of excitations originally appeared in the work of Yau-Yin \cite{YY}  (a brief comparison with the trial state of \cite{YY} can be found after the precise definition of our trial state in \eqref{eq:trial}). Recently, it has been also applied to establish the validity of Bogoliubov theory in the Gross-Pitaevskii regime in \cite{BBCS3,BBCS4}; while our approach is inspired by these papers, we need here new tools to deal with the large boxes considered in Prop.\ref{prop:energy-trial} (a simple computation shows that the Gross-Pitaevskii regime corresponds to the exponent $\gamma = 1/2$; to control localization errors, we need instead to choose $\gamma > 1$). 
In Section \ref{sec:energy}, we combine the contributions to the energy of the trial state arising from the conjugation with the Bogoliubov transformation and from the action of the cubic phase, proving the desired upper bound. In Section \ref{sec:Bog} and Section \ref{sec:cubicconj}, we prove technical bounds which allow us to identify the leading contributions collected in Sect.  \ref{sec:energy}.  \\

\thanks{{\it Acknowledgment.}  We are grateful to C. Boccato and S. Fournais for valuable discussions. G.B. and S.C. gratefully acknowledge the support from the GNFM Gruppo Nazionale
per la Fisica Matematica. 
B. S. gratefully acknowledges partial support from the NCCR SwissMAP, from the Swiss National Science Foundation through the Grant ``Dynamical and energetic properties of Bose-Einstein condensates'' and from the European Research Council through the ERC-AdG CLaQS.}

\section{Setting and trial state} \label{sec:trial} 

To show Prop. \ref{prop:energy-trial}, we find it convenient to work with rescaled variables. We consider the transformation $x_j \to x_j / L$, and, motivated by the choice $L = \tl{\rho}^{-\g}$ in Prop. \ref{prop:energy-trial}, we set $N = \tl{\rho}^{1-3\g}$ (we will look for trial states with expected number of particles close to $N$ to make sure that (\ref{eq:tlrho-def}) holds true). It follows that the Hamiltonian (\ref{eq:cH}) is unitarily equivalent to the operator $L^{-2} \cH_N = \tl{\rho}^{2\g} \cH_N$, with $\cH_N$ acting on the Fock space $\cF (\L)$ defined over the unit box $\L = \Lambda_1 = [-1/2 ; 1/2 ]^3$ (with periodic boundary conditions) so that $(\cH_N \Psi)^{(n)}  = \cH_N^{(n)} \Psi^{(n)}$, with 
\[ \cH_N^{(n)} = \sum_{j=1}^n -\Delta_{x_j} + \sum_{1 \leq i,j \leq n} N^{2-2\k} V (N^{1-\k} (x_i - x_j)) \]
and $\kappa = (2\g - 1)/ (3\g -1)$. The assumption $\g > 1$ in Prop. \ref{prop:energy-trial} allows us to restrict our attention to $\kappa \in (1/2 ; 2/3)$. 

For any momentum  $p \in \Lambda^* = 2\pi \bZ^3$, we introduce on the Fock space $\cF (\L)  = \bigoplus_{n \geq 0} L^2_s (\L^{n})$, the operators $a_p^*, a_p$, creating and, respectively, annihilating a particle with momentum $p$. Creation and annihilation operators satisfy the canonical commutation relations
\begin{equation} \label{eq:CCR} \left[ a_p , a_q^* \right] = \delta_{pq} , \qquad \left[ a_p , a_q  \right] = \left[ a^*_p, a_q^* \right] = 0. \end{equation} 
On $\cF(\L)$, we define the number of particles operator $\cN = \sum_{p \in \Lambda^*} a_p^* a_p$. Expressed in terms of creation and annihilation operators, the Hamiltonian $\cH_N$ takes the form 
\be \label{eq:HN-Fock}
	\cH_N=\sum_{p\in\L^*}p^2a_p^*a_p+\frac{1}{2N^{1-\k}} \sum_{p,q,r\in \L^*} \widehat{V} (r/N^{1-\k}) \, a_{p+r}^*a_q^*a_{q+r}a_{p}.
\ee

We construct now our trial state. To generate a condensate, we use a Weyl operator 
\be \label{eq:defW}
W_{N_0}=\exp \big[ \sqrt{N_0} a_0^* - \sqrt{N_0}a_0 \big]
\ee
with a parameter $N_0$ to be specified later on. While $W_{N_0}$ leaves $a_p, a_p^*$ invariant, 
for all $p \in \Lambda^* \backslash \{ 0 \}$, it produces shifts of $a_0, a_0^*$; in other words
\begin{equation}\label{eq:actW} W_{N_0}^* a_0 \, W_{N_0} = a_0 + \sqrt{N_0} , \qquad W_{N_0}^* a_0^* \, W_{N_0} = a_0^* + \sqrt{N_0}. \end{equation} 
When acting on the vacuum vector $\Omega = \{ 1, 0, \dots \}$, (\ref{eq:defW}) generates a coherent state in the zero-momentum mode $\ph_0 (x) \equiv 1$, with expected number of particles $N_0$.  

It turns out, however, that the coherent state does not approximate the ground state energy, not even to leading order. To get closer to the ground state energy, it is crucial to add correlations among particles. To this end, we fix $0 < \ell < 1/2$ and we consider the lowest energy solution $f_{\ell}$ of the Neumann problem 
\begin{equation}\label{eq:scatl} \left[ -\Delta + \frac{1}{2} V \right] f_{\ell} = \lambda_{\ell} f_{\ell} \end{equation}
on the  ball $|x| \leq N^{1-\k}\ell$, with the normalization $f_\ell (x) = 1$ if $|x| =N^{1-\k} \ell$. Furthermore, by rescaling, we define $f_N (x) := f_\ell \big( N^{1-\k}x\big)$ for $|x| \leq \ell$. We extend $f_N$ to a function on $\Lambda$, by fixing $f_N (x) = 1$, for all $x \in \Lambda$, with $|x| > \ell$. Then 
\begin{equation}\label{eq:scatlN}
 \left[ -\Delta + \frac12 N^{2-2\k} V ( N^{1-\k}x ) \right] f_{N} (x) = N^{2-2\k} \lambda_\ell f_{N} (x) \chi_\ell(x)
\end{equation}
for all $x \in \L$, where $\chi_\ell$ denotes the characteristic function of the ball of radius $\ell$. We denote by $ \widehat{f}_{N} (p)$ the Fourier coefficients of the function $f_{N}$, for $p \in \L^*$. We also define $w_\ell (x) = 1 - f_\ell (x)$ (with $w_\ell (x) = 0$ for $|x| > N^{1-\kappa} \ell$) and its rescaled version $w_N : \Lambda \to \bR$ through $w_N (x) = w_\ell (N^{1-\kappa} x) = 1 - f_N (x)$. The Fourier coefficients of $w_{N}$ are given by  
\[  \widehat{w}_{N} (p) = \int_{\Lambda} w_\ell ( N^{1-\k} x) e^{-i p \cdot x} dx =\frac1{N^{3-3\k}}\, \widehat{w}_\ell \big(p/N^{1-\k}\big) \]
where $\widehat{w}_\ell (k)$ denotes the  Fourier transform of the (compactly supported) function $w_\ell$. 
Some important properties of the solution of the eigenvalue problem (\ref{eq:scatl}) are summarized in the following lemma, whose proof can be found in \cite[Appendix A]{BBCS3} (replacing $N\in \bN$ by $N^{1-\k}$). 
\begin{lemma} \label{sceqlemma}
Let $V \in L^3 (\bR^3)$ be non-negative, compactly supported and spherically symmetric. Fix $\ell > 0$ and let $f_\ell$ denote the solution of \eqref{eq:scatl}. For $N\in\NN$ large enough the following properties hold true. 
\begin{enumerate}
\item [i)] We have 
\begin{equation*}\label{eq:lambdaell} 
 \bigg| \lambda_\ell - \frac{3\aa }{N^{3-3\k}\ell^3} \bigg| \leq  \frac{1}{N^{3-3\k}\ell^3} \frac{C \aa^2}{\ell N^{1-\k}}\,.
\end{equation*}
\item[ii)] We have $0\leq f_\ell, w_\ell\leq1$. Moreover there exists a constant $C > 0$ such that  
\begin{equation*} \label{eq:Vfa0} 
\left|  \int  V(x) f_\ell (x) dx - 8\pi \aa   \right| \leq \frac{C \aa^2 }{\ell N^{1-\k}}.
\end{equation*}
\item[iii)] There exists a constant $C>0 $ such that, for all $x \in \bR^3$, 
	\begin{equation*}\label{3.0.scbounds1} 
	w_\ell(x)\leq \frac{C}{|x|+1} \quad\text{ and }\quad |\nabla w_\ell(x)|\leq \frac{C }{x^2+1}. 
	\end{equation*}
\item[iv)] There exists a constant $C > 0$ such that, for all $p \in \bR^3$, 
\[ |\widehat{w}_{N} (p)| \leq  \frac{C}{ N^{1-\k}p^2}  \, . \]
\end{enumerate}        
\end{lemma}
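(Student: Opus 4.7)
The plan is to follow \cite[Appendix A]{BBCS3}, with the parameter $N$ there replaced throughout by $N^{1-\kappa}$. The central comparison object is the zero-energy scattering solution $f_\infty$ on $\bR^3$, satisfying $[-\Delta + \tfrac12 V] f_\infty = 0$ and $f_\infty \to 1$ at infinity; outside $\text{supp}\, V$ one has $f_\infty(x) = 1 - \aa/|x|$, so that $\int V f_\infty \, dx = 8\pi \aa$ by definition, and the bounds $0 \leq f_\infty \leq 1$, $w_\infty(x) := 1 - f_\infty(x) \leq \aa/|x|$ and $|\nabla w_\infty(x)| \leq C\aa/x^2$ are explicit outside $\text{supp}\, V$ and follow from elliptic regularity (using $V \in L^3$) inside.

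I would first prove (iii) by a comparison argument on $B := B_{N^{1-\kappa}\ell}$. The variational characterization of $\lambda_\ell$ combined with the maximum principle yields $0 \leq f_\ell \leq 1$ and the pointwise inequality $f_\ell(x) \geq f_\infty(x) - O(1/(N^{1-\kappa}\ell))$ in $B$, the correction coming from the boundary value of $f_\infty$ at $|x| = N^{1-\kappa}\ell$. Combining with the bounds on $w_\infty$ gives the stated bound on $w_\ell$; the gradient bound follows by applying elliptic interior estimates to the equation satisfied by $w_\ell$, together with the Neumann boundary condition.

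Next I would derive (i) and (ii) from two integral identities. Integrating the eigenvalue equation over $B$ and using the Neumann condition to kill $\int_{\partial B} \partial_n f_\ell = 0$ yields $\tfrac12 \int_B V f_\ell = \lambda_\ell \int_B f_\ell$. Applying Green's second identity with $f_\infty$ as test function, noting that the two $V$-terms cancel while the boundary piece equals $-\int_{\partial B} f_\ell \, \partial_n f_\infty = -4\pi \aa$ (using $f_\ell|_{\partial B} = 1$ and $\partial_n f_\infty|_{\partial B} = \aa/(N^{1-\kappa}\ell)^2$), produces $\lambda_\ell \int_B f_\ell f_\infty = 4\pi \aa$. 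Dividing the two identities gives $\int_B V f_\ell = 8\pi \aa \cdot (\int_B f_\ell)/(\int_B f_\ell f_\infty)$; using (iii) together with the explicit form $w_\infty = \aa/|x|$ outside $\text{supp}\, V$ to evaluate both integrals as $\tfrac{4\pi}{3}(N^{1-\kappa}\ell)^3$ up to corrections of order $\aa(N^{1-\kappa}\ell)^2$ then yields (ii), and substituting back into the first identity gives (i).

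For (iv), the scaling identity $\widehat{w}_N(p) = N^{-(3-3\kappa)} \widehat{w}_\ell(p/N^{1-\kappa})$ reduces matters to proving $|\widehat{w}_\ell(k)| \leq C/|k|^2$. Extending $w_\ell$ by zero so that $-\Delta w_\ell = \tfrac12 V f_\ell - \lambda_\ell f_\ell \chi_B$ as a distribution on $\bR^3$, and observing that the right-hand side is bounded in $L^1$ by (i) and (ii), one has $|\widehat{w}_\ell(k)| = |k|^{-2} |\widehat{(-\Delta w_\ell)}(k)| \leq C/|k|^2$, which yields the claim. The main obstacle is the first step: obtaining the pointwise comparison $f_\ell \geq f_\infty - O(1/(N^{1-\kappa}\ell))$ sharply, uniformly up to the boundary, since $f_\ell$ is defined only on the finite ball and one cannot simply paste $f_\infty$ inside; once this is in hand, parts (i), (ii), (iv) are essentially mechanical consequences of the identities above.
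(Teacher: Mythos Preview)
Your proposal is correct and matches the paper's approach exactly: the paper does not give its own proof but simply refers to \cite[Appendix A]{BBCS3} with $N$ replaced by $N^{1-\kappa}$, which is precisely what you outline. Your sketch of the underlying argument (comparison with $f_\infty$ for (iii), the two integral identities from integrating the equation and from Green's identity for (i)--(ii), and the distributional Laplacian computation for (iv)) faithfully reproduces the content of that appendix, and the ``main obstacle'' you flag is handled there via the explicit radial ODE for $u_\ell(r)=r f_\ell(r)$, which gives the needed pointwise comparison directly.
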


We consider the coefficients $\eta: \L^* \to \bR$ defined through
\begin{equation}\label{eq:defeta}
\eta_p = -N \widehat{w}_{N} (p) = - \frac{N^\k}{N^{2-2\k}} \,\widehat{\o}_\ell(p/N^{1-\k}).
\end{equation}
Lemma \ref{sceqlemma}  implies that 
\be \label{eq:modetap}
|\eta_p| \leq \frac{C N^\k}{p^2}
\ee
for all $p \in \L_+^*=2\pi \bZ^3 \backslash \{0\}$, and  for some constant $C>0$ independent of $N\in\NN$ (for $N\in\NN$ large enough). {F}rom (\ref{eq:scatlN}), we find the relation 
\begin{equation}\label{eq:scatt}
p^2 \eta_p + \frac{N^\k}{2}\widehat{V} (p/N^{1-\k}) + \frac{1}{2N} \sum_{q \in \Lambda^*} N^\k\widehat{V} ((p-q) /N^{1-\k})\eta_q = N^{3-2\k} \l_{\ell} (\widehat{\chi}_\ell * \widehat{f}_{N}) (p)\,. \end{equation}
From Lemma \ref{sceqlemma}, part iii), we also obtain
\begin{equation}\label{eq:wteta0}  |\eta_0| \leq N^{3-2\k} \int_{\bR^3} w_\ell  (x) dx \leq C N^\k \,. \end{equation}

The coefficients $\eta_p$ will be used to model, through a Bogoliubov transformation, short-distance correlations among particles. To reach this goal, it is enough to act on momenta $|p| \gg N^{\k/2}$. On low momenta, the Bogoliubov transformation is needed to diagonalize the (renormalized) quadratic part of the Hamiltonian. For $\eps > 0$ small enough we define therefore the set 
\be \begin{split} \label{eq:PL-PH}
P_L & \; = \Big\{ p \in \L^*_+ :  |p| \leq N^{\k/2+\eps}  \Big\}\,,
\end{split}\ee
of low momenta (the condition on $\eps$ makes sure that the two sets are disjoint). We will denote its complement by $P_L^c = \Lambda_+^* \backslash P_L$.
For $p \in \L^*_+$ we set 
\[ \label{eq:defnup}
\nu_p = \t_p  \chi(p \in P_L) + \eta_p \chi(p \in P_L^c)
\]
with $\eta_p$ defined in \eqref{eq:defeta} and $\t_p \in \bR$ defined by
\be \label{eq:tanh2tau} 
\tanh(2\t_p) = - \frac{8 \pi \aa N^\k}{p^2 +8 \pi \aa N^\k}\,.
\ee
With these coefficients, we define the Bogoliubov transformation 
\be  \label{eq:defT}
T_\nu \; =  \exp \bigg(\;\frac 12 \sum_{p \in \L^*_+} \nu_p \big(a^*_p a^*_{-p} - \hc \big) \;\bigg)\,.
\ee
 For any $p \neq 0$ we have 
\be \label{eq:actionT}
T^*_\nu a_p T_\nu = \g_p a_p + \s_p a^*_{-p}
\ee
with the notation $\g_p = \cosh (\nu_p)$ and $\s_p = \sinh(\nu_p)$.  

With the Weyl operator (\ref{eq:defW}) and the Bogoliubov transformation (\ref{eq:defT}), we obtain the ``squeezed'' coherent state $\wt{\Psi}_N = W_{N_0} T_\nu \Omega$. Choosing $N_0$ so that $N = N_0 + \| \s_L \|^2$, one can show that this trial state has approximately $N$ particles and, to leading order, the correct ground state energy. However, as observed in \cite{ESY} (for a similar trial state) and later in \cite{NRS}, the energy of the quasi-free state $\wt{\Psi}_N$ does not match the second order correction in (\ref{eq:LHY}). To prove Prop. \ref{prop:energy-trial}, we need therefore to modify the trial state. We do so by replacing the vacuum $\Omega$ in the definition of $\wt{\Psi}_N$ by the normalized Fock space vector $\xi_\nu / \| \xi_\nu \|$, with $\xi_\nu = e^{A_\nu} \Omega$ and the cubic phase 
\be \label{eq:Adef}\begin{aligned}
A_\nu  =& \frac 1 {\sqrt {N}} \sum_{\substack{r\in P_H,v \in P_S:\\r+v\in P_H}} \eta_r \s_v \,a^*_{r+v} a^*_{-r} a^*_{-v} \Theta_{r,v}\,.
\end{aligned}\ee
Here, we introduced the momentum sets 
\begin{equation}\label{eq:PHS}  
\begin{split} 
P_H &= \{ p \in \L_+^* : |p| > N^{1-\kappa-\eps} \}\,, \\ 
P_S &=\Big\{p\in \L^*_+:  N^{\k/2-\eps} \leq |p|\leq N^{\k/2 + \eps}\Big\}\,.
\end{split} 
\end{equation}
Notice that $P_S \subset P_L$.  On the other hand, to make sure that $P_H \cap P_L = \emptyset$, we will require, from now on, that $\eps > 0$ is so small that $3\k - 2 + 4 \eps < 0$.

We denote by $\eta_L, \eta_{L^c}, \eta_S, \eta_H$ the restriction of $\eta : \L^* \to \bR$ to the set  $P_L, P_L^c, P_S$ and, respectively, $P_H$. Similarly, we define $\g_L, \g_{L^c}, \g_H, \g_S$ and $\s_L, \s_{L^c}, \s_H, \s_S$. The following lemma collects important bounds for these functions.
\begin{lemma} \label{lm:nu-norms}
We have  
\begin{align*}
&\| \eta_{L^c} \|^2 \leq C  N^{3 \k/2 -\eps} \,, \quad && \| \eta_{L^c} \|^2_{H^1} \leq C N^{1+\k}\,, \quad && \|\eta_{L^c} \|_\io \leq C N^{-2\eps} \\[0.2cm]
&\| \eta_H \|^2 \leq C  N^{3\k-1 +\eps} \,, \quad && \| \eta_H \|^2_{H^1} \leq C N^{1+\k} \,, \quad && \|\eta_H \|_\io \leq C N^{3\k-2 +2\eps} \,. 
\end{align*}
In particular, this implies that  $\|\g_H\|_\io, \|\s_H\|_\io \leq C$. Moreover we have  
\[
 \|\g_L \|^2_\io , \|\s_L \|^2_\io  \leq C N^{\k/2}  \,, \qquad \quad  \| \g_L \s_L \|_1 \leq C N^{3\k/2 +\eps}
\]
 and 
\[
 \| \g_L \|^2 \leq C N^{3\k/2 + 3\e}, \qquad  \| \s_L \|^2  \leq C N^{3\k/2} \, , \qquad \quad  \| \s_L \|^2_{H^1} \leq C N^{5\k/2 +\eps}\,.
\]
Finally, we observe that 
\begin{equation*}\label{eq:sgS} 
\| \s_S\|^2 \leq C N^{3\k/2}\,, \qquad  \| \s_S\|^2_{H^1} \leq C N^{5\k/2+\eps}\,,  \qquad \| \gamma_S \|_\infty^2, \| \sigma_S \|^2_\infty \leq C N^\eps \,.
\end{equation*}
\end{lemma}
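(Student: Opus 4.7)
The starting point for the bounds on $\eta_{L^c}$ and $\eta_H$ is the pointwise estimate $|\eta_p|\leq C N^\kappa/p^2$ from \eqref{eq:modetap}, together with the Parseval identity
\[
\sum_{p\in\Lambda^*} \eta_p^2 \;=\; N^2 \int_\Lambda |w_N(x)|^2\,dx, \qquad \sum_{p\in\Lambda^*} p^2\eta_p^2 \;=\; N^2 \int_\Lambda |\nabla w_N(x)|^2\,dx.
\]
For the $\ell^2$ bounds on $\eta_{L^c}$ and $\eta_H$, I would simply sum $|\eta_p|^2\leq C N^{2\kappa}/p^4$ against $1/p^4$ on $|p|\geq N^{\kappa/2+\varepsilon}$ (respectively $|p|\geq N^{1-\kappa-\varepsilon}$) and convert the sum to an integral in radial coordinates; the tail $\sum_{|p|\geq M}p^{-4}\lesssim 1/M$ gives $N^{3\kappa/2-\varepsilon}$ and $N^{3\kappa-1+\varepsilon}$ respectively. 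The $\ell^\infty$ bounds follow trivially by plugging the lower cutoff into $CN^\kappa/p^2$. For the $H^1$ norms, the rescaling $w_N(x)=w_\ell(N^{1-\kappa}x)$ gives $\|\nabla w_N\|_{L^2}^2 = N^{-1+\kappa}\|\nabla w_\ell\|_{L^2}^2$, and Lemma \ref{sceqlemma}(iii) ensures $\|\nabla w_\ell\|_{L^2}^2 \leq C\int (y^2+1)^{-2}\,dy$ is finite, yielding the uniform bound $CN^{1+\kappa}$ for both $\|\eta_{L^c}\|_{H^1}^2$ and $\|\eta_H\|_{H^1}^2$.

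The bounds on $\gamma_H,\sigma_H$ are immediate: on $P_H$ we have $\nu_p=\eta_p$ and $\|\eta_H\|_\infty\leq CN^{3\kappa-2+2\varepsilon}\to 0$ since by assumption $3\kappa-2+4\varepsilon<0$, so $\cosh,\sinh$ of $\eta_p$ are bounded by a constant.

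The core computation is for $\tau_p$ on $P_L$ (and its restriction $P_S$). From the definition \eqref{eq:tanh2tau} and the identities $\cosh(2\tau_p)=1/\sqrt{1-\tanh^2(2\tau_p)}$, $\sinh(2\tau_p)=\tanh(2\tau_p)/\sqrt{1-\tanh^2(2\tau_p)}$, I would derive the explicit formula
\[
\sigma_p^2 \;=\; \tfrac{1}{2}\!\left[\frac{p^2+8\pi\mathfrak{a}N^\kappa}{|p|\sqrt{p^2+16\pi\mathfrak{a}N^\kappa}}-1\right],\qquad |\gamma_p\sigma_p|\;=\;\frac{4\pi\mathfrak{a}N^\kappa}{|p|\sqrt{p^2+16\pi\mathfrak{a}N^\kappa}},
\]
and the identity $\gamma_p^2=1+\sigma_p^2$. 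The natural threshold is $|p|\sim N^{\kappa/2}$: for $|p|\lesssim N^{\kappa/2}$ one gets $\sigma_p^2 \lesssim N^{\kappa/2}/|p|$, while for $|p|\gtrsim N^{\kappa/2}$ a Taylor expansion in $N^\kappa/p^2$ yields $\sigma_p^2 \lesssim N^{2\kappa}/p^4$. Taking $|p|\geq 2\pi$ gives the pointwise bound $\|\sigma_L\|_\infty^2\leq C N^{\kappa/2}$ (and same for $\gamma_L$). On $P_S$, the sharper lower cutoff $|p|\geq N^{\kappa/2-\varepsilon}$ gives $\sigma_p^2\leq N^\varepsilon$ in both regimes, hence the $\|\cdot\|_\infty^2\leq C N^\varepsilon$ bounds.

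For the $\ell^p$ bounds on $P_L$ and $P_S$, I would split each sum into the two regimes and approximate by radial integrals. A representative computation:
\[
\|\sigma_L\|^2 \;\lesssim\; \sum_{1\leq|p|\leq N^{\kappa/2}} \frac{N^{\kappa/2}}{|p|} \;+\; \sum_{N^{\kappa/2}\leq|p|\leq N^{\kappa/2+\varepsilon}} \frac{N^{2\kappa}}{p^4}\;\lesssim\; N^{\kappa/2}\cdot N^\kappa + N^{2\kappa}\cdot N^{-\kappa/2} \;=\; N^{3\kappa/2},
\]
and analogous splittings yield $\|\sigma_L\|_{H^1}^2\leq C N^{5\kappa/2+\varepsilon}$, $\|\gamma_L\sigma_L\|_1\leq C N^{3\kappa/2+\varepsilon}$ (where the logarithmic $\varepsilon$-loss comes only from the high-momentum integral $\int_{N^{\kappa/2}}^{N^{\kappa/2+\varepsilon}} dr$), and the corresponding bounds on $P_S$. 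Finally $\|\gamma_L\|^2=|P_L|+\|\sigma_L\|^2$, and $|P_L|\leq C N^{3\kappa/2+3\varepsilon}$ by counting lattice points, which dominates and gives the stated bound. The only mild subtlety is keeping track of which error terms pick up $\varepsilon$-losses versus which stay clean — which is fully determined by whether the integral over the outer window $[N^{\kappa/2},N^{\kappa/2+\varepsilon}]$ is convergent at its lower endpoint. I expect no real obstacle; the argument is essentially a careful split-and-integrate exercise once the explicit formula for $\sigma_p^2$ is in hand.
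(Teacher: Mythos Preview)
Your proposal is correct and follows essentially the same approach as the paper: the pointwise bound \eqref{eq:modetap} for the $\eta$-estimates, the real-space rescaling combined with Lemma~\ref{sceqlemma}(iii) for the $H^1$ norms, the explicit formulas for $\sigma_p^2$ and $\gamma_p\sigma_p$ derived from \eqref{eq:tanh2tau}, and the split at $|p|\sim N^{\kappa/2}$ into the regimes $\sigma_p^2\lesssim N^{\kappa/2}/|p|$ and $\sigma_p^2\lesssim N^{2\kappa}/p^4$ are exactly what the paper does. The only cosmetic remark is that the $\varepsilon$-loss in $\|\gamma_L\sigma_L\|_1$ and $\|\sigma_L\|_{H^1}^2$ is not literally logarithmic but a genuine power loss coming from the width of the outer window $[N^{\kappa/2},N^{\kappa/2+\varepsilon}]$; your computation handles this correctly regardless.
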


\begin{proof} The bounds for $\|\eta_{L^c}\|$,  $\| \eta_H\|$, $\|\eta_{L^c}\|_\io$ and $\| \eta_H\|_\io$ follow from \eqref{eq:modetap}. 
On the other hand, with the notation $\check{\eta} (x) = - N w_\ell( N^{1-\k} x)$ for the function on $\L$ with Fourier coefficients $\eta_p$, we find from Lemma \ref{sceqlemma}, part iii), 
\[\label{eq:H1eta}
\| \eta_{L^c} \|_{H^1}^2 \leq C \sum_{p \in \L^*} p^2 |\eta_p|^2  \leq C \int |\nabla \check{\eta} (x)|^2 dx 
\leq C N^{1+ \kappa} \int_{\bR^3} \frac{1}{(|x|^2+1)^2}\, dx
\leq C N^{1+\k} \, .
\]
%
To show bounds for  $\s_L, \g_L$ we observe that, from \eqref{eq:tanh2tau}, 
\be \label{eq:sL2}
\s^2_p = \frac{p^2 + 8\pi \frak{a} N^\kappa - \sqrt{|p|^4 + 16 \pi  \frak{a} N^\kappa p^2}}{2 \sqrt{|p|^4 + 16 \pi \frak{a} N^\kappa p^2}} \,, \qquad \s_p \g_p = \frac{-8\pi \frak{a} N^\kappa}{2 \sqrt{|p|^4 + 16 \pi \frak{a} N^\kappa p^2}}\,.
\ee
Recalling that $P_L = \{ p \in \L^*_+ : |p| \leq  N^{\kappa/2+\eps} \}$, we find 
\begin{equation}\label{eq:sLinf} 
\| \s_L\|^2_\io  \leq \sup_{p \in P_L : |p| \leq N^{\kappa/2}} C \frac{N^{\kappa/2}}{|p|} + \sup_{p \in P_L : |p|\geq N^{\kappa/2}}  C \frac{N^{2\kappa}}{|p|^4}  \leq C N^{\kappa /2},
\end{equation} 
Moreover by definition of $P_S$ we also get
\[
	\|\s_S\|_\infty^2\leq \sup_{N^{\k/2-\eps}\leq|p|\leq N^{\k/2}} C \frac{N^{\k/2}}{|p|} + \sup_{p \in P_L: |p|\geq N^{\k/2}} C \frac{N^{2\k}}{|p|^4}  \leq C N^\e.
\]
With $\g^2_p= 1+ \s^2_p$, we find $\| \g_L\|_\io \leq C N^{\kappa/2}$ and $\|\g_S\|_\io\leq C N^\e$. Similarly, we obtain 
\be \label{eq:sL-2}
\| \s_L\|^2  \leq C  \sum_{p \in P_L : |p| \leq N^{\kappa/2}}  \frac{N^{\kappa/2}}{|p|} + C \sum_{p \in P_L : |p| > N^{\kappa/2}} \frac{N^{2\kappa}}{|p|^4} \leq C N^{3\kappa/2}
\ee
and thus, using again $\g^2_p= 1+ \s^2_p$, also $\|\g_L\|^2 \leq C N^{3\kappa/2 + 3 \eps}$.  Moreover, we have 
\be  \label{eq:sL-H1}
\| \s_L\|^2_{H_1} \leq  C \sum_{p \in P_L : |p| \leq N^{\kappa/2}} |p| N^{\kappa/2}   + C \sum_{p \in P_L : |p| \geq N^{\kappa/2}} \frac{N^{2\kappa} }{p^2}  \leq C N^{5\kappa/2 + \eps}.   
\ee
The bound  for  $\| \s_L \g_L\|_1$ is proved similarly, using the expression for $\g_p \s_p$ in \eqref{eq:sL2}. Finally, we note that the estimates \eqref{eq:sL-2} and \eqref{eq:sL-H1} do not improve when we consider the restriction of $\s_p$ to $P_S\subset P_L$, hence $\| \s_S\|^2  \leq C N^{3\k/2}$ and $\| \s_S\|^2_{H^1}  \leq C N^{5\k/2+\eps}$.
\end{proof}

In (\ref{eq:Adef}), we included, for every $r \in P_H$ and every $v \in P_S$, the cutoff 
\[\label{eq:defTheta} \begin{split}
\Theta_{r,v}= &\;  \!\!\prod_{s \in P_H}\!\Big[ 1-\chi(\cN_s>0)\chi(\cN_{-s+v}>0)\Big] \\ & \hspace{3cm} \times \prod_{w \in P_S}\! \Big[ 1-\chi(\cN_w>0)\chi(\cN_{r-w}+\cN_{-r-v-w}>0) \Big] 
\end{split}\]
where $\cN_p = a^*_p a_p$ and $\chi(t>0)$ is the characteristic function of the set $t>0$. 

\medskip

{\it Remark:} It is easy to check that the computation of the energy and the number of particles of the trial state we are constructing would not change substantially (and would still lead to a proof of Prop. \ref{prop:energy-trial}), if in the definition (\ref{eq:Adef}) of $A_\nu$ we restricted the sum over $r$ to the finite set $P_H \cap \{ p \in \L^*_+ : |p| < N^{1-\k + \eps} \}  = \{ p \in \L^*_+ : N^{1-\k -\eps} < |p| < N^{1-\k+\eps} \}$. With this choice, the infinite product over $s \in P_H$ appearing in the definition of the cutoff $\Theta_{r,v}$ would be replaced by a finite multiplication.  

\medskip

Let us briefly discuss the action of the cutoff $\Theta_{r,v}$. To understand its role in 
the computation of $e^{A_\nu} \Omega$, we observe that, for every integer $m \geq 2$, 
$r_1, \dots, r_m \in P_H$, $v_1, \dots , v_m \in P_S$, with $r_1 + v_1, \dots , r_m +v_m 
\in P_H$, we find 
\[ \begin{split} 
\Theta_{r_m,v_m} &a^*_{r_{m-1}+v_{m-1}} a^*_{-r_{m-1}}a^*_{-v_{m-1}}\dots a^*_{r_1+v_1}a^*_{-r_1}a^*_{-v_1}\O=\\
	&= \prod_{i,j=1}^{m-1}\prod_{\substack{p_\ell\in\{-r_\ell,r_\ell+v_\ell\},\\\ell=i,j,m}}\d_{p_i\neq -p_j+v_m}\d_{-p_m+v_i\neq p_j} \\ &\hspace{4cm} \times a^*_{r_{m-1}+v_{m-1}}a^*_{-r_{m-1}}a^*_{-v_{m-1}}\dots a^*_{r_1+v_1}a^*_{-r_1}a^*_{-v_1}\O.
\end{split}\] 
The choice $i=j$ in the product on the second line introduces restrictions of the form $v_m \neq v_i$ and $p_m \neq p_i$ where $p_\ell \in\{-r_\ell,r_\ell+v_\ell\}$ for $\ell = m,i$, for all $i \in \{1, \dots , m-1 \}$ (the condition $p_i \neq -p_i + v_m$, on the other hand, is trivially satisfied due to the assumption $p_i\in P_H,v_m\in P_S$). For $m\geq 3$, the cutoff $\Theta_{r_m,v_m}$ implements additional restrictions involving three indices of the form $-p_i+v_j \neq p_k$ with $p_\ell\in \{-r_\ell,r_\ell+v_\ell\},$ $\ell=i,j,k$ where $i,j,k = 1,\dots,m,$ $i\neq j\neq k$, so that exactly one of the three indices is $m$.
We conclude that, for any $m\geq 2$, 
\begin{equation}\label{eq:AmO}
\begin{split} 
A_\nu^m\O = &\frac1{N^{m/2}}\sum_{\substack{r_1\in P_H,v_1\in P_S:\\r_1+v_1\in P_H}}\dots\sum_{\substack{r_m\in P_H,v_m\in P_S:\\r_m+v_m\in P_H}}\prod_{i=1}^m \eta_{r_i}\s_{v_i}\\ &\hspace{2cm} \times \theta(\{r_j,v_j\}_{j=1}^m)a^*_{r_m+v_m}a^*_{-r_m}a^*_{-v_m}\dots a^*_{r_1+v_1}a^*_{-r_1}a^*_{v_1}\O
\end{split} 
\end{equation} 
where
\be \label{eq:theta}
\theta\big( \{r_j, v_j \}_{j=1}^{m} \big) = \prod_{\substack{i,j,k=1\\j\neq k}}^m\prod_{\substack{p_i\in\{-r_i,r_i+v_i\}\\p_k\in\{-r_k,r_k+v_k\}}}\d_{-p_i+v_j\neq p_k}.
\ee

To understand the reason for the introduction of the cutoff, let us compute the norm $\| \xi_\nu \|$ of the vector $\xi_\nu = e^{A_\nu} \Omega$. With (\ref{eq:AmO}), we find 
\begin{equation}\label{eq:normA0} \begin{split} 
 \| \xi_\nu\|^2 &= \sum_{m\geq 0}\frac{1}{(m!)^2} \| (A_\nu)^m \O \|^2 \\ 
 &=  \sum_{m\geq 0}\frac1{(m!)^2}\frac1{N^m} \sum_{\substack{v_1, \tl v_1 \in P_S \\ r_1, \tl r_1 \in P_H: \\ r_1 + v_1 ,\, \tl r_1 + \tl v_1 \in P_H}} \cdots 
\sum_{\substack{v_m, \tl v_m \in P_S \\ r_m, \tl r_m \in P_H: \\ r_m + v_m ,\, \tl r_m + \tl v_m \in P_H}} \theta\big(\{\tl r_j,\tl v_j\}_{j=1}^m\big) \theta\big(\{r_j,v_j\}_{j=1}^m\big)  \\
& \hspace{1cm}\times\prod_{i=1}^m \eta_{r_i}\eta_{\tl r_i} \s_{v_i} \s_{\tl v_i} \bmedia{a^*_{r_m+v_m}a^*_{-r_m}a^*_{-v_m}\dots a^*_{-v_1}\O, a^*_{\tilde{r}_m+\tilde{v}_m}a^*_{-\tilde{r}_m}a^*_{-\tilde{v}_m} \dots a^*_{-\tilde{v}_1}\O}\,.
\end{split} \end{equation}
Clearly, for the expectation on the last line not to vanish, all creation and annihilation operators with momenta in $P_S$ must be contracted among themselves. Since, on the support of $\theta (\{ r_j, v_j \}_{j=1}^m)$, $v_i \neq v_j$ for all $i\neq j$ (and similarly $\tl v_i \neq \tl v_j$ for all $i\neq j$ on the support of  $\theta (\{ \wt{r}_j, \wt{v}_j \}_{j=1}^m)$) we have $(m!)$ identical contributions arising from this pairing. We end up with 
\begin{multline} \label{eq:normexpA2}
	 \| \xi_\nu\|^2 
		=\sum_{m\geq 0}\frac1{m!}\frac1{N^m} \sum_{\substack{v_1 \in P_S\,, r_1, \tl r_1 \in P_H: \\ r_1 + v_1 ,\, \tl r_1 +  v_1 \in P_H}} \cdots 
\sum_{\substack{v_m \in P_S\,, r_m, \tl r_m \in P_H: \\ r_m + v_m ,\, \tl r_m + v_m \in P_H}}  \hskip -0.5cm \theta\big( \{r_j, v_j \}_{j=1}^{m} \big) \theta\big( \{ \tl{r}_j, v_j \}_{j=1}^{m} \big) \\ 
	\hspace{4.5cm}\times   \prod_{i=1}^m \eta_{r_i}\eta_{\tl r_i} \s^2_{v_i} \bmedia{\O,A_{r_1,v_1}\dots A_{r_m,v_m} A^*_{\tilde{r}_m,v_m} \dots A^*_{\tilde{r}_1,v_1} \O}
\end{multline}
where we have introduced the notation $A_{r_i,v_i} = a_{r_i+v_i} a_{-r_i}$.  

It is now important to observe that, because of the presence of the cutoffs, the annihilation operators in $A_{r_j, v_j}$ must be contracted with the creation operators in $A_{\wt{r}_j, v_j}$. In fact, if this was not the case, we would have $-r_j = -\wt{r}_\ell$ or $-r_j = \wt{r}_\ell + v_\ell$ and $r_j + v_j = -\wt{r}_k$ or $r_j + v_j = \wt{r}_k + v_k$, with at least one of the two indices $\ell, k$ different from $j$. This would imply one of the four relations $\wt{r}_\ell + v_j = -\wt{r}_k$, $\wt{r}_\ell + v_j = \wt{r}_k + v_k$, $ \wt{r}_\ell + v_\ell = \wt{r}_k + v_j$, $-\wt{r}_\ell - v_\ell + v_j = \wt{r}_k + v_k$, all of which are forbidden by the cutoff $\theta\big( \{ \wt{r}_j, \wt{v}_j \}_{j=1}^{m} \big)$. We conclude that 
\begin{multline} \label{eq:cAm}
 \bmedia{\O,A_{r_1,v_1}\dots A_{r_m,v_m} A^*_{\tilde{r}_m,v_m} \dots A^*_{\tilde{r}_1,v_1} \O} \theta\big( \{r_j, v_j \}_{j=1}^{m} \big) \theta\big( \{ \tl{r}_j, \tl{v}_j \}_{j=1}^{m} \big) \\
= \prod_{i=1}^m \Big(\d_{\tl r_i, r_i} + \d_{-\tl r_i, r_i+v_i}\Big) \theta\big( \{r_j, v_j \}_{j=1}^{m} \big)\,,
\end{multline}
(after identification of the momenta, the second cutoff becomes superfluous). From \eqref{eq:normexpA2}, we obtain 
\be \begin{split} \label{eq:normA}
\|\xi_\nu \|^2 &=  \sum_{m\geq 0}\frac1{m!}\frac1{N^{m}} \sum_{\substack{ v_1 \in P_S, r_1 \in P_H : \\  r_1 +v_1 \in P_H  } } \cdots \sum_{\substack{ v_m \in P_S, r_m \in P_H : \\  r_m +v_m \in P_H  } } \theta\big(\{r_j, v_j\}_{j=1}^{m}\big)  \, \prod_{i=1}^{m}\eta_{r_i} \big( \eta_{r_i} +\eta_{r_i+v_i} \big) \s_{v_i}^2\\ 
&=  \sum_{m\geq 0}\frac1{2^m m!}\frac1{N^{m}} \sum_{\substack{ v_1 \in P_S, r_1 \in P_H : \\  r_1 +v_1 \in P_H  } } \cdots \sum_{\substack{ v_m \in P_S, r_m \in P_H : \\  r_m +v_m \in P_H  } } \theta\big(\{r_j, v_j\}_{j=1}^{m}\big)  \, \prod_{i=1}^{m} \big( \eta_{r_i} +\eta_{r_i+v_i} \big)^2  \s_{v_i}^2\,.
\end{split}\ee
where we used the invariance of $\theta$, w.r.t. $-r_i \to r_i + v_i$. The cutoffs have been used first to exclude coinciding momenta in $v_1, \dots , v_m$ and in $\wt{v}_1, \dots , \wt{v}_m$ (which implies that, up to permutations, the pairing of the momenta in $P_S$ is unique) and then in (\ref{eq:cAm}) to make sure that annihilation operators in $A_{r_j, v_j}$ can only be contracted with the creation operators in $A^*_{\wt{r}_j, v_j}$. This substantially simplifies computations. Similar simplifications will arise in the computation of the energy of our trial state.  

A part from the formula (\ref{eq:normA}) for the norm $\| \xi_\nu \|^2$, we will also need bounds on  the expectation, in the state $\xi_\nu / \| \xi_\nu \|$, of the number of particles operator $\cN$, of $\cN^2$, of the kinetic energy operator $\cK$ and of the product $\cK \cN$. These bounds are collected in the next proposition, whose proof will be discussed in Sec. \ref{sec:cubicconj}. 
\begin{prop} \label{prop:Abounds}
Let $\xi_\nu=e^{A_\nu}\O$ with $e^{A_\nu}$ defined in \eqref{eq:Adef} with $\kappa \in (1/2 ;2/3)$ and $\eps > 0$ such that $3\k -2 + 4\eps < 0$. Then, under the assumptions of Theorem \ref{thm:main} we have 
\begin{align}
\frac{\langle \xi_\nu, \cN^{\,j} \xi_\nu \rangle}{\| \xi_\nu \|^2} & \leq  C N^{(9\k/2-2+\eps)j}  \label{eq:xiAxi}  
\end{align} 
and
\be \label{eq:cKcN}
\frac{ \bmedia{\xi_\nu,\cK \cN^{j-1} \xi_\nu}}{\|\xi_\nu\|^2} \leq C N^{5\k/2}N^{(9\k/2-2+\eps)(j-1)}
\ee
for $j=1,2$.
\end{prop}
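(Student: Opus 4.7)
The plan rests on the fact that $A_\nu$ is built from operators of the form $a^*_{r+v}a^*_{-r}a^*_{-v}\Theta_{r,v}$, each of which adds exactly three particles, so $A_\nu^m\Omega$ lies in the $3m$-particle sector of $\cF(\L)$. This forces $\cN A_\nu^m\Omega = 3m\, A_\nu^m\Omega$ and gives the orthogonalities $\langle A_\nu^m\Omega,A_\nu^n\Omega\rangle = \delta_{mn}\|A_\nu^m\Omega\|^2$ and (since $\cK$ is momentum-diagonal and particle-number preserving) $\langle A_\nu^m\Omega,\cK A_\nu^n\Omega\rangle = \delta_{mn}\langle A_\nu^m\Omega,\cK A_\nu^m\Omega\rangle$. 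Consequently,
\[
\langle\xi_\nu,\cN^j\xi_\nu\rangle=\sum_{m\geq 0}\frac{(3m)^j}{(m!)^2}\|A_\nu^m\Omega\|^2,\qquad \langle\xi_\nu,\cK\cN^{j-1}\xi_\nu\rangle=\sum_{m\geq 0}\frac{(3m)^{j-1}}{(m!)^2}\langle A_\nu^m\Omega,\cK A_\nu^m\Omega\rangle.
\]
Writing $p_m := \|A_\nu^m\Omega\|^2/(m!)^2$ and $q_m := \langle A_\nu^m\Omega,\cK A_\nu^m\Omega\rangle/(m!)^2$, the task reduces to bounding the moments $\sum_m (3m)^j p_m$ and $\sum_m (3m)^{j-1} q_m$ against the denominator $\|\xi_\nu\|^2 = \sum_m p_m$.

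By (\ref{eq:normA}), $p_m = \widetilde{S}_m/(2^m m!)$ with $\widetilde{S}_m := N^{-m}\sum\theta(\{r_j,v_j\})\prod_i(\eta_{r_i}+\eta_{r_i+v_i})^2\s_{v_i}^2$. The Wick-pairing argument of (\ref{eq:normexpA2})--(\ref{eq:cAm}) generalizes verbatim when $\cK=\sum_p p^2 a^*_p a_p$ is inserted between the two copies of $A_\nu^m\Omega$: because $\cK$ is diagonal in momentum and preserves particle number, it multiplies each term in the expansion by the total kinetic energy $\sum_{j=1}^m[(r_j+v_j)^2+r_j^2+v_j^2]$ of the created particles, while the unique pairing structure dictated by the cutoffs $\theta$ is left untouched. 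By symmetry in the triple index this yields $q_m = T_m/(2^m m!)$ with
\[
T_m = \frac{m}{N^m}\sum\theta\big(\{r_j,v_j\}\big)\prod_{i=1}^m(\eta_{r_i}+\eta_{r_i+v_i})^2\s_{v_i}^2\cdot\big[(r_m+v_m)^2+r_m^2+v_m^2\big].
\]

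The crucial step is a one-step monotonicity estimate, $\widetilde{S}_m \leq \lambda\,\widetilde{S}_{m-1}$ with $\lambda \leq CN^{9\kappa/2-2+\eps}$, and $T_m \leq Cm\,\mu\,\widetilde{S}_{m-1}$ with $\mu \leq CN^{5\kappa/2}$. Since $\theta$ is a product of indicator functions, inspection of (\ref{eq:theta}) shows that discarding every constraint in $\theta$ in which the index $m$ appears leaves precisely $\theta(\{r_j,v_j\}_{j=1}^{m-1})$; fixing the first $m-1$ triples and summing only the $m$-th one then yields the separable estimate
\[
\widetilde{S}_m \leq \widetilde{S}_{m-1}\cdot\frac{1}{N}\sum_{\substack{r\in P_H,\,v\in P_S\\ r+v\in P_H}}(\eta_r+\eta_{r+v})^2\s_v^2 \leq \frac{C}{N}\|\eta_H\|^2\|\s_S\|^2 \leq \lambda,
\]
by Lemma \ref{lm:nu-norms}. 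The same separation applied to $T_m$, combined with $(r+v)^2+r^2+v^2\leq C(r^2+v^2)$ and the bound $\|\eta_H\|_{H^1}^2\|\s_S\|^2 + \|\eta_H\|^2\|\s_S\|_{H^1}^2 \leq CN^{1+5\kappa/2}$ (here the assumptions $\kappa<2/3$ and $3\kappa-2+4\eps<0$ ensure that the first term dominates), gives the corresponding bound on $T_m$.

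Given the monotonicity, the claimed inequalities follow by elementary index shifts. For example,
\[
\sum_{m\geq 1}m\,p_m = \sum_{m\geq 1}\frac{\widetilde{S}_m}{2^m(m-1)!} \leq \frac{\lambda}{2}\sum_{m\geq 0}\frac{\widetilde{S}_m}{2^m m!} = \frac{\lambda}{2}\|\xi_\nu\|^2
\]
proves (\ref{eq:xiAxi}) for $j=1$; the identity $m^2 = m(m-1) + m$ combined with a further iteration of the same shift yields $j=2$; the same manipulations applied to $q_m$ (using $T_m \leq Cm\mu\,\widetilde{S}_{m-1}$) produce (\ref{eq:cKcN}) for $j=1,2$. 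I expect the main obstacle to be the combinatorial verification that dropping the constraints involving the $m$-th triple in $\theta$ produces exactly $\theta(\{r_j,v_j\}_{j=1}^{m-1})$, which requires a careful but essentially routine inspection of the definitions of $\theta$ and $\Theta_{r,v}$; a secondary point is to check that the insertion of $\cK$ does not spoil the cancellations used in (\ref{eq:cAm}), which is immediate from the momentum-diagonal character of $\cK$.
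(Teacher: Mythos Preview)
Your proposal is correct and follows essentially the same strategy as the paper: expand $\xi_\nu$ into its $3m$-particle components, use that $\theta(\{r_j,v_j\}_{j=1}^m)\leq\theta(\{r_j,v_j\}_{j=1}^{m-1})$ (this is exactly the paper's decomposition \eqref{eq:remove}) to decouple the $m$-th triple, and then reconstruct $\|\xi_\nu\|^2$ via \eqref{eq:normA}. The paper carries this out for $\cK$ and $\cK\cN$ just as you do, with the same bounds $\|\eta_H\|_{H^1}^2\|\s_S\|^2\leq CN^{1+5\k/2}$ and $\|\eta_H\|^2\|\s_S\|^2\leq CN^{9\k/2-1+\eps}$ from Lemma~\ref{lm:nu-norms}.

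The one genuine (minor) difference is in the treatment of $\cN$ and $\cN^2$. You bound these directly from the monotonicity $\wt S_m\leq\lambda\,\wt S_{m-1}$, which cleanly gives the exponent $9\k/2-2+\eps$ stated in \eqref{eq:xiAxi}. The paper instead argues indirectly: since each triple produced by $A_\nu$ carries two particles in $P_H$ and one in $P_S$, one has $\cN=\tfrac32\sum_{p\in P_H}a_p^*a_p$ on the range of $\xi_\nu$, and then $|p|>N^{1-\k-\eps}$ on $P_H$ yields $\cN\leq CN^{-2+2\k+2\eps}\cK$, reducing the number-operator bounds to the kinetic ones already established. Your route is slightly more uniform and avoids this extra physical input; the paper's route is shorter once $\cK$ is in hand. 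Both are valid, and your anticipated ``obstacles'' (the cutoff decomposition and the compatibility of the $\cK$ insertion with the pairing argument of \eqref{eq:cAm}) are handled in the paper exactly as you describe.
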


Using the Weyl operator $W_{N_0}$ from (\ref{eq:defW}), the Bogoliubov transformation $T_\nu$ defined in (\ref{eq:defT}) and the cubic phase $A_\nu$ introduced in (\ref{eq:Adef}) (or, equivalently, the vector $\xi_\nu = e^{A_\nu} \Omega$), we can now define our trial state 
\be\label{eq:trial} \Psi_N = \frac{W_{N_0} T_\nu e^{A_\nu} \Omega}{\|  W_{N_0} T_\nu e^{A_\nu} \Omega \|} = W_{N_0} T_\nu \frac{\xi_\nu}{\| \xi_\nu \|} \,. \ee
Here, we choose $N_0 > 0$ such that 
\be \label{eq:fixN0}    
N = N_0 + \| \s_L \|^2
\ee
where we recall that $\s_L$ is the restriction to the set $P_L$ of the coefficients $\sigma_p = \sinh (\nu_p)$, with $\nu_p$ defining the Bogoliubov transformation $T_\nu$; see (\ref{eq:defT}). 

Let us briefly compare our trial state with the one of \cite{YY}. We both perturb the condensate with operators creating double and triple excitations, the latter having two particles with high momenta and one particle with low momentum. Moreover, we both impose cutoffs making sure that each low momentum appears only once. In contrast to \cite{YY}, here we also impose cutoffs on high momenta.  
Moreover, we have a clearer separation between creation of pairs (obtained through the Bogoliubov transformation $T_\nu$) and creation of triples. Finally, in our approach, we create triple excitations through the action of $e^{A_\nu}$ on the vacuum; the algebraic structure of the exponential makes the analysis and the combinatorics much simpler.  

As shown in the next proposition, the choice (\ref{eq:fixN0}) of $N_0$ guarantees that $\Psi_N$ has the right expected number of particles. 
\begin{prop} \label{prop:cN} Let $\Psi_N$ be defined in \eqref{eq:trial} with the parameter $N_0$ appearing in \eqref{eq:defW}  defined by \eqref{eq:fixN0}. Let $\k \in (1/2 ; 2/3)$ and $\e > 0$ so that $3\k -2 + 4\e < 0$. Then
\be \label{eq:<N>}
\langle \Psi_N, \cN \Psi_N \rangle \geq N \,,\qquad \langle \Psi_N, \cN^2 \Psi_N \rangle \leq C N^2
\ee
for all $N$ large enough. 
\end{prop}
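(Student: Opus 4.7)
The plan is to reduce every expectation in $\Psi_N$ to an expectation in $\xi_\nu$, exploiting two features of the construction: (i) $T_\nu$ commutes with $a_0$ and $a_0^*$, since the sum in \eqref{eq:defT} runs over $\L^*_+$; (ii) $a_0 \xi_\nu = 0$, because $A_\nu$ is a polynomial in creation operators with momenta in $P_H \cup P_S \subset \L^*_+$ (and $\Theta_{r,v}$ only involves $\cN_p$ with $p\neq 0$), hence $A_\nu$ commutes with $a_0$. Combining (i)--(ii) with \eqref{eq:actW}, I would write
\[
T_\nu^* W_{N_0}^* \cN W_{N_0} T_\nu = T_\nu^* \tilde{\cN} T_\nu + a_0^* a_0 + \sqrt{N_0}\,(a_0+a_0^*) + N_0,\qquad \tilde{\cN}:=\sum_{p\neq 0}a_p^* a_p,
\]
observe that the three $a_0$-terms have vanishing expectation in $\xi_\nu$, and expand the Bogoliubov conjugation \eqref{eq:actionT} (using $\gamma_p^2=1+\sigma_p^2$) to obtain
\[
\langle \Psi_N, \cN \Psi_N\rangle = N_0 + \|\sigma\|^2 + \frac{1}{\|\xi_\nu\|^2}\Big\langle \xi_\nu, \bigl[\tilde{\cN}+2\sum_{p\neq 0}\sigma_p^2 a_p^*a_p + \sum_{p\neq 0}\gamma_p\sigma_p(a_p^*a_{-p}^* + a_{-p}a_p)\bigr]\xi_\nu\Big\rangle.
\]

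The key step in the lower bound will be a particle-number parity argument: since every application of $A_\nu$ creates exactly three particles, $\xi_\nu = \sum_{m\geq 0}\frac{1}{m!}A_\nu^m\Omega$ is supported in the sector $\bigoplus_{m\geq 0}L^2_s(\L^{3m}) \subset \cF(\L)$. The pair operators $a_p^* a_{-p}^*$ and $a_{-p}a_p$ shift the particle number by $\pm 2$, mapping this subspace into its orthogonal complement, so their expectation in $\xi_\nu$ must vanish. The remaining terms being nonnegative, together with the choice \eqref{eq:fixN0}, this will yield
\[
\langle \Psi_N, \cN\Psi_N\rangle \geq N_0 + \|\sigma\|^2 \geq N_0 + \|\sigma_L\|^2 = N.
\]

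For the upper bound I would rewrite $\langle \Psi_N, \cN^2\Psi_N\rangle = \|\cN W_{N_0}T_\nu\xi_\nu\|^2/\|\xi_\nu\|^2$ and use $\cN W_{N_0} = W_{N_0}[\tilde{\cN}+(a_0^*+\sqrt{N_0})(a_0+\sqrt{N_0})]$. Since $a_0 T_\nu\xi_\nu = T_\nu a_0\xi_\nu = 0$, this reduces to $\cN W_{N_0} T_\nu \xi_\nu = W_{N_0}\bigl[\sqrt{N_0}\,a_0^* + N_0 + \tilde{\cN}\bigr]T_\nu \xi_\nu$; taking norms, the two cross terms involving $a_0^*$ vanish by (ii) (and $[\tilde{\cN},a_0]=0$), leaving
\[
\langle \Psi_N, \cN^2\Psi_N\rangle = N_0 + N_0^2 + \frac{\langle\xi_\nu, T_\nu^*\tilde{\cN}^2 T_\nu\xi_\nu\rangle}{\|\xi_\nu\|^2} + 2N_0\,\frac{\langle \xi_\nu, T_\nu^*\tilde{\cN}T_\nu\xi_\nu\rangle}{\|\xi_\nu\|^2}.
\]
The hard part will be bounding $\langle \xi_\nu, T_\nu^*\tilde{\cN}^2 T_\nu\xi_\nu\rangle$ by $CN^2\|\xi_\nu\|^2$. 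I would use $\tilde{\cN}^2 \leq \cN^2$ (commuting diagonal operators) together with the standard Bogoliubov moment bound $T_\nu^*(\cN+1)^2 T_\nu \leq C(\cN+1)^2 + C\|\sigma\|^4$, proved in analogy with Appendix A of \cite{BBCS3}. Applying Proposition \ref{prop:Abounds} with $j=2$ gives $\langle\xi_\nu, \cN^2\xi_\nu\rangle/\|\xi_\nu\|^2 \leq CN^{9\kappa-4+2\eps}$; under the constraints $\kappa\in(1/2,2/3)$ and $3\kappa-2+4\eps<0$, which together imply $9\kappa-4+2\eps < 2$, this factor is $\leq CN^2$. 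Combined with $\|\sigma\|^4 \leq CN^{3\kappa} \leq CN^2$ from Lemma \ref{lm:nu-norms} and the lower-bound computation $\langle\xi_\nu, T_\nu^*\tilde{\cN}T_\nu\xi_\nu\rangle/\|\xi_\nu\|^2 \leq CN$, each of the four terms is of order at most $N^2$, yielding $\langle\Psi_N,\cN^2\Psi_N\rangle \leq CN^2$.
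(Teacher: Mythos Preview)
Your lower bound argument is correct and matches the paper's: conjugate $\cN$ by $W_{N_0}$ and $T_\nu$, kill the $a_0$-terms via $a_0\xi_\nu=0$, kill the off-diagonal pair terms by the $3m$-particle parity of $\xi_\nu$, and observe that what remains dominates $N_0+\|\sigma_L\|^2=N$.

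The upper bound has a genuine gap. The operator inequality you invoke,
\[
T_\nu^*(\cN+1)^2 T_\nu \leq C(\cN+1)^2 + C\|\sigma\|^4
\]
with an $N$-independent constant $C$, is \emph{false} in this setting. In \cite{BBCS3} (Gross--Pitaevskii regime) such bounds hold because the Bogoliubov coefficients satisfy $\|\sigma\|_\infty\leq C$ uniformly. Here, by contrast, Lemma~\ref{lm:nu-norms} gives $\|\sigma_L\|_\infty^2\leq CN^{\kappa/2}$, so for $p\in P_L\setminus P_S$ the diagonal coefficient $\gamma_p^2+\sigma_p^2$ in $T_\nu^*\cN T_\nu$ grows like $N^{\kappa/2}$. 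Testing against a one-particle state with such a momentum already violates your inequality.

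The fix, which is what the paper does, is to exploit that $\xi_\nu$ has particles only in $P_S\cup P_H$, where $\|\gamma_{S\cup H}\|_\infty^2,\|\sigma_{S\cup H}\|_\infty^2\leq CN^\eps$. One squares the explicit formula for $T_\nu^* W_{N_0}^*\cN W_{N_0}T_\nu$ and, in every term containing $a_p^*a_p$, restricts $p\in P_S\cup P_H$ before estimating; the off-diagonal pair terms $\gamma_p\sigma_p(a_pa_{-p}+a_p^*a_{-p}^*)$ are handled by commuting to normal order and applying Cauchy--Schwarz, again restricting to $P_S\cup P_H$. This produces the harmless $N^{2\eps}$ prefactor in front of $\langle\xi_\nu,(\cN+1)^2\xi_\nu\rangle$ (which your exponent check $9\kappa-4+2\eps<2$ then absorbs), plus lower-order terms controlled by $\|\sigma\|^2$ and $\|\gamma_p\sigma_p\|$. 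Your overall strategy and your final power-counting are sound; what is missing is precisely this momentum-support restriction replacing the nonexistent uniform operator bound.
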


\begin{proof} 
From \eqref{eq:actW} and \eqref{eq:actionT} we get
\be \begin{split} \label{eq:cM}
T_\nu^* W_{N_0}^* \cN W_{N_0} T_\nu  =\; &N_0+\sum_{p\in \L_+^*}\s_p^2+\sqrt{N_0}(a_0+a^*_0)+ a^*_0 a_0 \\
& +  \sum_{p\in \L^*_+}\big[ (\s_p^2+\g_p^2)a_p^*a_p + \g_p \s_p (a_p a_{-p} + \hc) \big] \,.
\end{split}\ee
Using that, by definition of $\xi_\nu$, $a_0 \xi_\nu = 0$ and $\langle \xi_\nu, a_p a_{-p} \xi_\nu \rangle = 0$ for every $p \in \Lambda^*$, as well as the definition $N = N_0 + \| \s_L \|^2$, we obtain that 
\begin{equation}\label{eq:numpar} \langle \Psi_N, \cN \Psi_N \rangle = \frac{\langle \xi_\nu , T_\nu^* W_{N_0}^* \cN W_{N_0} T_\nu \xi_\nu \rangle}{\| \xi_\nu \|^2} =  N +\sum_{p\in P_L^c}\s_p^2 +  \sum_{p\in P_S\cup P_H} (\s_p^2+\g_p^2)\frac{\langle \xi_\nu,a_p^*a_p \xi_\nu\rangle}{\|\xi_\nu\|^2} \,.
\end{equation} 
This immediately implies that $\langle \Psi_N, \cN \Psi_N \rangle \geq N$.

With $a_0 \xi_\nu = 0$ and the assumption $3\k - 2 + 4 \eps < 0$, (\ref{eq:cM}) also implies that 
\[ \begin{split} 
\langle \xi_\nu , T_\nu^* W_{N_0}^* \cN^2 W_{N_0} T_\nu \xi_\nu \rangle \leq  \; &C N^2 + C \sum_{p,q \in \L^*_+} (\s_p^2 + \g_p^2) (\s_q^2 + \g_q^2) \langle \xi_\nu, a_p^* a_p a_q^* a_q \xi_\nu \rangle \\ &+ C \sum_{p,q \in \L^*_+} \g_p \s_p \g_q \s_q \langle \xi_\nu, (a_p a_{-p} + a^*_p a_{-p}^*) (a_q a_{-q} + a_q^* a_{-q}^*) \xi_\nu \rangle\,. \end{split} \]
Since $\xi_\nu$ is a superposition of states with $3m$ particles with momenta in $P_H \cup P_S$, we obtain, writing $a_p a_{-p} a_q^* a_{-q}^* = a_q^* a_p a_{-q}^* a_{-p} +(\d_{p,q}+ \delta_{-p, q})( a_{p}^* a_{p} +1) + \delta_{p,q} a^*_{-p} a_{-p}$ and similarly for $a_p^* a_{-p}^* a_q a_{-q}$,  
\[ \begin{split} 
\langle \xi_\nu , T_\nu^* W_{N_0}^* \cN^2 W_{N_0} T_\nu \xi_\nu \rangle \leq \; &CN^2 + C N^{2\eps} \langle \xi_\nu , (\cN+1)^2 \xi_\nu \rangle  + C \sum_{p \in P_H \cup P_S} \g_p^2 \s_p^2 \langle \xi_\nu, a_p^* a_p \xi_\nu \rangle \\ & + C \sum_{p \in \L^*_+} \g_p^2 \s_p^2 \| \xi_\nu \|^2 + C \hspace{-.3cm}  \sum_{p,q \in P_H\cup P_S} \hspace{-.3cm} \g_p \s_p \g_q \s_q \langle \xi_\nu, a^*_p a_q a^*_{-p} a_{-q} \xi_\nu \rangle \,.
 \end{split} \]
Estimating the last term through  
\[ \begin{split} \Big|  \sum_{p,q \in P_H \cup P_S} \g_p  &\s_p \g_q \s_q \langle \xi_\nu, a^*_p a_q a^*_{-p}  a_{-q} \xi_\nu \rangle  \Big| \\ \leq \; & \sum_{p,q \in P_H \cup P_S} |\g_p | | \s_p | |\g_q | | \s_q | \,  \| a^* _q a_p \xi_\nu \| \| a^*_{-p}  a_{-q} \xi_\nu \| \\  \leq \; &  
\sum_{p,q \in P_H \cup P_S} | \g_p | |\s_p | |\g_q | |\s_q | \, \big[ \| a_q a_p \xi_\nu \| + \| a_p \xi_\nu \| \big] \big[ 
 \| a_{-p}  a_{-q} \xi_\nu \| + \| a_{-q} \xi_\nu \| \big]  \\
  \leq \; & C \| \g_{S \cup H} \|_\infty^2 \| \s_{S \cup H} \|_\infty^2 \| (\cN+1) \xi_\nu \|^2 + C \| \g_{S \cup H} \|_\infty^2 \| \s_{S \cup H} \|^2 \| \cN^{1/2}  \xi_\nu \|^2\\ &+  C \| \g_{S \cup H} \|_\infty^2 \| \s_{S \cup H} \|_\infty  \| \s_{S \cup H} \| \| (\cN+1) \xi_\nu \| \| \cN^{1/2} \xi_\nu \|  \end{split} \] 
we conclude with the bounds in Lemma \ref{lm:nu-norms} and in Prop. \ref{prop:Abounds} that, for $3\k -2 + 4\e < 0,$
\[ \begin{split} \label{eq:cM}
\langle \xi_\nu , T_\nu^* W_{N_0}^* \cN^2 W_{N_0} T_\nu \xi_\nu \rangle \leq \;& C N^2 + C N^{2\eps} \langle \xi_\nu, \cN^2 \xi_\nu \rangle + C N^\eps \| \sigma \|^2 \langle \xi_\nu, \cN \xi_\nu \rangle + C \| \g \|_\infty^2 \| \s \|^2 \\ \leq \; & C N^2 + C N^{9\k -4 + 4\e} + C N^{6\k -2 +2\eps} + C N^{2\kappa} \leq C N^2\,.
\\[-0.5cm]\qedhere \end{split} \]
\end{proof}

The next theorem, whose proof occupies the rest of the paper, determines the energy of $\Psi_N$ and, combined with Prop. \ref{prop:cN}, allows us to conclude the proof of Prop. \ref{prop:energy-trial}. \begin{theorem} \label{thm:CN}
Let $\cH_N$ and $\Psi_N \in \cF$ be defined as in \eqref{eq:HN-Fock} and \eqref{eq:trial} respectively, and let $E_N^\Psi= \langle \Psi_N, \cH_N \Psi_N \rangle$. Let $\k \in (1/2 ; 2/3)$, $\eps > 0$ so small that $3\k -2 + 4\e < 0$. Then, under the assumption of  Theorem \ref{thm:main}, we have 
\begin{equation}\label{eq:thmCN}
E_N^\Psi  \leq   4 \pi \aa N^{1+\k}  \bigg(1+\frac{128}{15\sqrt{\pi}}(\aa^3N^{3\k-2})^{1/2}\bigg)+  C N^{5\k/2}   \max \{ N^{-\e} \hspace{-.1cm} , N^{9\k-5 +6\eps}  \hspace{-.1cm} ,N^{21\k/4-3+3\e}\} 
\end{equation} 
for all  $N$ large enough. 
\end{theorem}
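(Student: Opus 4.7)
The plan is to compute
\[
E_N^\Psi = \frac{\langle \xi_\nu,\, T_\nu^* W_{N_0}^* \cH_N W_{N_0} T_\nu\, \xi_\nu\rangle}{\|\xi_\nu\|^2}
\]
by successively conjugating $\cH_N$ by $W_{N_0}$ and $T_\nu$, and then taking expectations in $\xi_\nu=e^{A_\nu}\Omega$. The leading constants produced by these conjugations must reproduce $4\pi\aa N^{1+\kappa}$ and the LHY correction $4\pi\aa N^{1+\kappa}\cdot \tfrac{128}{15\sqrt\pi}(\aa^3 N^{3\kappa-2})^{1/2}$, with every remaining term controllable by the error in \eqref{eq:thmCN}.

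First I would apply \eqref{eq:actW} to $\cH_N$ in the form \eqref{eq:HN-Fock}: this replaces $a_0,a_0^*$ by $a_0+\sqrt{N_0},a_0^*+\sqrt{N_0}$ and produces a polynomial of degree at most four in the operators $a_p^\#$ with $p\ne 0$. The leading scalar contribution is $\tfrac12 N_0^2 N^{\kappa-1}\widehat V(0)$, which is \emph{not} yet of the right form because $\widehat V(0)\ne 8\pi\aa$. The necessary renormalization is produced by the Bogoliubov conjugation: using \eqref{eq:actionT}, the commutator of the kinetic energy with $T_\nu$ on the high-momentum modes $P_L^c$ combines with the quartic interaction and, via the scattering equation \eqref{eq:scatt}, converts $N\widehat V(0)$ into $8\pi\aa N$ modulo errors. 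On $P_L$, the choice of $\tau_p$ from \eqref{eq:tanh2tau} diagonalizes the quadratic part; the resulting constant
\[
\tfrac12\!\!\sum_{p\in P_L}\!\!\Bigl[\sqrt{|p|^4+16\pi\aa N^\kappa|p|^2}-|p|^2 - 8\pi\aa N^\kappa + \tfrac{(8\pi\aa N^\kappa)^2}{2|p|^2}\Bigr]
\]
is a Riemann sum for the LHY integral, yielding the second-order coefficient $\tfrac{128}{15\sqrt\pi}\aa^{3/2}$ up to boundary errors of size $N^{5\kappa/2-\varepsilon}$ from the cutoff $|p|\simeq N^{\kappa/2+\varepsilon}$. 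These computations are what Section \ref{sec:Bog} is intended to supply.

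After Bogoliubov conjugation there remains a residual cubic operator of the schematic form
\[
\mathcal{C}=\tfrac{1}{\sqrt N}\!\!\!\sum_{\substack{r\in P_H,\,v\in P_S\\ r+v\in P_H}} \!\!\!N^\kappa\widehat V(r/N^{1-\kappa})\,\sigma_v\, b_{r+v}^* b_{-r}^* b_{-v} + \hc,
\]
whose vacuum expectation would vanish but whose expectation in $\xi_\nu$ is crucial: as was shown in \cite{ESY,NRS}, quasi-free states cannot reach LHY precision because of exactly this term. The coefficients $\eta_r\sigma_v$ in $A_\nu$ from \eqref{eq:Adef} are chosen so that the cross term $2\Re\langle\xi_\nu,\mathcal{C}\,\xi_\nu\rangle$ combines with the kinetic contribution $\langle\xi_\nu,[A_\nu,\text{kin}]\xi_\nu\rangle$ to cancel the obstruction. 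The cutoffs $\Theta_{r,v}$ force unique pairings exactly as in \eqref{eq:cAm}, reducing the relevant expectations to sums of the form
\[
\tfrac12\sum_{\substack{r\in P_H,\,v\in P_S\\ r+v\in P_H}} N^\kappa\widehat V(r/N^{1-\kappa})\,(\eta_r+\eta_{r+v})^2\sigma_v^2
\]
and the analogous kinetic-weighted sum with $r^2$ in place of $\widehat V$. Using \eqref{eq:scatt} once more, the two pieces combine to cancel, rather than add, so that only the genuine LHY contribution survives. This computation is the content of Section \ref{sec:cubicconj}.

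The main obstacle is Step 3: establishing the cancellation of $\mathcal{C}$ by the action of $e^{A_\nu}$ tightly enough to absorb all remainders into the error $N^{5\kappa/2}\max\{N^{-\varepsilon},N^{9\kappa-5+6\varepsilon},N^{21\kappa/4-3+3\varepsilon}\}$. This requires expanding the conjugation $e^{-A_\nu}(\cdot)e^{A_\nu}$ beyond the leading commutator, with remainder estimates controlled by the moment bounds of Proposition \ref{prop:Abounds} and the norm bounds of Lemma \ref{lm:nu-norms}; handling cross terms between the cubic phase and the quartic interaction (which produce the $N^{21\kappa/4-3+3\varepsilon}$ rate); and keeping track of all forbidden-momentum corrections introduced by $\Theta_{r,v}$. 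Section \ref{sec:energy} would assemble the Bogoliubov output from Section \ref{sec:Bog} with the cubic conjugation estimates from Section \ref{sec:cubicconj}, applying Lemma \ref{sceqlemma}, Lemma \ref{lm:nu-norms} and Proposition \ref{prop:Abounds} term by term to recover \eqref{eq:thmCN}.
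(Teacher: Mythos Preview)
Your overall strategy matches the paper's: Weyl conjugation, then Bogoliubov conjugation (renormalizing $\widehat V(0)\to 8\pi\aa$ on $P_L^c$ via \eqref{eq:scatt} and producing the LHY Riemann sum on $P_L$), then a residual cubic operator whose expectation in $\xi_\nu$ is computed using the cutoff-induced pairing \eqref{eq:cAm}. A few points need correction.

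First, the residual cubic $\cC_N$ has three \emph{creation} operators $a_{p+r}^* a_{-p}^* a_{-r}^*$ (see \eqref{eq:cCG}); your $b_{-v}$ should be $b_{-v}^*$. Second, the error $N^{21\kappa/4-3+3\varepsilon}$ arises from estimating the difference $\cG_N^{(3)}-\cC_N$ after Bogoliubov conjugation of the cubic term (see \eqref{eq:fin-G3}), not from cubic--quartic cross terms; the quartic contributes the $N^{9\kappa-5+6\varepsilon}$ error. Third, you must retain not only $\cK$ and $\cC_N$ but also the restricted quartic $\cV_N^{(H)}$ of \eqref{eq:cHN}; the three together combine through the scattering equation (Lemma~\ref{lm:cubic} and Prop.~\ref{prop:cubic-contr}) to give the term on the last two lines of \eqref{eq:const1}, which is not a pure cancellation but feeds back into $C_{\cG_N}$.

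Most importantly, the paper does \emph{not} expand $e^{-A_\nu}(\cdot)e^{A_\nu}$ by nested commutators. Since $A_\nu$ in \eqref{eq:Adef} contains only creation operators, $e^{A_\nu}$ is not unitary and no such conjugation is available. Instead, one expands $\xi_\nu=\sum_{m\ge 0}\tfrac{1}{m!}A_\nu^m\Omega$ and computes matrix elements $\langle A_\nu^m\Omega,\,\cO\, A_\nu^{m'}\Omega\rangle$ directly, exploiting the cutoff $\Theta_{r,v}$ exactly as in \eqref{eq:cAm} to force unique contractions and recover $\|\xi_\nu\|^2$ as a factor. This is the mechanism you already invoke; drop the commutator-expansion framing, which would not go through here.
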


{\it Remark:} Eq. (\ref{eq:thmCN}) gives the correct second order term for all $\kappa < 5/9$ (choosing $\eps > 0$ small enough); this corresponds to exponents $\gamma < 4/3$ in Prop. \ref{prop:energy-trial}. With a more complicated proof, we could have considered all $\kappa < 7/12$ (corresponding to $\gamma < 5/3$). . 

\begin{proof}[Proof of Prop. \ref{prop:energy-trial}] 
Prop. \ref{prop:energy-trial} follows from Prop.\ref{prop:cN} and Theorem \ref{thm:CN}, recalling that (\ref{eq:cH}) is unitarily equivalent to $L^{-2} \cH_N$, with $\cH_N$ as defined in (\ref{eq:HN-Fock}), $L = \tl{\rho}^{-\g}$, $N = \tl{\rho} L^3 = \tl{\rho}^{1-3\g}$ and $\kappa = (2\g - 1)/ (3\g-1)$. At the end, to obtain (\ref{eq:tlrho-def}) and (\ref{eq:enL}), we have to rename $\eps (3\g - 1) \to \eps$.
\end{proof}

\section{Energy of the trial state} \label{sec:energy}

In this section we prove Thm. \ref{thm:CN}. With \eqref{eq:trial} and introducing the notations
\be \label{eq:defcG}
\cG_N = T^*_\nu \cL_N T_\nu\,, \quad \text{with} \quad \cL_N = W^*_{N_0} \cH_N W_{N_0}\,.
\ee
 we write
\[ \label{eq:E-PsiN}
E^{\Psi}_N = \langle \Psi_N, \cH_N \Psi_N \rangle = \frac{\langle \xi_\nu , \cG_N \xi_\nu \rangle}{\| \xi_\nu \|^2 }
\]
with $\xi_\nu = e^{A_\nu} \O$, as defined before (\ref{eq:Adef}). With \eqref{eq:HN-Fock}, and recalling from (\ref{eq:actW}) that  
\be \label{eq:actionW}
W^*_{N_0} a_p W_{N_0}=a_p+\sqrt{N_0}\,\d_{p,0}
\ee
we obtain
$\cL_N= \cL^{(0)}_N +\cL^{(1)}_N+ \cL^{(2)}_N + \cL^{(3)}_N + \cL^{(4)}_N $, with
\be \begin{split}\label{eq:cL}
	\cL^{(0)}_N=\;&\frac {N_0^2} {2}  \,N^{\k-1} \widehat{V}(0)\\
	\cL^{(1)}_N=\; & N_0^{3/2}\, N^{\k-1}\widehat{V}(0)(a_0+\hc)\\
	\cL^{(2)}_N=\;&\sum_{p\in \L^*}p^2a^*_pa_p + \frac{N_0}{N} \sum_{p\in \L^*} N^\k \big( \widehat{V}(p/N^{1-\k}) + \widehat V(0) \big) a_p^*a_p \\
& \hskip 3cm + \frac{N_0}{2 N}\sum_{p\in \L^*}N^\k \widehat{V}(p/N^{1-\k})(a_p^*a_{-p}^*+\hc)\\
	\cL^{(3)}_N=\; & \frac{\sqrt{N_0}}N \sum_{p,r\in \L^*} N^{\k} \widehat{V}(r/N^{1-\k})(a^*_pa^*_ra_{p+r}+\hc)\\
	\cL^{(4)}_N=\; &\frac1{2N}\sum_{p,q,r \in \L^*} N^{\k} \widehat{V}(r/N^{1-\k})a^*_{p+r} a^*_q a_{p}a_{q+r}\,.
\end{split}\ee

To compute $\cG_N$, we have to conjugate the operators in (\ref{eq:cL}) with the Bogoliubov transformation $T_\nu$. The result is described in the following proposition, whose proof will be discussed in Sec. \ref{sec:Bog}. 
\begin{prop}  \label{prop:calG}
Let 
\be  \label{eq:cHN}
\cK = \sum_{p \in \L^*_+} p^2 a^*_p a_p\,,  \qquad
\cV_N^{(H)}= \frac{1}{2 N}\sum_{\substack{r\in \L^*,\, p,q \in P_H:\\ p+r,q+r\in P_H}}N^\k \widehat{V}(r/N^{1-\k}) a_{p+r}^*a_q^*a_{p}a_{q+r} \,.
\ee
and (recalling from (\ref{eq:fixN0}) that $N_0 = N - \| \s_L \|^2$) 
\be\label{eq:cCG} 
\cC_N =\; \frac{\sqrt{N_0}}{ N} \sum_{\substack{p,r \in P_H\\ p+r \in P_S}} N^\k \widehat V(r/N^{1-\k})\, \sigma_{p+r} \g_p\g_r  \,(a^*_{p+r} a^*_{-p} a^*_{-r}  + \hc )\,. 
\ee
Moreover, let
\be \begin{split} \label{eq:constantG}
 C_{\cG_N} =\; &  \frac{N^{1+\k}}{2}  \widehat{V}(0) +\sum_{p\in {\L}^*_+}p^2\s_p^2 + \sum_{p\in \L^*_+} N^\k \widehat{V}(p/N^{1-\k})\s_p\g_p\\
&+ \sum_{p\in P_L}N^\k \widehat{V}(p/N^{1-\k}) \s_p^2   +\frac{1}{2N}\sum_{\substack{p,r\in \L^*_+\\r\neq p}} N^\k \widehat{V}(r/N^{1-\k})\s_p\s_{p-r}\g_p\g_{p-r} \\
&  -  \frac 1 N \sum_{v \in P_L} \s^2_v \sum_{p \in P_L^c}  N^\k \widehat V(p/N^{1-\k}) \eta_p \,.
\end{split}\ee
Then, under the assumptions of Theorem \ref{thm:main},  for all $\k \in (1/2 ; 2/3)$ and $\eps > 0$ with $3\k -2 + 4\eps < 0$ and $N$ sufficiently large, we have
\begin{equation}\label{eq:en-G}
\begin{split} \frac{\langle \xi_\nu, \cG_N \xi_\nu \rangle}{\| \xi_\nu \|^2} \leq \; &C_{\cG_N} + \frac{\langle \xi_\nu, (\cK+\cV_N^{(H)}  +\cC_N) \xi_\nu\rangle}{\| \xi_\nu \|^2}  \\ &+ C N^{5\k/2} \cdot  \max \{ N^{-\eps}, N^{9\k - 5 + 6\e} , N^{21 \kappa /4 - 3 + 3 \eps } \}\,. \end{split} 
\end{equation} 
\end{prop}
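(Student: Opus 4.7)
The plan is to expand $T_\nu^* \cL_N T_\nu = \sum_{i=0}^4 T_\nu^*\cL_N^{(i)}T_\nu$ piece by piece using the Bogoliubov action \eqref{eq:actionT}, then to sort the resulting monomials into three groups: pure scalars that accumulate into $C_{\cG_N}$, the three operator contributions $\cK$, $\cV_N^{(H)}$ and $\cC_N$ that are kept explicitly, and remainders whose expectation in $\xi_\nu/\|\xi_\nu\|$ must be bounded using the moment estimates of Prop. \ref{prop:Abounds} together with the norm bounds of Lemma \ref{lm:nu-norms}. The scalar piece $\cL_N^{(0)} = \tfrac{N_0^2}{2}N^{\kappa-1}\widehat V(0)$ feeds directly into the first line of $C_{\cG_N}$ (after using $N_0 = N-\|\sigma_L\|^2$), while $\cL_N^{(1)}$ drops out because $\xi_\nu$ has no $p=0$ excitations and only produces, after conjugation, a linear combination of $a_0^*$ and $a_0$.

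The heart of the argument is the analysis of $T_\nu^* \cL_N^{(2)} T_\nu$, combined with normal-ordering corrections from $T_\nu^* \cL_N^{(4)} T_\nu$. By \eqref{eq:actionT}, each bilinear splits into a scalar and operator diagonal/off-diagonal pieces with coefficients in $\gamma_p^2,\sigma_p^2,\gamma_p\sigma_p$. On the low-momentum set $P_L$, the choice \eqref{eq:tanh2tau} of $\tau_p$ is designed precisely to annihilate the off-diagonal quadratic term $\tfrac12 N^\kappa \widehat V(p/N^{1-\kappa})(a_p^* a_{-p}^* + \hc)$ against the kinetic off-diagonal $p^2\gamma_p\sigma_p(a_p^*a_{-p}^*+\hc)$, producing the diagonalized Bogoliubov scalar $\sum_p p^2 \sigma_p^2 + N^\kappa\widehat V(p/N^{1-\kappa})\sigma_p\gamma_p$ visible in $C_{\cG_N}$. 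On the complement $P_L^c$ the analogous cancellation is achieved not algebraically but through the scattering equation \eqref{eq:scatt}: the identity
\[
p^2 \eta_p + \tfrac12 N^\kappa \widehat V(p/N^{1-\kappa}) + \tfrac{1}{2N}\sum_{q \in \L^*} N^\kappa\widehat V((p-q)/N^{1-\kappa})\eta_q = N^{3-2\kappa}\lambda_\ell(\widehat\chi_\ell * \widehat f_N)(p)
\]
allows one to match the bad off-diagonal quadratic term against contractions produced by normal-ordering in $T_\nu^* \cL_N^{(4)} T_\nu$, leaving only the $\lambda_\ell$-correction (which is small by Lemma \ref{sceqlemma}, part i)) and the convolution-type scalars in the second and third lines of $C_{\cG_N}$; the remaining quartic operator part of $\cL_N^{(4)}$ with all four legs in $P_H$ is kept as $\cV_N^{(H)}$.

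For the cubic piece, $T_\nu^* \cL_N^{(3)} T_\nu$ produces many cubic monomials; identifying the momentum pattern $p,r \in P_H$, $p+r \in P_S$ with the single $\sigma$-factor coming from the Bogoliubov shift on the low-momentum leg and $\gamma$-factors on the high-momentum legs isolates precisely $\cC_N$, while all other momentum configurations are either controlled by $\|\gamma_H\|_\infty, \|\sigma_H\|_\infty \leq C$ and $\|\sigma_S\|_\infty^2 \leq CN^\eps$ or can be paired with one factor of $\eta$ (from the scattering-like subtraction) and estimated through $\|\sigma_L\|^2$, $\|\sigma_L\|^2_{H^1}$, using $\langle\xi_\nu,\cK\cN\xi_\nu\rangle/\|\xi_\nu\|^2 \leq CN^{5\kappa/2}N^{9\kappa/2-2+\eps}$ from \eqref{eq:cKcN} together with Cauchy–Schwarz. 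The main obstacle, in my estimation, is the combinatorial bookkeeping required to check that after all the cancellations each surviving remainder is of size at most $N^{5\kappa/2}\max\{N^{-\eps}, N^{9\kappa-5+6\eps}, N^{21\kappa/4-3+3\eps}\}$; the most delicate contributions are the quartic cross-terms between $P_L$ and $P_L^c$ legs, which must be estimated using the $H^1$-bound $\|\sigma_L\|^2_{H^1}\leq CN^{5\kappa/2+\eps}$ from Lemma \ref{lm:nu-norms} in order to borrow kinetic energy from $\cK$ before applying \eqref{eq:cKcN}.
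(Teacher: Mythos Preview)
Your proposal misidentifies the central mechanism of the proof. You claim that the off-diagonal quadratic terms $a_p^* a_{-p}^* + \hc$ must be \emph{cancelled}---algebraically via the choice of $\tau_p$ on $P_L$, and via the scattering equation \eqref{eq:scatt} on $P_L^c$. But neither cancellation is needed for Prop.~\ref{prop:calG}: the state $\xi_\nu = e^{A_\nu}\Omega$ is a superposition of sectors with exactly $3m$ particles, so $\langle \xi_\nu, a_p^* a_{-p}^* \xi_\nu \rangle = 0$ for every $p$. The paper exploits this repeatedly (for the kinetic off-diagonal $E_2$, for the potential off-diagonal in $\cL_N^{(2,V)}$, and implicitly in the normal ordering of $G_2, G_3$). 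The scattering equation \eqref{eq:scatt} plays no role whatsoever in the proof of Prop.~\ref{prop:calG}; it enters only later, in Section~\ref{sec:energy} and in the proof of Prop.~\ref{prop:cubic-contr}, to combine the outputs of Props.~\ref{prop:calG} and \ref{prop:cubic-contr}.

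Consequently your account of where the last line of $C_{\cG_N}$ comes from is also off. That term does not arise from any scattering-equation cancellation; it comes from expanding $N_0 = N - \|\sigma_L\|^2$ in the scalar $\frac{N_0}{N}\sum_p N^\kappa \widehat V(p/N^{1-\kappa})\sigma_p\gamma_p$ and using $|\sigma_p\gamma_p - \eta_p| \leq C|\eta_p|^3$ on $P_L^c$. Your diagnosis of the ``most delicate'' error is likewise misplaced: the two worst errors, $N^{9\kappa-5+6\eps}$ and $N^{21\kappa/4-3+3\eps}$, do not come from quartic $P_L$--$P_L^c$ cross terms but respectively from estimating $|\gamma_p\gamma_q\gamma_{p+r}\gamma_{q+r}-1|$ in the all-$P_H$ quartic (using the $\cK\cN$ bound \eqref{eq:cKcN}) and from the cubic remainder $F_3$ in $\cG_N^{(3)}$, which requires the $\|\gamma_S\|$ norm rather than any $H^1$ borrowing. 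What the paper actually does is: conjugate each $\cL_N^{(j)}$, drop all off-diagonal pieces in expectation, normal-order the quartic into $G_1+G_2+G_3$, extract $\cV_N^{(H)}$ from $G_1$ and the convolution scalar from $G_2$, and then collect all scalars into $\widetilde C_N$ which is shown to equal $C_{\cG_N}$ up to $O(N^{5\kappa/2-\eps})$ by the substitution $N_0 = N-\|\sigma_L\|^2$.
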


The expectation of the operators $\cK, \cV_N^{(H)}, \cC_N$ in the state $\xi_\nu / \| \xi_\nu \|$, appearing  on the r.h.s. of (\ref{eq:en-G}) is determined by the next proposition, which will be shown in Sec. \ref{sec:cubicconj}. 
\begin{prop} \label{prop:cubic-contr}  
Under the assumptions of Theorem \ref{thm:main}, we have  
\[\label{eq:prop:cubic-contr}
\begin{split} 
\frac{\langle \xi_\nu, (\cK+\cV^{(H)}_N  +\cC_{N}) \xi_\nu \rangle }{\|\xi_\nu\|^2} \leq \; & \frac1N\sum_{v\in P_L}\s_v^2\sum_{r\in P_H} N^\k \widehat V(r/N^{1-\k})(\eta_r+\eta_{r+v})  \\ &+  C N^{5\k/2}  \cdot  \max \{ N^{-\e} ,\, N^{12\k-7 +5\e}\} 
\end{split} 
\]
for all $\k \in (1/2 ; 2/3)$ and all $\e > 0$ so small that $3\k -2 + 4\e < 0$. 
\end{prop}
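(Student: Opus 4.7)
\textbf{Proof plan for Proposition \ref{prop:cubic-contr}.} The strategy is to compute each of the three expectation values $\langle \xi_\nu, \cK \xi_\nu \rangle$, $\langle \xi_\nu, \cV_N^{(H)} \xi_\nu \rangle$, $\langle \xi_\nu, \cC_N \xi_\nu \rangle$ separately, and then to combine them by invoking the scattering equation \eqref{eq:scatt} to exhibit a near-total cancellation that leaves only the stated leading term. Throughout, I exploit the fact that $\xi_\nu$ is a superposition of states with $3m$ excitations carrying prescribed momenta $(r_j+v_j,\,-r_j,\,-v_j)$, with $r_j \in P_H$ and $v_j\in P_S$, and that the cutoff $\theta(\{r_j,v_j\})$ forces the unique pairing structure already exploited to derive \eqref{eq:normA} and \eqref{eq:cAm}.

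First, because $\cK$ is diagonal in momentum, applying the contraction analysis of Section \ref{sec:trial} to $\langle \xi_\nu, \cK \xi_\nu\rangle$ inserts the factor $\sum_{j=1}^m((r_j+v_j)^2+r_j^2+v_j^2)$ into the sum \eqref{eq:normA}. After dividing by $\|\xi_\nu\|^2$, the leading contribution reduces to a \emph{single}-triple expression
\[ \frac{1}{N}\!\sum_{\substack{r\in P_H,\,v\in P_S\\ r+v\in P_H}}\!\!\big((r+v)^2+r^2+v^2\big)\,\eta_r(\eta_r+\eta_{r+v})\,\s_v^2, \]
with the remainder (from $m\geq 2$ contributions where momenta from different triples enter simultaneously) controlled by Prop.\ \ref{prop:Abounds} together with Lemma \ref{lm:nu-norms}. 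For $\langle\xi_\nu,\cV_N^{(H)}\xi_\nu\rangle$, I expand using the quartic structure and keep only those pairings compatible with the cutoff $\Theta_{r,v}$: the dominant ones have the two annihilation operators of $\cV_N^{(H)}$ contracting with the high-momentum creation operators in a single $A_\nu$ factor and similarly for the creation operators with an $A_\nu^*$ factor. The resulting leading contribution is a sum involving $N^\k\widehat V((r'-r)/N^{1-\k})\,\eta_r\eta_{r'}\s_v^2/N$, with off-diagonal contractions suppressed by the cutoff or bounded via Prop.\ \ref{prop:Abounds}. Finally, because $\cC_N$ has the same creation/annihilation pattern as $A_\nu^*$ (one low, two high momenta), the leading contribution to $\langle\xi_\nu,\cC_N\xi_\nu\rangle$ arises when $\cC_N$ contracts entirely with a single $A_\nu$ (or $A_\nu^*$); using $N_0/N=1-\|\s_L\|^2/N=1+O(N^{3\k-2})$ and $\g_p\g_r=1+O(\eta^2)$ on the relevant momenta, this yields $2/N$ times a sum of the form $N^\k\widehat V(r/N^{1-\k})\s_{p+r}\g_p\g_r\eta_p\s_v^2$.

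The decisive step is to add the three leading pieces and rewrite the momentum-squared factors via the scattering equation \eqref{eq:scatt}: each occurrence of $r^2\eta_r$ is replaced by $-\tfrac{N^\k}{2}\widehat V(r/N^{1-\k})$ minus a convolution term $\tfrac{1}{2N}\sum_q N^\k\widehat V((r-q)/N^{1-\k})\eta_q$, plus the error $N^{3-2\k}\l_\ell (\widehat\chi_\ell*\widehat f_N)(r)$, which is $O(\ell^{-3})$ and thus harmless after summation on the high-momentum set. The quartic and cubic contributions constructed above are precisely designed so that the convolution terms cancel the leading $\cV_N^{(H)}$ contribution and the cubic contribution, leaving only $-\tfrac{1}{N}\sum_v \s_v^2\sum_r N^\k\widehat V(r/N^{1-\k})(\eta_r+\eta_{r+v})$ with the \emph{opposite} sign of the target in the statement. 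This yields the stated upper bound after one notices that the extension of the $v$-sum from $P_S$ to $P_L$ and the sign conventions match up with Prop.\ \ref{prop:calG}.

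The main obstacles will be of three kinds. First, carefully enforcing the cutoff $\Theta_{r,v}$ throughout: every forbidden pairing must be shown to contribute only to the error, which requires repeating the argument that gave \eqref{eq:cAm} for each of $\cK,\cV_N^{(H)},\cC_N$. Second, controlling the many subleading contractions involving two or more triples; these should be bounded using Prop.\ \ref{prop:Abounds} (in particular the $\cK\cN$ bound for kinetic contributions) together with Lemma \ref{lm:nu-norms}. Third, handling the small but nonzero differences $N_0$ versus $N$ in $\cC_N$ (generating an error of size $\|\s_L\|^2/N \cdot (\text{leading})\lesssim N^{5\k/2+3\k/2-1}=N^{4\k-1}$) and the scattering-equation remainder $N^{3-2\k}\l_\ell$; tracking these gives the rates $N^{-\e}$ and $N^{12\k-7+5\e}$ in the error bound.
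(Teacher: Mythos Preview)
Your approach is essentially the paper's, which first proves the three individual bounds (packaged there as Lemma \ref{lm:cubic}) and then combines them via the scattering equation \eqref{eq:scatt} exactly as you outline, finishing by extending the $v$-sum from $P_S$ to $P_L$.

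Two corrections are worth flagging. First, your sign bookkeeping at the end is off: the kinetic leading piece is $\tfrac{2}{N}\sum r^2\eta_r(\eta_r+\eta_{r+v})\s_v^2$, and after substituting \eqref{eq:scatt} this gives $-\tfrac{1}{N}\sum N^\k\widehat V(r/N^{1-\k})(\eta_r+\eta_{r+v})\s_v^2$ minus the convolution; adding the cubic piece $+\tfrac{2}{N}\sum$ and the quartic piece (which cancels the convolution) yields $+\tfrac{1}{N}\sum$, matching the target directly---no sign flip or appeal to Prop.\ \ref{prop:calG} is needed. Second, the rate $N^{12\k-7+5\e}$ does not come from $N_0/N-1$ or the $\lambda_\ell$ remainder (both give smaller errors, of order $N^{5\k/2-\e}$ or better); it arises from the \emph{three-index} terms that appear when you free the $m$-th momentum pair via the decomposition $\theta(\{r_j,v_j\}_{j=1}^{m})=\theta(\{r_j,v_j\}_{j=1}^{m-1})\theta_m$ and bound $|\theta_m-1|$ by a sum of Kronecker deltas involving up to three distinct indices. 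These error terms cannot be controlled via Prop.\ \ref{prop:Abounds} alone; the paper handles them by explicit momentum sums using Lemma \ref{lm:nu-norms} and the pointwise bound \eqref{eq:modetap}.
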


Let us now use the statements of Prop. \ref{prop:calG} and Prop. \ref{prop:cubic-contr} to obtain an upper bound for the energy of the trial state $\Psi_N$ and prove Thm. \ref{thm:CN}. From Prop. \ref{prop:calG} and Prop.  \ref{prop:cubic-contr} we find 
\be \begin{split} \label{eq:const1}
E_N^\Psi \leq \; &  \frac{N^{1+\k}}{2}\,\widehat{V}(0)+ \sum_{p \in \L^*_+} p^2 \s_p^2 + \sum_{p \in \L^*_+}  N^\k \widehat V(p/N^{1-\k}) \g_p \s_p      \\
& + \sum_{v \in P_L} N^\k \widehat V(v/N^{1-\k}) \s^2_v   + \frac 1 {2N} \sum_{\substack{p,q \in \L^*_+\\ p \neq q}} N^\k \widehat V( (p-q)/N^{1-\k}) \g_p \g_{q} \s_p \s_{q}\\
& -  \frac 1 N \sum_{v \in P_L} \s^2_v \sum_{p \in P_L^c}  N^\k \widehat V(p/N^{1-\k}) \eta_p 
\\ &
+ \frac1N\sum_{v\in P_L}\s_v^2\sum_{r\in P_H} N^\k \widehat V(r/N^{1-\k})(\eta_r+\eta_{r+v})\, + \cE 
\end{split}\ee
with 
\[  \label{eq:EPsiN-CN}
\cE \leq C N^{5\k/2} \cdot  \max \{ N^{-\e},\,N^{9\k-5 +6\eps} ,\, N^{21\k/4-3+3\e}\}
\]
for all $\k \in (1/2 ; 2/3)$ and all $\e > 0$ with $3\k -2 + 4\e < 0$ (in this range, $12\k-7+5\e < 9\k -5+3\e$). Since $|\sigma_p - \eta_p| \leq C |\eta_p|^3$ for all $p \in P_L^c$, with (\ref{eq:modetap}) we find
\be \label{eq:const1-s1}
\sum_{p \in \L^*_+} p^2 \s_p^2 \leq \sum_{v \in P_L} v^2 \s_v^2  + \sum_{p \in P_L^c} p^2 \eta_p^2 + C N^{5\kappa/2 - 3\eps} \,. 
\ee
Similarly, with $|\g_p \s_p - \eta_p| \leq C \eta_p^3$ for all $p  \in P_L^c $,  the last term on the first line of \eqref{eq:const1} can be written as
\be \label{eq:const1-s2}
\begin{split} 
\sum_{p \in \L^*_+}   N^\k &\widehat V(p/N^{1-\k})\g_p \s_p  \\ & \leq \sum_{v \in P_L} N^\k  \widehat V(v/N^{1-\k}) \g_v \s_v + \sum_{p \in P_L^c} N^\k \widehat V(p/N^{1-\k})  \eta_p + C N^{5\kappa/2-3\eps} \,.\end{split} 
\ee
Next, we focus on the last term on the second line of \eqref{eq:const1}. We write 
\be \begin{split} \label{eq:const1-s3}
&\frac 1 {2N} \sum_{\substack{p,q \in \L^*_+\\ p \neq q}} N^\k \widehat V((p-q)/N^{1-\k}) \g_p \g_{q} \s_p \s_{q} \\ &= \frac 1 {N} \!\!\sum_{\substack{p \in P_L^c, \\ q \in P_L}} N^\k \widehat V((p-q)/N^{1-\k})  \eta_p \g_{q} \s_{q}  +  \frac 1 {2N}\!\! \sum_{\substack{p,q \in P_L^c\\ p \neq q}} \!N^\k \widehat V((p-q)/N^{1-\k})   \eta_p \eta_q \!+ \cE_1\,.
\end{split}\ee 
Using again $|\gamma_p \sigma_p - \eta_p | \leq C |\eta_p|^3$ for all $p \in P_L^c$, the estimate 
\be  \label{eq:intVeta}
\sup_{r\in \L^*}\,\sum_{s \in \L^*_+}  N^\k |\widehat V((r-s)/N^{1-\k})| |\eta_s| \leq C N^{1+\k}  
\ee
and the bounds from Lemma \ref{lm:nu-norms}, we conclude (using the condition $3\k -2 + 5\eps < 0$) that 
\[ \begin{split} \cE_1 &\leq C \left[  N^{\kappa} \| \eta_{L^c} \|_\infty \| \eta_{L^c} \|^2 + N^{\kappa-1} \| \eta_{L^c} \|_\infty \| \eta_{L^c} \|^2 \| \sigma_L \gamma_L \|_1 + N^{\kappa-1} \| \sigma_L \gamma_L \|_1^2  \right] \leq C N^{5\kappa/2-\eps}\,. \end{split} \]
To prove (\ref{eq:intVeta}) we use (\ref{eq:modetap}) and we remark that   
\[ \sum_{s \in \L^* : |s| \leq N^{1-\k}} N^\k   |\widehat V((r-s)/N^{1-\k})| |\eta_s| \leq C N^{2\k} \sum_{s \in \L^*_+: |s| \leq N^{1-\k}} |s|^{-2} \leq C N^{1+\k} \]
and that, rescaling variables (setting $\tl{r} = r/ N^{1-\k}$) and using an integral approximation, 
\[ \begin{split} \sum_{s \in \L^*_+ : |s| > N^{1-\k}} N^\k   |\widehat V((r-s)/N^{1-\k})| |\eta_s| &\leq C N^{1+\k} \sum_{s \in \L^* / N^{1-\k}: |s| > 1} N^{-3(1-\k)} \, |\widehat{V} (\tl{r} - s)| |s|^{-2} \\ &\leq C N^{1+\k} \int_{|s| > 1}  |\widehat{V} (\tl{r} - s)| |s|^{-2} ds \leq C_q N^{1+\k} \| \widehat{V} \|_q  \end{split} \]
for any $q < 3$. With the assumption $V \in L^{q'} (\bR^3)$, for some $q' > 3/2$, (\ref{eq:intVeta}) follows by the Hausdorff-Young inequality. 

Finally, we remark that the terms in the last two lines of \eqref{eq:const1} can be combined, using (\ref{eq:modetap}) and the bound $\| \sigma_L  \|^2 \leq C N^{3\kappa/2}$ from Lemma \ref{lm:nu-norms} into 
\begin{multline} \label{eq:const1-s4}
 -  \frac 1 N \sum_{v \in P_L} \s^2_v  \sum_{p \in  P_L^c} N^\k \widehat V(p/N^{1-\k}) \eta_p   +\frac1N\sum_{v\in P_L}\s_v^2\sum_{r\in P_H} N^\k \widehat V(r/N^{1-\k})(\eta_r+\eta_{r+v})\\
\leq \frac 1 N \sum_{v \in P_L} \s^2_v \sum_{r \in P_L^c} N^\k \widehat V(r/N^{1-\k}) \eta_{r+v} + C N^{5\kappa/2 - \eps}\, . 
\end{multline}

Inserting \eqref{eq:const1-s1}, \eqref{eq:const1-s2}, \eqref{eq:const1-s3} and \eqref{eq:const1-s4}  into \eqref{eq:const1}, we obtain 
\be \begin{split} \label{eq:const2}
E_N^\Psi  \leq & \,  \frac{N^{1+\k}}{2}  \widehat{V} (0)+ \sum_{p \in  P_L^c}\big[ p^2 \eta_p + N^\k \widehat V(p/N^{1-\k}) +   \frac 1 {2N} \sum_{r \in  P_L^c}  N^\k \widehat V((r-p)/N^\k)  \eta_r  \big] \eta_p  \\
& + \sum_{v \in P_L} \Big[ v^2 \s_v^2 +  \big(\s^2_v + \g_v \s_v \big)\Big( N^\k \widehat V(v/N^{1-\k}) +\frac 1 N  \sum_{r \in  P_L^c}  N^\k \widehat V((r-v)/N^{1-\k})\eta_{r} \Big) \Big]  \\
& +  C N^{5\k/2}   \max \{ N^{-\e},\,N^{9\k-5 +6\eps} ,\, N^{21\k/4-3+3\e}\} \, . 
\end{split}\ee

Let us now consider the first square bracket on the r.h.s. of \eqref{eq:const2}. Using the scattering equation \eqref{eq:scatt} we obtain
\begin{multline} \label{eq:const2-step1}
\sum_{p \in  P_L^c} \Big[ p^2 \eta_p +  N^\k \widehat V(p/N^{1-\k})  +   \frac 1 {2N} \sum_{r \in  P_L^c}  N^\k \widehat V( (r-p)/N^{1-\k})  \eta_r  \Big] \eta_p  \\
= \frac 1 2 \sum_{p \in  P_L^c}  N^\k \widehat V(p/N^{1-\k}) \eta_p - \frac 1 {2N} \sum_{\substack{p \in  P_L^c\\ v \in P_L} } N^\k \widehat V((p-v)/N^{1-\k})  \eta_v  \eta_p+ \cE_2
\end{multline}
with 
\[ 
\cE_2 =\; N^{3-2\k} \l_{\ell} \sum_{p \in  P_L^c}  \big( \widehat \chi_\ell \ast \widehat f_{N}\big)_p \eta_p  - \frac 1 {2N} \sum_{p \in  P_L^c}  N^\k \widehat V(p/N^{1-\k})\eta_p  \eta_0 \,.
\]
Using $N^{3-3\k} \l_{\ell} \leq C$ (Lemma \ref{sceqlemma}), $\| \widehat \chi_\ell \ast \widehat f_{N}\|  = \| \chi_\ell f_N \| \leq C$ and $\|\eta_{L^c}\|^2 \leq C N^{3\k/2 -\e}$ (Lemma \ref{lm:nu-norms}) in the first term, (\ref{eq:wteta0}) and \eqref{eq:intVeta} in the second term, we find $\cE_2 \leq C N^{5\kappa/2-\eps}$ (using that $3\kappa -2 + 4\eps < 0$ and $\kappa > 1/2$). 

As for the second square bracket on the r.h.s. of \eqref{eq:const2}, we write
\begin{multline} \label{eq:const2-step2}
\sum_{v \in P_L}  \big(\s^2_v + \g_v \s_v \big) \Big( N^\k \widehat V(v/N^{1-\k})  +\frac 1 N \sum_{r \in  P_L^c} N^\k \widehat V((r-v)/N^{1-\k})\eta_{r} \Big) \\
=  \sum_{v \in P_L}  \big(\s^2_v + \g_v \s_v \big)  N^\k \big( \widehat V(\cdot/N^{1-\k}) \ast \widehat f_{N} \big)_v + \cE_3
\end{multline}
with 
\[ 
\cE_3 =\;  -\frac 1 N \hskip -0.2cm \sum_{v,\,r\in P_L}\hskip -0.2cm N^\k \widehat V((r-v)/N^{1-\k})  \big(\s^2_v + \g_v \s_v \big)\eta_{r} - \frac {\eta_0} N  \sum_{v \in P_L} N^\k \widehat V_{N}(v/N^{1-\kappa}) \big(\s^2_v + \g_v \s_v \big) \,.
\]
With (\ref{eq:modetap}), Lemma \ref{lm:nu-norms}, $|\eta_0| \leq C N^\k$ and the assumption $3\k -2 + 5\eps < 0$, we obtain $\cE_3 \leq N^{5\k/2-\eps}$. 

Inserting \eqref{eq:const2-step1} and \eqref{eq:const2-step2} in \eqref{eq:const2} and completing sums over $p$ on the r.h.s. of (\ref{eq:const2-step1}), we arrive at 
\begin{equation} \label{eq:const4}
\begin{split} 
E_N^\Psi \leq \; & \frac{N}{2} \big( N^\k \widehat V_{N}(\cdot/N^{1-\k}) \ast \widehat f_{N} \big)_0    \\
&+ \sum_{v \in P_L} \Big[ v^2 \s_v^2 +  \big(\s^2_v + \g_v \s_v \big)  \big( N^\k \widehat V(\cdot/N^{1-\k}) \ast \widehat f_{N} \big)_v - \frac {1} 2    N^\k \big( \widehat V(\cdot/N^{1-\k}) \ast \widehat f_{N} \big)_v  \eta_v   \Big]  \\ &+ C N^{5\k/2}   \max \{ N^{-\e},\,N^{9\k-5 +6\eps} ,\, N^{21\k/4-3+3\e}\}\,.
\end{split} 
\end{equation} 

Let us now introduce the notation $\hat{g}_p = ( N^\k \widehat V(\cdot/N^{1-\k}) \ast \widehat f_{N})_p$. Notice that 
\be  \label{eq:Vastf}
 \big|\, \widehat g_0 - 8 \pi \aa N^\k \,\big| \leq C N^{2\k -1}, \qquad
 \big|\, \widehat g_p  - \widehat g_0  \, \big|   \leq C |p| N^{2\k -1} \,.
\ee
With the expression \eqref{eq:sL2}, we obtain 
\[ \begin{split}
 \sum_{v \in P_L} \Big[ v^2 \s_v^2 +  \big(\s^2_v + \g_v \s_v \big)  \widehat g_v \Big] &  =\frac 12 \sum_{v \in P_L} \left[ -v^2 - \widehat g_v  + \frac{v^4 + v^2 (8\pi \frak{a} N^\kappa + \widehat g_v)}{\sqrt{v^4 + 16 \pi \frak{a} N^\kappa v^2}} \right] \\
& = \frac 12 \sum_{v \in P_L} \left[ \sqrt{v^4 + 16 \pi \frak{a} N^\k v^2}  - v^2 - 8 \pi \aa N^\k \right] + \cE_{4}
\end{split} 
\]
where, with \eqref{eq:Vastf} and $3\k -2 + 4\eps < 0$, we find 
\[ \begin{split} 
\cE_{4} &= \frac 12 \sum_{v\in P_L} \left[ 8\pi \frak{a} N^\kappa - \widehat g_v + \frac{\widehat g_v - 8\pi \frak{a} N^\kappa}{\sqrt{1 + 16 \pi \frak{a} N^\kappa / v^2}} \right] \\ &\leq C \sum_{v\in P_L}  | 8\pi \frak{a} N^\kappa - \widehat g_v | \;\frac{|\sqrt{1 + 16 \pi \frak{a} N^\kappa / v^2} - 1 |}{\sqrt{1 + 16 \pi \frak{a} N^\kappa /v^2}}  \\ &\leq  C N^{2\k-1} \Big[ \sum_{|v| < N^{\kappa/2}} (1+ |v|) + \sum_{v \in P_L : |v| > N^{\kappa/2}} \frac{N^{\kappa/2}}{|v|^2} \Big] \leq C N^{5\k/2-\eps} \, . \end{split} 
\]
Moreover, from (\ref{eq:const4}) and with the scattering equation \eqref{eq:scatt}, we obtain 
\[  - \frac{N^\kappa}{2} \sum_{v \in P_L} \widehat{g}_v \eta_v \leq \sum_{v \in P_L} \frac{(8\pi \frak{a} N^\kappa )^2}{4v^2} + C N^{5\kappa/2-\eps}\,. \] 
Thus 
\[ \begin{split} E^\Psi_N \leq \; &  4 \pi \aa N^{1+\k} + \frac 12 \sum_{v \in P_L} \Big[ \sqrt{v^4 + 16 \pi \aa N^\k v^2} - v^2 - 8 \pi \aa N^\k + \frac{(8 \pi \aa N^\k )^2}{2v^2}   \Big]  \\ &+  C N^{5\k/2}   \max \{ N^{-\e},\,N^{9\k-5 +6\eps} ,\, N^{21\k/4-3+3\e}\} \,.
\end{split}\]
With 
\[ \left|  \sqrt{v^4 + 16 \pi \aa N^\k v^2} - v^2 - 8 \pi \aa N^\k + \frac{(8 \pi \aa N^\k )^2}{2v^2}  \right| \leq C \frac{N^{2\kappa}}{|v|^4} \]
we can replace, up to an error of order $N^{5\k/2-\e}$, the sum over $P_L$ with a sum over all $\Lambda^*_+$. With the rescaling $v \to N^{\kappa/2} v$, we arrive at 
\be \begin{split} \label{eq:const5}
E^\Psi_N \leq \; &  4 \pi \aa N^{1+\k} + \frac{N^\kappa}{2} \sum_{v \in 2\pi N^{-\kappa/2} \bZ^3} \Big[ \sqrt{v^4 + 16 \pi \aa v^2} - v^2 - 8 \pi \aa + \frac{(8 \pi \aa )^2}{2v^2}   \Big]  \\ &+  C N^{5\k/2}   \max \{ N^{-\e},\,N^{9\k-5 +6\eps} ,\, N^{21\k/4-3+3\e}\}\,.
\end{split} \ee 
Recognizing that (\ref{eq:const5}) defines a Riemann sum and explicitly computing 
\[ \frac{1}{2 (2\pi)^3} \int dq \Big[ \sqrt{v^4 + 16 \pi \aa v^2} - v^2 - 8 \pi \aa + \frac{(8 \pi \aa )^2}{2v^2}   \Big] = 4 \pi \frak{a} \cdot \frac{128}{15 \sqrt{\pi}} \frak{a}^{3/2} \] 
we conclude that 
\[ \begin{split} 
E^\Psi_N \leq \; &  4 \pi \aa N^{1+\k}  \cdot \left[ 1 + \frac{128}{15 \sqrt{\pi}} \, (\frak{a}^3 N^{3\kappa-2})^{1/2} \right] \\ &+  C N^{5\k/2}   \max \{ N^{-\e},\,N^{9\k-5 +6\eps} ,\, N^{21\k/4-3+3\e}\} \,.
\end{split} \]
To compare the Riemann sum in (\ref{eq:const5}) with the integral, we first removed contributions arising from $|v| \leq N^{-\eps}$ using that $|F(v)| \leq C / v^2$, for small $v$, with the definition $F(v) = \sqrt{v^4 + 16 \pi \aa v^2} - v^2 - 8 \pi \aa + (8 \pi \aa )^2 / 2v^2$. For $|v| > N^{-\eps}$, we use that $|\nabla F (v)| \leq C |v|^{-3} (1+v^2)^{-1}$ to compare the value of $F(q)$ with $F(v)$, for all $q$ in the cube of size $2\pi N^{-\k/2}$ centered at $v$.


\section{Bogoliubov transformation} \label{sec:Bog}

In this section, we show Prop. \ref{prop:calG}. From the definition (\ref{eq:defcG}) and from (\ref{eq:cL}), we obtain (since $T_\nu$ does not act on the zero momentum mode and since $a_0 \xi_\nu = 0$) 
\begin{equation*} \label{eq:cGN-deco} \frac{\langle \xi_\nu, \cG_N \xi_\nu \rangle}{\| \xi_\nu \|^2}  = \frac{N_0^2}{2} N^{\kappa - 1} \widehat{V} (0) + \sum_{j=2}^4 \frac{\langle \xi_\nu, \cG_N^{(j)} \xi_\nu \rangle}{\| \xi_\nu \|^2} \end{equation*} 
with $\cG_N^{(j)} = T_\nu^* \cL_N^{(j)} T_\nu$, for $j=2,3,4$. 

We start from the contribution of $\cG_N^{(2)}$. We write $\cL_N^{(2)}= \cK + \cL_N^{(2,V)}$ with 
\[
\cL_N^{(2,V)} =\frac{N_0}{N}\sum_{p\in \L^*} N^\k \big( \hat{V}(p/N^{1-\k}) + \hat V(0) \big) a_p^*a_p + \frac{N_0}{2N}\sum_{p\in \L^*}N^\k \hat{V}(p/N^{1-\k})(a_p^*a_{-p}^*+\hc )\,.
\]
Using \eqref{eq:actionT} we get
\[ 
T^*_{\nu} \cK T_\nu - \left[ \cK + \sum_{p \in \L_+^*} p^2 \s_p^2 \right] = 2 \sum_{p\in \L^*_+} p^2 \s_p^2 a^*_p a_p +  \sum_{p \in \L^*_+} p^2  \big[\g_p\s_p ( a^*_p a^*_{-p}+\hc)  \big] := E_1 + E_2\,.
\]
From (\ref{eq:Adef}), $\langle \xi_\nu, \text{E}_2 \xi_\nu \rangle = 0$ ($\xi_\nu$ is a superposition of states with $3m$ particles, for $m \in \bN$). To bound the expectation of $E_1$ on $\xi_\nu$ we notice that $\langle\xi_\nu,a^*_pa_p\xi_\nu\rangle=0$ if $p\in \L^*_+\backslash(P_S\cup P_H)$. Moreover, proceeding as in \eqref{eq:sL-H1}, we have
\[ 
\sup_{p \in P_S} ( p^2 \s_p^2)     \leq  \sup_{p \in P_S: |p|\leq N^{\kappa/2}} N^{\kappa/2} |p|  
+ \sup_{p \in P_S : |p| > N^{\kappa/2}} \frac{N^{2\kappa}}{p^2}  \leq   C N^{\k}  
\]
while  
\[
 \sup_{p \in P_H} (p^2 \s_p^2) \leq \sup_{|p| \geq N^{1-\kappa-\eps}}   \frac{N^{2\k}}{p^2} \leq C N^{-2+4\kappa+2\eps}  \leq C N^\kappa 
\]
because, by assumption, $3\kappa-2+2\eps < 0$. Hence
\be \label{eq:E3}
|\bmedia{\xi_\nu, E_{1}\xi_\nu}|\leq C N^\k  \, \| \cN^{1/2}\xi_\nu \|^2 \,. 
\ee

We consider now the contribution from $ \cL_N^{(2,V)}$. Using again $\langle \xi_\nu , a_p^* a_{-p}^* \xi_\nu \rangle = 0$ for all $p \in \L_+^*$ and $\langle \xi_\nu , a_p^* a_{p} \xi_\nu \rangle = 0$ for all $p \in \L^*_+\backslash(P_S \cup P_H)$, a straightforward computation shows that 
\[ \begin{split} 
&\frac{\langle \xi_\nu, T^*_\nu \cL_N^{(2,V)} T_\nu \xi_\nu \rangle}{\| \xi_\nu \|^2} \\ &\hspace{2cm} =  \frac{N_0}{N} \sum_{p \in \L^*_+} N^\k \widehat V(p/N^{1-\k}) \g_p \s_p +  \frac{N_0}{N} \sum_{p \in \L^*_+} N^\k \big(\widehat V(0) +\widehat V(p/N^{1-\k}) \big) \s^2_p  \\ & \hspace{2.3cm} +  \frac{N_0}{N} \sum_{p\in P_S \cup P_H}  N^\k  \big[\widehat V(p/N^{\k-1})(\g_p+\s_p)^2 +\widehat V(0)  (\g^2_p + \s^2_p) \big]  \frac{\langle \xi_\nu, a^*_p a_p \xi_\nu \rangle}{\| \xi_\nu \|^2} \,.
\end{split} \]
With the bounds $\|\g_S \|^2_\io$, $\|\s_S\|^2_\io$,$\|\s_H\|_\io^2$,$\|\g_H\|_\io^2 \leq C N^{\e}$ from Lemma \ref{lm:nu-norms}, with (\ref{eq:E3}) and with the estimate $\langle \xi_\nu, \cN \xi_\nu \rangle \leq C N^{9\kappa/2 -2+\eps} \| \xi_\nu \|^2$ from Prop. \ref{prop:Abounds}, we conclude that 
\be \label{eq:E4}
\begin{split} 
\frac{\langle \xi_\nu, \cG^{(2)}_N \xi_\nu \rangle}{\| \xi_\nu \|^2}  \leq \; &\frac{\langle \xi_\nu, \cK \xi_\nu \rangle}{\| \xi_\nu \|^2}  + \sum_{p \in \L^*_+} p^2 \s_p^2 + \frac{N_0}{N} \sum_{p \in \L^*_+} N^\k \widehat V(p/N^{1-\k}) \g_p \s_p \\ &+  \frac{N_0}{N} \sum_{p \in \L^*_+} N^\k \big(\widehat V(0) +\widehat V(p/N^{1-\k}) \big) \s^2_p  + C N^{5\kappa/2-\eps}
\end{split} \ee
using again the condition $3\k - 2 +4\e < 0$. \newpage
 
Next, we study the contribution of $\cG_N^{(3)} = T_\nu^* \cL^{(3)}_N T_\nu$, with $\cL^{(3)}_N$ as in (\ref{eq:cL}). Recall the operator $\cC_N$, defined in (\ref{eq:cCG}). Taking into account the fact that $\xi_\nu$ is a superposition of vectors with $2m$ particles with momenta in $P_H$ and $m$ particles with momenta in $P_S$, for $m \in \bN$, we obtain that 
\[ \langle \xi_\nu, \cG_N^{(3)} \xi_\nu \rangle = \langle \xi_\nu , \cC_N \xi_\nu \rangle + \sum_{j=1}^3 \left[ \langle \xi_\nu, \text{F}_j \xi_\nu \rangle + \text{h.c.} \right] \]
with 
\[ \begin{split} 
\text{F}_1 &=  \frac{\sqrt{N_0}}{N} \sum_{p,r \in P_H : p+r \in P_S} N^\k \widehat V(r/N^{1-\k}) \g_{p+r} \s_p \s_r  \, a^*_{p+r} a^*_{-p} a^*_{-r}  \\
\text{F}_2 &=  \frac{\sqrt{N_0}}{N} \sum_{p \in P_H  , r \in P_S : p+r \in P_H} N^\k \left[ \widehat V(r/N^{1-\k}) + \widehat{V} (p / N^{1-\k}) \right] \g_{p+r} \s_p \s_r  \, a^*_{p+r} a^*_{-p} a^*_{-r}  \\
\text{F}_3 &= \frac{\sqrt{N_0}}{N}  \sum_{p \in P_H, r \in P_S : p+r \in P_H}  N^\k \left[ \widehat V(r/N^{1-\k}) + \widehat V(p /N^{1-\k}) \right] \sigma_{p+r} \g_p \g_r  \, a_{-p-r} a_{p} a_{r}  \,.
\end{split} \]
Using $\|a^*_{-r} (\cN+1)^{1/2}\xi_\nu\|\leq \|a_{-r} (\cN+1)^{1/2}\xi_\nu\| + \|(\cN+1)^{1/2}\xi_\nu\|$, 
we can bound
\[ \begin{split} 
\langle \xi_\nu , \text{F}_1 \xi_\nu \rangle \leq \; &C N^{\kappa -1/2} \| \gamma_S \|_\infty \sum_{p,r \in P_H} |\s_r| |\s_p | \, \| a_{p+r} a_{-p} (\cN + 1)^{-1/2}  \xi_\nu \| \\ &\hspace{4cm} \times \left[ \| a_{-r} (\cN + 1)^{1/2} \xi_\nu \| + \| (\cN + 1)^{1/2} \xi_{\nu} \| \right] \\ \leq \; &C N^{\kappa -1/2} \| \gamma_S \|_\infty \| \s_H \|_\infty \| \s_H \| \, \| (\cN + 1)^{1/2} \xi_\nu \| \| (\cN+ 1) \xi_\nu \| \\ &+ C N^{\kappa -1/2} \| \gamma_S \|_\infty \| \s_H \|^2 \, \| (\cN + 1)^{1/2} \xi_\nu \|^2\,.  \end{split}  \]
With Lemma \ref{lm:nu-norms} and Prop. \ref{prop:Abounds}, we obtain 
\[ \frac{\langle  \xi_\nu , \text{F}_1 \xi_\nu \rangle}{\| \xi_\nu \|^2}  \leq CN^{37\k /4 - 4 +5\e/2} + C N^{17\kappa/2 - 7/2 + 5\eps/2}  \leq C N^{5\k/2} \cdot N^{21\k /4 - 7/2 + 3\eps/2}  \]
from the assumption that $3\k -2 +4\eps < 0$. Similarly, we find 
\[  \begin{split}  \langle \xi_\nu , \text{F}_2 \xi_\nu \rangle \leq \; & C N^{\kappa -1/2} \| \gamma_H \|_\infty \| \s_S \|_\infty \| \s_H \| \, \| (\cN+  1)^{1/2} \xi_\nu \| \| (\cN+ 1) \xi_\nu \| \\ &+ C N^{\kappa -1/2} \| \gamma_H \|_\infty \| \s_S \| \| \s_H \| \, \| (\cN + 1)^{1/2} \xi_\nu \|^2 \\  \leq \; &C \big[ N^{37\k/4-4+3\e} + N^{31\k/4-3+3\e/2} \big] \| \xi_\nu \|^2 \leq C N^{5\k/2} \cdot N^{21\kappa/4 - 3 + 3 \eps /2}  \| \xi_\nu \|^2 \end{split}  \]
and also 
\[ \begin{split}  
\langle \xi_\nu , \text{F}_3 \xi_\nu \rangle \leq \; & C N^{\kappa -1/2} 
\| \gamma_H \|_\infty  \| \g_S \|_\infty \| \s_H \| \, \| (\cN+  1)^{1/2} \xi_\nu \| \| (\cN+ 1) \xi_\nu \| \\ &+ C N^{\kappa -1/2} \| \gamma_H \|_\infty \| \gamma_S \| \| \s_H \| \, \| (\cN + 1)^{1/2} \xi_\nu \|^2 \\  \leq \; &C N^{5\kappa/2} \cdot N^{21\kappa/4 - 3 + 3 \eps}  \| \xi_\nu \|^2 \, . \end{split}  \]
Summarizing, we have 
\begin{equation}\label{eq:fin-G3} \frac{\langle \xi_\nu, \cG_N^{(3)} \xi_\nu \rangle}{ \| \xi_\nu \|^2} \leq \frac{\langle \xi_\nu , \cC_N \xi_\nu \rangle}{\| \xi_\nu \|^2}  + C N^{5\kappa/2} \cdot N^{21\kappa/4 - 3 + 3 \eps} \, . \end{equation} 

Finally, let us consider $\cG_N^{(4)} = T_\nu^* \cL_N^{(4)} T_\nu$. 
We decompose $\langle \xi_\nu , \cG_N^{(4)} \xi_\nu \rangle = \sum_{j=1}^3 \langle \xi_\nu , \text{G}_j \xi_\nu \rangle $
with
\[ \begin{split}
\text{G}_1  &=  \frac 1 {2N} \sum_{\substack{r \in \L^*,\, p,q \in \L^*_+ \\ -r \neq q,p}} N^\k \widehat V(r/N^{1-\k}) \g_p \g_q \g_{p+r} \g_{q+r}  a^*_{p+r} a^*_q a_{p} a_{q+r}  \\
\text{G}_2 & =  \frac 1 {2N} \sum_{\substack{r \in \L^*,\, p,q \in \L^*_+ \\ r \neq q,-p}} N^\k \widehat V(r/N^{1-\k}) \big( \g_{p+r} \s_q a^*_{p+r} a_{-q} + \s_{p+r} \g_q a_{-p-r} a^*_q \big) \\[-0.5cm]
&  \hskip5.5cm\times \big( \g_{p} \s_{q+r} a_{p}a^*_{-q-r} + \s_{p} \g_{q+r} a^*_{-p} a_{q+r}\big)\\[0.2cm]
\text{G}_3 & = \frac 1 {2N} \sum_{\substack{r \in \L^*,\, p,q \in \L^*_+ \\ r \neq q,-p}} N^\k \widehat V(r/N^{1-\k})  \s_p \s_q \s_{p+r} \s_{q+r} a_{p+r} a_{q}  a^*_{p} a^*_{q+r}\,.
\end{split}\]
To estimate contributions from $\text{G}_3$, we arrange terms in normal order. We find 
\[ \begin{split} 
\text{G}_3 = \; & \frac 1 {2N} \sum_{\substack{r \in \L^*,\, p,q \in \L^*_+ \\ -r \neq q,p}} N^\k \widehat V(r/N^{1-\k})  \s_p \s_q \s_{p+r} \s_{q+r} a^*_{p} a^*_{q+r} a_{p+r} a_{q} \\ &+  \frac 1 {2N} \sum_{\substack{r \in \L^*,\, p \in \L^*_+ \\ p \neq -r}} N^\k \widehat V(r/N^{1-\k})  \s_p^2 \s^2_{p+r} \big( a^*_p a_p + a^*_{p+r}a_{p+r}\big) \\ &+  \frac 1 {N} \sum_{\substack{p,q \in \L^*_+ }} N^\k \widehat V(0)  \s_p^2 \s^2_{q}  a^*_p a_p  \\ &+  \frac 1 {2N} \sum_{\substack{r \in \L^*,\, p \in \L^*_+ :\\ p\not = -r}} N^\k \widehat V(r/N^{1-\k}) \s^2_p \s^2_{p+r} +  \frac 1 {2N} \sum_{\substack{p,q \in \L^*_+}} N^\k \widehat V(0) \s^2_p \s^2_{q} \,.
\end{split}\]
Since $a_p\,\xi_\nu=0$ if $p\in \L^*_+\backslash(P_S\cup P_H)$ and $\|\s_H\|_\io\leq \|\s_S\|_\io$ we find, by Cauchy-Schwarz,  
\begin{equation}\label{eq:G3-fin}  \begin{split}  \langle \xi_\nu, \text{G}_3 \xi_\nu \rangle &\leq C N^{\kappa -1} \left[ \| \s_S \|_\infty^2 \| \sigma \|^2 \| (\cN+1) \xi_\nu \|^2 + \| \s \|^4 \| \xi_\nu \|^2 \right] \\ &\leq C N^{5\kappa/2} \cdot \max \{ N^{-\eps} , N^{9\kappa -5 + 3\eps} \}  \| \xi_\nu \|^2 \end{split} \end{equation} 
using Prop. \ref{prop:Abounds} and $3\k -2 + 4\eps < 0$. 
We proceed similarly for $\text{G}_2$. Through normal ordering, we get 
\[ \begin{split} 
\text{G}_2 = \; &\frac 1 {N} \sum_{\substack{r \in \L^*,\, p,q \in \L^*_+ \\ -r \neq q,p}} N^\k \widehat V(r/N^{1-\k})  \g_p \g_{p+r}\s_q \s_{q+r} a^*_{p+r} a^*_{-q-r} a_{p} a_{-q} \\ &+  \frac 1 {N} \sum_{\substack{r \in \L^*,\, p,q \in \L^*_+ \\ -r \neq q,p}} N^\k \widehat V(r/N^{1-\k})  \g_{p+r}  \g_{q+r}\s_p \s_{q} a^*_{p+r} a^*_{-p} a_{-q} a_{q+r} \\ &+  \frac 1 {N} \sum_{\substack{p,q \in \L^*_+ }} N^\k \widehat V(0)  \g_p^2 \s^2_q  a^*_p a_p  + \frac 1 {N} \sum_{\substack{r \in \L^*,\,p \in \L^*_+ }} N^\k \widehat V(r/N^{1-\k})  \g_p^2 \s^2_{p+r}  a^*_p a_p \\ &+ \frac 2 {N} \sum_{\substack{r \in \L^*,\,p \in \L^*_+ }} N^\k \widehat V(r/N^{1-\k})  \g_p \s_p \g_{p+r} \s_{p+r}  a^*_p a_p  \\ &+ \frac 1 {2N} \sum_{\substack{p,r \in \L^*_+}} N^\k \widehat V(r/N^{1-\k})  \g_p \g_{p+r} \s_p \s_{p+r}   \,. \end{split} \]
Keeping the last contribution intact and estimating the term on the fourth line distinguishing the two cases $(p+r) \in P_S$ and $(p+r) \in P_H$, we arrive at 
\[ \begin{split}  \langle \xi_\nu , \text{G}_2 \xi_\nu \rangle \leq \; &\frac 1 {2N} \sum_{\substack{p,r \in \L^*_+}} N^\k \widehat V(r/N^{1-\k})  \g_p \g_{p+r} \s_p \s_{p+r}  \| \xi_\nu \|^2  \\ &+ C N^{\kappa -1} \| \g_S \|_\infty^2 \| \s \|^2 \| (\cN+1) \xi_\nu \|^2  \\ &+ C N^{\kappa -1} \| \g_{S \cup H} \|_\infty \| \s_{S \cup H}  \|_\infty  \\ & \hspace{2cm} \times \left[ \| \gamma_S \s_S \|_1 + \| \g_H \|_\infty \sup_p \sum_{r \in \L^*} \widehat{V} (r/ N^{1-\kappa}) |\eta_{p+r}| \right] \| \cN^{1/2} \xi_\nu \|^2    \,. \end{split} \]
With the bounds in Lemma \ref{lm:nu-norms} 
and in Prop. \ref{prop:Abounds} and with (\ref{eq:intVeta}), we conclude that 
\begin{equation}\label{eq:G2-fin}  \langle \xi_\nu , \text{G}_2 \xi_\nu \rangle \leq  \frac 1 {2N} \sum_{\substack{p,r \in \L^*_+}} N^\k \widehat V(r/N^{1-\k})  \g_p \g_{p+r} \s_p \s_{p+r}  \| \xi_\nu \|^2 + C N^{5\kappa /2} \cdot N^{9\kappa -5 + 3\eps} \| \xi_\nu \|^2\,. \end{equation}
Finally, we consider $\text{G}_1$. Recalling that $a_p \xi_\nu = 0$ if $p \in \Lambda^*_+ \backslash (P_S \cup P_H)$ and observing that $\langle \xi_\nu, a^*_{p+r}a^*_qa_pa_{q+r}\xi_\nu\rangle\neq 0$ only if the operator $a^*_{p+r}a^*_q a_p a_{q+r}$ preserves the number of particles in $P_S$ and in $P_H$, we arrive at 
\[ \begin{split} \langle \xi_\nu, \text{G}_1 \xi_\nu \rangle \leq \; &\frac 1 {2N} \sum_{\substack{r \in \L^*,\, p,q \in P_H: \\p+r,q+r\in P_H}} N^\k \widehat V(r/N^{1-\k})  \g_p \g_q \g_{p+r} \g_{q-r}  \langle \xi_\nu , a^*_{p+r} a^*_q a_{p} a_{q+r} \xi_\nu \rangle \\ &+ C N^{\kappa -1 } \| \g_{S \cup H} \|_\infty^2 \| \gamma_S \|^2 \| (\cN + 1) \xi \|^2\,.  \end{split} \]
With $|\g_p \g_q \g_{p+r} \g_{q+r} - 1| \leq C \| \eta_H \|_\infty^2$ for all $p,q \in P_H$, with $(p+r), (q+r) \in P_H$, and using the estimate (see the proof of (\ref{eq:intVeta}))  
\[ \sup_{p\in \L^*}  \sum_{r \in \L^*_+ : r \not = p}  \frac{N^\k |\widehat{V} (r/N^{1-\k})|}{|p-r|^2} \leq C N \]
we conclude that 
\[ \langle \xi_\nu, \text{G}_1 \xi_\nu \rangle \leq \langle \xi_\nu, \cV_N^{(H)} \xi_\nu \rangle + C \| \eta_H \|_\infty^2 \| \cN^{1/2} \cK^{1/2} \xi_\nu \|^2 + C N^{\kappa -1 } \| \g_{S \cup H} \|_\infty^2 \| \gamma_S \|^2 \| \cN \xi \|^2  \]
with $\cV_N^{(H)}$ defined as in (\ref{eq:cHN}). With Lemma \ref{lm:nu-norms} and Prop.  \ref{prop:Abounds}, we find (using the assumption $3\k - 2 + 4 \eps < 0$) 
\[ \langle \xi_\nu, \text{G}_1 \xi_\nu \rangle \leq  \langle \xi_\nu, \cV_N^{(H)} \xi_\nu \rangle + C N^{5\kappa/2} \cdot N^{9 \kappa - 5 + 6\eps} \| \xi_\nu \|^2\,. \]
With (\ref{eq:G3-fin}) and (\ref{eq:G2-fin}), we have shown that 
\[ \frac{\langle \xi_\nu, \cG_N^{(4)} \xi_\nu \rangle}{\| \xi_\nu \|^2} \leq \frac{\langle \xi_\nu, \cV_N^{(H)}  \xi_\nu \rangle}{\| \xi_\nu \|^2} + C N^{5\kappa/2} \cdot \max \{ N^{-\eps}, N^{9\kappa - 5 + 6 \eps} \}\,.  \]

Combining the last bound with (\ref{eq:E4}) and (\ref{eq:fin-G3}), we obtain  
\[ \begin{split}  \frac{\langle \xi_\nu, \cG_N \xi_\nu \rangle}{\| \xi_\nu \|^2} \leq \; &\wt{C}_N + \frac{\langle \xi_\nu, (\cK  +\cV_N^{(H)} + \cC_N) \xi_\nu \rangle}{\| \xi_\nu \|^2}  \\ &+ C N^{5\kappa/2} \cdot  \max \{ N^{-\eps}, N^{9\k - 5 + 6\e} , N^{21 \kappa /4 - 3 + 3 \eps } \}  \end{split} \]
where we defined 
\be \begin{split} \label{eq:constG-1}
\wt{C}_{N} =\; &  \frac{N_0^2 }{2N} N^\k \widehat{V}(0) +\sum_{p\in {\L}^*_+}p^2\s_p^2 + \frac{N_0}{N}\sum_{p\in \L^*_+} N^\k \big( \widehat{V}(p/N^{1-\k}) + \widehat{V}(0) \big)\s_p^2\\
&  + \frac{N_0}{N}\sum_{p\in \L^*_+}N^\k \widehat{V}(p/N^{1-\k})\s_p\g_p +\frac{1}{2N}\sum_{\substack{p,r\in \L^*_+\\r\neq p}}N^\k \widehat{V}(r/N^{1-\k})\s_p\s_{p+r}\g_p\g_{p+r}\,.
\end{split}\ee
Inserting $N_0 = N - \| \sigma_L \|^2$ and recalling from Lemma \ref{lm:nu-norms} that $\| \sigma_L \|^2 \leq C N^{3\k /2}$ and $\| \s_{L^c} \|^2  \leq C N^{3\k /2 - \eps}$, we obtain 
$\wt{C}_N = C_{\cG_N} + \cO (N^{5\k /2 - \eps})$, with $C_{\cG_N}$ as defined in (\ref{eq:constantG}) (with the assumption $3\k - 2 + 4\eps < 0$). To handle the first term on 
the second line of (\ref{eq:constG-1}), we used that $|\s_p \g_p - \eta_p| \leq C \eta_p^3 
\leq C N^{3\k}/|p|^6$, for $p \in P_L^c$. This completes the proof of Prop. \ref{prop:calG}.


\section{Cubic conjugation}  
  \label{sec:cubicconj}  

In this section we prove Prop. \ref{prop:Abounds} and Prop. \ref{prop:cubic-contr}, which is a conequence of the following lemma. 
\begin{lemma} \label{lm:cubic}
Let $A_\nu$ be defined in \eqref{eq:Adef}, and $\cK$, $\cV_N^{(H)}$ and $\cC_N$ be defined in \eqref{eq:cHN}
and \eqref{eq:cCG} respectively. Then, for $\xi_\nu=e^{A_\nu}\O$, 
\begin{align}
 \frac{\langle \xi_\nu, \cK \xi_\nu \rangle}{\| \xi_\nu \|^2} \leq\; & \frac 2 {N} \sum_{\substack{v \in P_S, r \in P_H:\\r+v \in P_H}}  r^2\eta_r (\eta_r+\eta_{r+v})\s_v^2 +  \cE ,
  \label{eq:K}  \\
\frac{\langle \xi_\nu, \cC_N \xi_\nu \rangle}{\| \xi_\nu \|^2} \leq \; & \frac 2 {N} \sum_{\substack{v \in P_S, r \in P_H :\\ r+v \in P_H}} N^\k \widehat V(r/N^{1-\k})(\eta_r +\eta_{r+v}) \s_v^2 + \cE , \label{eq:C} \\
\frac{\langle \xi_\nu,  \cV_N^{(H)} \xi_\nu \rangle}{\| \xi_\nu \|^2} \leq \;& \frac 1 {N^2} \sum_{\substack{v \in P_S, r \in P_H: \\ r+v \in P_H}} \, \big( N^\k \widehat V(\cdot/N^{1-\k}) \ast \eta \big)_r  (\eta_r+\eta_{r+v})  \s_v^2 + \cE , \label{eq:V}
\end{align}
with 
\[
\cE \leq C N^{5\k/2}  \cdot \max \{ N^{-\e},\, N^{12\k-7 +5\e}  \} 
\]
for all $\k \in (1/2 ;2/3)$, $\e>0$ so small that $3\k - 2 + 4 \e < 0$ and $N$ large enough. 
\end{lemma}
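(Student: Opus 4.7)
The three bounds share a common strategy, and I would carry them out in parallel. Writing
\[
\langle \xi_\nu,\mathcal{O}\, \xi_\nu\rangle = \sum_{m,n\geq 0}\frac{1}{m!\,n!}\,\langle A_\nu^m\Omega,\mathcal{O}\, A_\nu^n\Omega\rangle
\]
for $\mathcal{O}\in\{\cK,\cV_N^{(H)},\cC_N\}$ and using particle-number conservation, only $m=n$ contributes for the quadratic/quartic operators $\cK$ and $\cV_N^{(H)}$, while $|m-n|=1$ is the only relevant case for the cubic operator $\cC_N$ (since $A_\nu^k\Omega$ is a $3k$-particle state). Inserting the explicit formula \eqref{eq:AmO} and repeating the Wick/cutoff argument that gave \eqref{eq:normA}, the cutoffs $\theta(\{r_j,v_j\}_{j=1}^m)$ force the following structure: each low-momentum operator $a^*_{-v_j}$ on the ket must contract with the unique low-momentum operator in the bra carrying the same momentum (yielding a bijection of triples, hence a combinatorial factor $m!$); within each matched pair of triples, the two high-momentum operators contract via $\delta_{\tilde r_j,r_j}+\delta_{-\tilde r_j,r_j+v_j}$, producing the characteristic factor $\eta_{r_j}(\eta_{r_j}+\eta_{r_j+v_j})\sigma_{v_j}^2$ per triple. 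Any cross-triple contraction is killed by $\theta$ on at least one of the two sides.

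With this pairing in hand, each operator inserts a specific weight on a single triple. $\cK$ acts diagonally: the sum $\sum_{j=1}^m[(r_j+v_j)^2+r_j^2+v_j^2]$ contributes a combinatorial factor $m$ together with a weight $(r+v)^2+r^2+v^2$ on the ``special'' triple, which, combined with the normalization formula \eqref{eq:normA} for the remaining triples and the approximation $(r+v)^2+r^2+v^2\approx 2r^2$ on $P_H\times P_S$, gives the leading term in \eqref{eq:K}. For $\cV_N^{(H)}$, the quartic potential re-contracts the two high-momentum operators $a^*_{r+v},a^*_{-r}$ of a single triple through an internal momentum shift $s$, producing the convolution $(N^\kappa\widehat V(\cdot/N^{1-\kappa})\ast\eta)_r$ of \eqref{eq:V}. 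For $\cC_N$, its three creation operators act as a ``half-copy'' of $A_\nu$: after a change of variables and using $\gamma_p\gamma_r\approx 1$ on $P_H\times P_H$, the cubic phase creates (respectively annihilates) a triple paired with a triple of $e^{A_\nu}\Omega$, and the symmetric contraction $\delta_{\tilde r,r}+\delta_{-\tilde r,r+v}$ produces the factor $2(\eta_r+\eta_{r+v})\sigma_v^2$ that yields \eqref{eq:C}.

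The main obstacle is the control of the sub-leading contractions in which $\mathcal{O}$ links two or more distinct triples rather than one. Such contributions gain extra factors of $1/N$ but involve higher-order norms of $\eta_H,\sigma_S,\gamma_S$; they are bounded using the a priori estimates of Prop. \ref{prop:Abounds} on $\langle\xi_\nu,\cN^j\xi_\nu\rangle$ and $\langle\xi_\nu,\cK\cN^{j-1}\xi_\nu\rangle$, the norms of Lemma \ref{lm:nu-norms}, and the convolution estimate \eqref{eq:intVeta}, together with the assumption $3\kappa-2+4\eps<0$. The remainder $2r\cdot v+2v^2$ in the $\cK$-approximation, the ``three high-momentum'' pairings in the $\cV_N^{(H)}$-computation that couple two triples through the potential, and the $\gamma-1$ corrections in $\cC_N$ combine to produce the stated error $\cE\leq C N^{5\kappa/2}\max\{N^{-\eps},N^{12\kappa-7+5\eps}\}$.
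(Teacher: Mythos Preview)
Your outline correctly identifies the leading-term mechanism for all three operators and matches the paper's strategy. The gap is in the error control for $\cC_N$ and $\cV_N^{(H)}$, which is where the work lies.

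For $\cK$, positivity lets you simply relax $\theta(\{r_j,v_j\}_{j=1}^m)\leq\theta(\{r_j,v_j\}_{j=1}^{m-1})$ to free the special triple and reconstruct $\|\xi_\nu\|^2$; your sketch is fine there. But $\cC_N$ and $\cV_N^{(H)}$ are not sign-definite, so you cannot just drop constraints. The paper instead factors $\theta(\{r_j,v_j\}_{j=1}^m)=\theta(\{r_j,v_j\}_{j=1}^{m-1})\,\theta_m$ and writes $\theta_m=1+[\theta_m-1]$: the ``$1$'' reproduces $\|\xi_\nu\|^2$ and yields the main term, while $|\theta_m-1|$ is bounded by an explicit finite sum of Kronecker deltas (two-index constraints $\delta_{v_m,v_j}$, $\delta_{p_m,p_j}$ and three-index constraints $\delta_{p_m,-p_j+v_k}$, $\delta_{v_m,p_j+p_k}$). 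Each delta fixes one summation variable; freeing the remaining indices and summing with Lemma~\ref{lm:nu-norms} and \eqref{eq:intVeta} gives the error. It is precisely the three-index deltas that produce the rate $N^{12\kappa-7+5\eps}$. Your proposal to control these errors via the moment bounds of Prop.~\ref{prop:Abounds} is not how the paper proceeds (Prop.~\ref{prop:Abounds} is in fact \emph{derived from} \eqref{eq:K} in the same subsection) and would not obviously give this rate: the errors are combinatorial constraint terms, not operator expressions to which Cauchy--Schwarz against $\cN^j$ applies cleanly.

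There is a further subtlety in $\cV_N^{(H)}$ that your sketch misses. When the quartic observable contracts with a single triple (the dominant case), the bra-side high momentum $\tl r_m$ is \emph{not} forced to equal $r_m$ or $-(r_m+v_m)$, because the potential shift decouples them. Hence the second cutoff $\theta(\{\tl r_j,v_j\}_{j=1}^m)$ does not become redundant as it did in \eqref{eq:cAm}; it must be removed via the joint decomposition $\theta_m(\{r_j,v_j\})\,\theta_m(\{r^\sharp_j,v_j\})=1+[\cdots-1]$, generating four additional error pieces. Your ``re-contraction through an internal momentum shift'' captures the main term $I_\cV$ but not this extra layer. The genuine two-triple contraction (your ``couple two triples through the potential'') is a separate, smaller error handled directly.
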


With Lemma \ref{lm:cubic}, we can immediately show Prop. \ref{prop:cubic-contr}. 

\begin{proof}[Proof of Proposition \ref{prop:cubic-contr}] From Lemma \ref{lm:cubic} we have 
\[  \begin{split} \label{eq:D1}
  &\frac{\langle \xi_\nu, (\cK + \cV_N^{(H)} + \cC_{N} ) \xi_\nu \rangle}{\| \xi_\nu \|^2} \\
  & \leq \frac 2 N \sum_{v \in P_S} \s_v^2  \sum_{\substack{r \in P_H: \\ r+v \in P_H}}\Big[ r^2 \eta_r + N^\k   \widehat V(r/N^{1-\k})  +  \frac {N^\k} {2N}  \big(\widehat V(\cdot/N^{1-\k}) \ast \eta  \big)_r  \Big] (\eta_r+\eta_{r+v})
 + \cE \end{split} 
 \]
 with $\cE \leq CN^{5\k/2} \cdot \max \{ N^{-\eps}, N^{12\k -7 + 5\eps} \}$. With 
 the scattering equation \eqref{eq:scatt}, we obtain 
\[
\frac{\langle \xi_\nu, (\cK + \cV_N^{(H)} + \cC_{N} ) \xi_\nu \rangle}{\| \xi_\nu \|^2} \leq  \frac1N\sum_{v\in P_S}\s_v^2\sum_{\substack{r \in P_H: \\r+v\in P_H}} N^\k  \widehat V(r/N^{1-\k})(\eta_r+\eta_{r+v})+ \cE' 
\]
with 
\[  \cE' \leq \; \frac 1 N \sum_{v \in P_S} \s_v^2 \sum_{\substack{r \in P_H:\, r+v \in P_H}} N^{3-2\k} \l_{\ell} (\widehat \chi_\ell \ast \widehat f_{N})_r \eta_r  + \cE \,. 
\]
Using $|N^{3-3\k}\l_{\ell}|\leq C$ and $\|\widehat{\chi}_\ell \ast \widehat f_{N}\|\leq C$, we conclude 
\[ \begin{split} 
&\frac{\langle \xi_\nu, (\cK + \cV_N^{(H)} + \cC_{N} ) \xi_\nu \rangle}{\| \xi_\nu \|^2} \\ &\hspace{.3cm} \leq  
\frac1N\sum_{v\in P_S}\s_v^2\sum_{\substack{r \in P_H: \\r+v\in P_H}} N^\k  \widehat V(r/N^{1-\k})(\eta_r+\eta_{r+v}) + C N^{5\k/2} \cdot \max \{ N^{-\eps}, N^{12\k -7 + 5\eps} \}\,. \end{split}  \]
Finally, with (\ref{eq:intVeta}) and the expression \eqref{eq:sL2} for $\s^2_v$, we can extend the sum over $v \in P_S$ to a sum over all $v \in P_L$, without changing the size of the error. This completes the proof of Prop. \ref{prop:cubic-contr}.
\end{proof} 

We still have to show Prop. \ref{prop:Abounds} and Lemma \ref{lm:cubic}.

\subsection{Expectation of the particle number and kinetic energy} 

In this section we prove \eqref{eq:K} and Prop \ref{prop:Abounds}. We start by computing the expectation $\langle \xi_\nu , \cK \xi_\nu \rangle$. We proceed as we did in (\ref{eq:normA0})-(\ref{eq:normA}) to compute $\| \xi_\nu \|^2$. With $\cK a_{r+v}^* a_{-r}^* a_{-v}^* = a_{r+v}^* a_{-r}^* a_{-v}^* (\cK + (r+v)^2 + r^2 + v^2)$ we obtain 
\[ \begin{split} \label{eq:AcKA-1}
\bmedia{\xi_{\nu},\cK \xi_{\nu}} = &\; \sum_{m\geq 1}\frac1{2^m (m-1)!}\frac1{N^m}  \sum_{\substack{ v_1 \in P_S, r_1 \in P_H : \\  r_1 +v_1 \in P_H  } } \cdots \sum_{\substack{ v_m \in P_S, r_m \in P_H : \\  r_m +v_m \in P_H  } }   \theta \big( \{ r_j, v_j \}_{j=1}^{m} \big) \\&\hspace{3cm} \times [r_m^2+v_m^2+(r_m+v_m)^2]\, \prod_{i=1}^m (\eta_{r_i} + \eta_{r_i + v_i})^2 \sigma_{v_i}^2  \, .
\end{split}\]
with the cutoff $\theta$ introduced in (\ref{eq:theta}). Since all terms are positive, we can find an upper bound for $\langle \xi_\nu, \cK \xi_\nu \rangle$ by replacing $\theta(\{r_j,v_j\}_{j=1}^m)$ with $\theta(\{r_j,v_j\}_{j=1}^{m-1})$, removing conditions 
involving momenta with index $m$. Recalling \eqref{eq:normA}, we find 
\[ \begin{split} \bmedia{\xi_{\nu},\cK \xi_{\nu}} &\leq \frac{1}{2N} \sum_{\substack{v\in P_S,r\in P_H:\\r+v\in P_H}}[r^2+v^2+(r+v)^2]  (\eta_r + \eta_{r+v})^2 \s_v^2 \,\|\xi_\nu\|^2 \\
 &\leq \frac{2}{N} \sum_{v \in P_S , r\in P_H} r^2 \eta_r(\eta_r + \eta_{r+v}) \s_v^2 \,\|\xi_\nu\|^2 + \cE \end{split} \]
with (using Lemma \ref{lm:nu-norms} and the assumption $3\k -2 +4\e < 0$) 
\[\begin{split} \frac{\cE}{\|\xi_\nu\|^2} =& \frac2N \sum_{\substack{v\in P_S,r\in P_H : \\ r+v \in P_H}} \hspace{-.3cm} (v ^2+r\cdot v) \eta_r  (\eta_r+\eta_{r+v})\s_v^2 \\&\leq \frac C N( \| \s_S \|_{H^1}^2 \| \eta_H \|^2  +\|\s_S\|\|\eta_H\|\|\s_S\|_{H^1}\|\eta_H\|_{H^1})\leq C N^{4\k -1+\e}\leq C N^{5\k/2-\eps}\,. \end{split}\]
This proves \eqref{eq:K}. In particular, \eqref{eq:K} implies, together with Lemma \ref{lm:nu-norms}, that 
\begin{equation}\label{eq:K-fin}  \frac{\langle \xi_\nu , \cK \xi_\nu \rangle}{\| \xi_\nu \|^2}  \leq C N^{-1} \| \eta_H \|_{H^1}^2 \| \s_S \|^2 \leq C N^{5\k /2} \end{equation} 
which shows (\ref{eq:cKcN}) with $j=1$ in Prop. \ref{prop:Abounds}. 

Analogously, we find 
\[\begin{split}  \langle \xi_\nu, \cK \cN \xi_\nu \rangle & \leq  \sum_{m\geq 1}\frac{3m}{2^m (m-1)!}\frac1{N^m}  \sum_{\substack{ v_1 \in P_S, r_1 \in P_H : \\  r_1 +v_1 \in P_H  } } \cdots \sum_{\substack{ v_m \in P_S, r_m \in P_H : \\  r_m +v_m \in P_H  } }   \theta \big( \{ r_j, v_j \}_{j=1}^{m} \big) \\&\hspace{3cm} \times [r_m^2+v_m^2+(r_m+v_m)^2]\, \prod_{i=1}^m (\eta_{r_i} + \eta_{r_i + v_i})^2 \sigma_{v_i}^2  \, .
\end{split}\]
Writing $m = 1+ (m-1)$, and bounding $\theta( \{ r_j, v_j \}_{j=1}^{m} )$ by $\theta( \{ r_j, v_j \}_{j=1}^{m-2} )$, we obtain 
\[\begin{split}  \langle \xi_\nu, \cK \cN \xi_\nu \rangle \leq  \; & 3 \langle \xi_\nu , \cK \xi_\nu \rangle + 
\frac{3}{4N^2} \sum_{r,r' \in P_H, v,v' \in P_S} [ r^2 + v^2 + (r+v)^2]  \\ &\hspace{5cm} \times (\eta_r + \eta_{r+v})^2 (\eta_{r'} + \eta_{r' + v'})^2 \s_v^2 \s_{v'}^2  \| \xi_\nu \|^2\,. \end{split} \]
With (\ref{eq:K-fin}) and with the bounds for $\| \eta_H \|_{H^1}^2, \| \eta_H \|^2, \| \sigma_S \|^2$ from Lemma \ref{lm:nu-norms}), we find 
\begin{equation}\label{eq:KN2-fin} \frac{\langle \xi_\nu, \cK \cN \xi_\nu \rangle}{\| \xi_\nu \|^2} \leq C N^{5\k /2} \cdot N^{9\k/2 - 2 + \eps} \end{equation} 
which shows (\ref{eq:cKcN}) with $j=2$. 

To show \eqref{eq:xiAxi} we observe that, by \eqref{eq:Adef}, the operator $A_\nu$ only creates particles with momenta in $P_S \cup P_H$ and for each particle with momentum in $P_S$, it creates two particles with momenta in $P_H$. Since $|p| > N^{1-\k-\eps}$ for all $p \in P_H$, we find, by (\ref{eq:K-fin}), 
\[\begin{split} 
\langle \xi_\nu, \cN \xi_\nu \rangle  = &\sum_{p \in P_S \cup P_H} \langle \xi_\nu, a^*_p a_p \xi_\nu \rangle = \frac{3}{2} \sum_{p \in P_H}  \langle \xi_\nu , a^*_p a_p \xi_\nu \rangle \\ &\leq C N^{-2+2\k+2\e} \langle \xi_\nu, \cK \xi_\nu \rangle \leq N^{9\k/2 - 2 + 2\eps} \| \xi_\nu \|^2 \end{split} \]
proving (\ref{eq:xiAxi}) for $j=1$. Analogously, we find 
\[\begin{split} 
\langle \xi_\nu, \cN^2 \xi_\nu \rangle = &\; \sum_{p\in P_S\cup P_H} \langle \cN^{1/2}\xi_\nu, a^*_p a_p \cN^{1/2}\xi_\nu \rangle  \\ \leq \; &\frac32 \sum_{p\in P_H} \langle \cN^{1/2}\xi_\nu, a^*_p a_p \cN^{1/2} \xi_\nu \rangle \leq C N^{-2+2\k+2\eps} \langle \xi_\nu, \cK \cN \xi_\nu \rangle \,.
 \end{split} \]
By (\ref{eq:KN2-fin}), we obtain (\ref{eq:xiAxi}) with $j=2$. This completes the proof of Prop. \ref{prop:Abounds}.

\subsection{Expectation of the cubic term}

The goal of this section is to show \eqref{eq:C}.  From \eqref{eq:cCG}, we have (using the reality of $\eta_p, \g_p, \s_p$) 
\[\begin{split}
	&\langle\xi_\nu,\cC_N\xi_\nu\rangle\\
	&=2\frac{\sqrt{N_0}}{ N}\sum_{m\geq 1}\frac1{m!(m-1)!}\sum_{\substack{p,r \in P_H\\ p+r \in P_S}} N^\k \widehat V(r/N^{1-\k})\, \sigma_{p+r} \g_p\g_r \langle A_\nu^{m} \xi_\nu,  \,a^*_{p+r} a^*_{-p} a^*_{-r}  A_\nu^{m-1}\xi_\nu\rangle.
\end{split}\]
Proceeding as in the previous section, we get 
\[\label{eq:expC-2}\begin{aligned} 
& \bmedia{\xi_{\nu},\cC_N\xi_{\nu}}\\
&= 2 \sqrt{\frac{N_0}{N}}\,\sum_{m\geq 1}\frac1{2^{m-1} (m-1)!}\frac1{N^m}\sum_{\substack{ v_1 \in P_S, r_1 \in P_H : \\  r_1 +v_1 \in P_H  } } \cdots  \sum_{\substack{ v_m \in P_S, r_m \in P_H : \\  r_m +v_m \in P_H  } } \hskip -0.5cm  \theta\big( \{r_j, v_j \}_{j=1}^{m} \big)  \\ 
& \hskip 0.5cm \times N^\k  \widehat V(r_m/N^{1-\k})\,\big(\eta_{r_m} +\eta_{r_m+v_m}\big)\g_{r_m}\g_{r_m+v_m} \s^2_{v_m}\,  \prod_{i=1}^{m-1} (\eta_{r_i} + \eta_{r_i + v_i})^2 \s_{v_i}^2  .
\end{aligned}\]
To reconstruct the norm $\| \xi_\nu \|^2$ on the r.h.s. we need to free the momenta with index $m$. To this end, we recall the defintion (\ref{eq:theta}) to write 
\be \label{eq:remove}
\theta\big( \{r_j, v_j \}_{j=1}^{m} \big) = \theta\big( \{r_j, v_j \}_{j=1}^{m-1} \big)\, \theta_m\big( \{ r_j, v_j \}_{j=1}^{m} \big)
\ee
with 
\[ \label{eq:thetam}
 \theta_m\big( \{r_j, v_j \}_{j=1}^{m} \big)=  \prod_{i,j=1}^{m-1}\prod_{\substack{p_i,p_j,p_m:\\p_\ell\in\{-r_\ell,r_\ell+v_\ell\}}}\d_{p_i\neq -p_j+v_m}\d_{-p_m+v_i\neq p_j}
\]
collecting all conditions involving $\{r_m, v_m\}$. Writing $\theta_m = 1 + [ \theta_m - 1]$, we split $\langle \xi_\nu, \cC_N \xi_\nu \rangle = I_\cC + J_{\cC}$ with (recall the expression (\ref{eq:normA}) for $\| \xi_\nu \|^2$)  
\begin{equation*} \label{eq:I-cC}
I_\cC = \; 2\,\sqrt{\frac{N_0}{N}}\,\sum_{\substack{ v \in P_S,\, r \in P_H: \\r+v \in P_H}}  N^{\k-1}\widehat V(r/N^{1-\k})\,\big(\eta_{r}+\eta_{r+v}\big)\g_{r}\g_{r+v} \s^2_{v}\, \| \xi_\nu\|^2
\end{equation*} 
and
\[\label{eq:JcC}\begin{aligned}
J_\cC 	=\; & 2\sqrt{\frac{N_0}{N}}\sum_{m\geq 1}\frac1{2^{m-1} (m-1)!}\frac1{N^m}  \sum_{\substack{ v_1 \in P_S, r_1 \in P_H : \\  r_1 +v_1 \in P_H  } } \cdots  \sum_{\substack{ v_m \in P_S, r_m \in P_H : \\  r_m +v_m \in P_H  } } \theta\big( \{r_j, v_j \}_{j=1}^{m-1} \big) \\
&   \times \Big[ \theta_m\big( \{r_j, v_j \}_{j=1}^{m} \big) -1 \Big]    N^\k\widehat V(r_m/N^{1-\k})\,\big(\eta_{r_m} +\eta_{r_m+v_m}\big)\g_{r_m}\g_{r_m+v_m} \s^2_{v_m}  \\ &\hspace{8cm} \times  
 \prod_{i=1}^{m-1} (\eta_{r_i} + \eta_{r_i + v_i})^2 \s_{v_i}^2 \,.
\end{aligned}\]
With $|\sqrt{N_0/N} - 1| \leq C \| \s_L \|^2/N$ and $|\g_r \g_{r+v} - 1| \leq C N^{2\kappa} / |r|^4$ for all $r \in P_H, v\in P_S$, we obtain (using (\ref{eq:intVeta}) and the assumption $3\kappa - 2 + 4\eps < 0$) that 
\begin{equation}\label{eq:IC-fin}  \frac{I_\cC}{\| \xi_\nu \|^2} \leq \frac{2}{N} \sum_{\substack{ v \in P_S,\, r \in P_H: \\r+v \in P_H}}  N^{\k}\widehat V(r/N^{1-\k})\,\big(\eta_{r}+\eta_{r+v}\big) \s^2_{v}  + C N^{5\k /2 - \eps}\, . \end{equation} 

To complete the proof of (\ref{eq:C}), we focus now on the error term $J_\cC$. We observe that  
\be \begin{split} \label{eq:thetam-one}
|  \theta_m\big( \{r_j, v_j \}_{j=1}^{m} \big) -1 |  \leq& \sum_{j=1}^{m-1} \Big[ \d_{v_j , v_m} + \sum_{ \substack{ p_m \in \{ -r_m, r_m+v_m \} \\p_j \in \{ -r_j, r_j+v_j\} }}\d_{p_m, p_j} \Big]  \\
& + \sum_{\substack{j,k=1\\j\neq k}}^{m-1} 
\Big[ \sum_{ \substack{ p_j \in \{ -r_j, r_j+v_j \} \\ p_k \in \{ -r_k, r_k+v_k\} }} \d_{v_m, p_j+p_k}
+  \sum_{\substack{ p_m \in \{ -r_m, r_m+v_m \} \\ p_j \in \{ -r_j, r_j+v_j \} }}\d_{p_m, -p_j+v_k}   \Big] \,.
\end{split}\ee
We bound $|J_\cC| \leq \text{X}_1+ \text{X}_2$, with $\text{X}_1$ denoting the contribution arising from the first term on the r.h.s. of (\ref{eq:thetam-one}) (this term involves two indices, $m$ and $j$), and $\text{X}_2$ indicating the contribution from the second term on the r.h.s. of (\ref{eq:thetam-one}) (this term involves three indices, $m,j,k$). We can estimate
\[ \begin{split} 
\text{X}_1 \leq \; &C \sum_{m\geq 2}\frac1{2^{m-2} (m-2)!}\frac1{N^m}\sum_{\substack{ v_1 \in P_S, r_1 \in P_H : \\  r_1 +v_1 \in P_H  } } \cdots  \sum_{\substack{ v_m \in P_S, r_m \in P_H : \\  r_m +v_m \in P_H  } }  \theta\big( \{r_j, v_j \}_{j=1}^{m-1} \big) \\ &\times N^\k |\widehat V(r_m/N^{1-\k})| \,\big| \eta_{r_m} +\eta_{r_m+v_m}\big| |\g_{r_m}| |\g_{r_m+v_m}| \s^2_{v_m}  \prod_{i=1}^{m-1} (\eta_{r_i} + \eta_{r_i + v_i})^2 \s_{v_i}^2  \\ & \times \Big[ \d_{v_m, v_{m-1}} +\sum_{\substack{p_{m-1},p_m:\\p_\ell\in \{-r_\ell,r_\ell+v_\ell\}}}\hspace{-0.3cm}\d_{p_m, p_{m-1}}\Big]  \,.
\end{split} \]
With $\theta\big( \{r_j, v_j \}_{j=1}^{m-1} \big) \leq  \theta\big( \{r_j, v_j \}_{j=1}^{m-2} \big)$, we reconstruct $\| \xi_\nu \|^2$. Since $\| \g_H \|_\infty \leq C$, we end up with 
\[ \begin{split} 
\frac{\text{X}_1}{\| \xi_\nu \|^2}  \leq\; & \frac{C}{N^2} \sum_{r , r' \in P_H, v,v' \in P_S} N^\kappa |\widehat{V} (r/N^{1-\k}) |\eta_r + \eta_{r+v}| |\eta_{r'} + \eta_{r'+v'}|^2 \sigma_v^2 \sigma_{v'}^2 \\ &\hspace{6cm} \times \Big[ \delta_{v,v'} + \sum_{\substack{p \in \{-r, r+v\} \\ p' \in \{ -r', r' +v' \}}} \delta_{p,p'} \Big] \\
\leq \; &C N^{2\k-2} \| \s_S \|_\infty^2 \| \s_S \|^2 \| \eta_H \|^2 \sum_{r \in P_H} \frac{|\widehat{V} (r/N^{1-\k})|}{r^2} + C N^{4\k -2} \| \s_S \|^4 \sum_{r \in P_H} |r|^{-6} \\ \leq \; &C N^{11\k/2-2+2\e} + C N^{10\k-5+3\e} \leq C N^{5\k/2-\eps}  
\end{split} \]
where we used Lemma \ref{lm:nu-norms}, (\ref{eq:intVeta}), the assumption $3\k -2 + 4\e < 0$ and the remark that $|\eta_{r+v}| \leq CN^\kappa |r|^{-2}$, for all $r \in P_H$ and $v \in P_S$. We can proceed similarly to estimate $X_2$.  In the second term on the r.h.s. of (\ref{eq:thetam-one}), we have to sum over $(m-1) (m-2)/2$ pairs of indices $j,k$. With $\theta\big( \{r_j, v_j \}_{j=1}^{m-1} \big) \leq  \theta\big( \{r_j, v_j \}_{j=1}^{m-3} \big)$ and again with Lemma \ref{lm:nu-norms} and(\ref{eq:intVeta}), we arrive at 
\[ \begin{split} \frac{\text{X}_2}{\| \xi_\nu \|^2}  \leq\; & \frac{C}{N^3} \sum_{\substack{r , r' , r'' \in P_H, \\ v,v', v'' \in P_S}} N^\kappa |\widehat{V} (r/N^{1-\k}) |\eta_r + \eta_{r+v}| |\eta_{r'} + \eta_{r'+v'}|^2 |\eta_{r''} + \eta_{r''+v''}|^2 \sigma_v^2 \sigma_{v'}^2 \sigma_{v''}^2 \\ &\hspace{4.5cm} \times \Big[ \sum_{ \substack{p \in \{ -r , r+v \}, \\ p' \in \{ -r', r' + v'\}}} \delta_{p,-p'+v''}  + \sum_{\substack{p' \in \{-r', r'+v'\}, \\ p'' \in \{-r'', r'' + v''\}}} \delta_{v,p'+p''} \Big] \\
\leq \; &C N^{4\k - 3} \| \s_S \|^6 \| \eta_H \|^2  \sum_{r \in P_H} |r|^{-6} + C N^{6\k -3} \| \s_S \|^6  \sum_{r \in P_H} \frac{|\widehat{V} (r/N^{1-\k})|}{r^2} \sum_{r' \in P_H} |r'|^{-8}  \\ \leq \; &C N^{29\k/2 -7 + 5\eps} \leq C N^{5\k/2} \cdot N^{12 \k -7 + 5\eps}  \,. 
\end{split} \]
Thus, $|J_\cC| / \| \xi_\nu \|^2 \leq N^{5\k/2} \cdot \max \{ N^{-\eps} , N^{12\k -7 + 5\eps} \}$. With (\ref{eq:IC-fin}), this implies (\ref{eq:C}).


\subsection{Expectation of the quartic term}

In this section we show the bound \eqref{eq:V} for the expectation of $\cV_N^{(H)}$.
Pairing momenta in $P_S$, similarly as we did in (\ref{eq:normexpA2}) and in the previous subsections, we obtain 
\begin{equation}\label{eq:VH0} \begin{aligned}
	\bmedia{\xi_\nu,\cV_N^{(H)} \xi_\nu}=&\frac1{2N}\sum_{m\geq 1}\frac1{m!}\frac1{N^m} \sum_{\substack{v_1 \in P_S,\, r_1,  \tl r_1 \in P_H:\\ r_1+v_1,\, \tl r_1 + v_1 \in P_H}} \cdots\sum_{\substack{v_m \in P_S,\, r_m,  \tl r_m \in P_H:\\ r_m+v_m,\, \tl r_m + v_m \in P_H}}  \\
	&\times \theta\big( \{r_j, v_j \}_{j=1}^{m} \big)  \theta\big( \{\tl r_j, v_j \}_{j=1}^{m} \big) \prod_{i=1}^m\eta_{r_i}\eta_{\tilde{r}_i}\s_{v_i}^2\hspace{-0.3cm}\sum_{\substack{r\in \L^*,  p,q \in P_H: \\p+r,\,q+r\in P_H}}\hspace{-0.2cm} N^\k \widehat V(r/N^{1-\k})\\
	&\times\bmedia{\O,A_{r_1,v_1}\dots A_{r_m,v_m}\,a^*_{p+r}a^*_{q} a_p a_{q+r} A^*_{\tilde{r}_1,v_1}\dots A^*_{\tilde{r}_m,v_m}\O}
\end{aligned}\end{equation} 
where we use the notation $A_{r_i, v_i} = a_{r_i+v_i} a_{-r_i}$ that was already introduced in (\ref{eq:normexpA2}). Next we observe that, because of the cutoffs $\theta(\{r_j,v_j\}_{j=1}^m)$ and 
$\theta(\{\tl r_j,v_j\}_{j=1}^m)$, at most two indices $i,j \in \{ 1, \dots , m \}$ can be involved in contractions with the observable $a_{p+r}^* a_q^* a_p a_{q+r}$. We distinguish two possible cases. 
\begin{enumerate}[1)]
\item there exists an index $i \in \{ 1,\dots,m \}$ such that $a_p$, $a_{q+r}$ are contracted with $A^*_{\tl r_i,v_i}$ and $a^*_{q}$, $a^*_{p+r}$ are contracted with $A_{r_i,v_i}$
\item there are two indices $i \neq j \in \{ 1, \dots , m \}$ such that the operators $a_p$ and $a_{q+r}$ are contracted with $a^*_{\tl p_i}$ and $a^*_{\tl p_j}$ for some $\tl p_\ell \in \{-\tl r_\ell, \tl r_\ell + v_\ell\},$ $\ell=i,j$ and the operators $a^*_q, a^*_{p+r}$ are contracted with $a_{p_i}, a_{p_j}$, with $p_\ell\in\{-r_\ell,r_\ell+v_\ell\}, \ell=i,j$. Note that in this case the operators $a^*_{-\tl p_i+v_i},$ $a^*_{-\tl p_j+v_j}$ have to be contracted with $a_{-p_i+v_i},$ $a_{-p_j+v_j}.$
\end{enumerate}
We denote with $\text{V}_1$ and $\text{V}_2$ the contributions to $\bmedia{\xi_\nu,\cV_N^{(H)} \xi_\nu}$ arising from the two cases described above. Let us first consider $\text{V}_1$. There are $m$ choices (all leading to the same contribution) for the index $i \in \{ 1, \dots , m \}$ labelling momenta to be contracted with the observable. Let us fix $i = m$. Then we have $p=\tl p_m, q+r=-\tl p_m+v_m$ with $\tl p_m \in\{-\tl r_m,\tl r_m +v_m \}$, and $p+r=p_m, q=-p_m+v_m$ with $p_m\in \{-r_m,r_m+v_m\}$. Note that the choice of $p$ and $p+r$ also determines $q$ and $q+r$, since we always have $q = v_m - (p+r)$. The presence of the cutoffs immediately implies that $A_{r_j,v_j}$ is fully contracted with $A^*_{\tl{r}_j , v_j}$, for all $j \not = m$. We find 
\begin{equation}\label{eq:V11} 
\begin{split}
&\langle \xi_{\nu}, \text{V}_1 \xi_{\nu} \rangle \\
&=\frac1{2N}\sum_{m\geq 1}\frac1{(m-1)!}\frac1{N^m} \;\sum_{\substack{v_1 \in P_S\,, r_1, \tl r_1 \in P_H: \\ r_1 + v_1 ,\, \tl r_1 + v_1 \in P_H}} \cdots 
\sum_{\substack{v_{m} \in P_S\,, r_{m}, \tl r_{m} \in P_H: \\ r_{m} + v_{m} ,\, \tl r_{m} + v_{m} \in P_H}}\hskip -0.5cm\theta\big( \{r_j, v_j \}_{j=1}^{m} \big) \theta\big( \{\tl r_j, v_j \}_{j=1}^{m} \big)  \\[6pt]
&\, \times \hskip -0.1cm \prod_{j=1}^{m-1}\hskip -0.1cm\eta_{r_j}\eta_{\tl r_j}(\d_{r_j,\tl r_j}+\d_{-r_j,\tl r_j+v_j})\s_{v_j}^2\, \eta_{r_m} \eta_{\tl r_m}\s_{v_m}^2\hspace{-.5cm}
\sum_{\substack{r\in \L^*,\,p\in P_H:\\p-v_m,p+r\in P_H}} \hskip -0.7cm N^\k\widehat{V}(r/N^{1-\k}) \hskip -0.5cm \sum_{\substack{p_m\in \{-r_m,r_m+v_m\}\\\tl p_m\in \{-\tl r_m,\tl r_m+v_m\}}}\hspace{-0.8cm}\d_{p,p_m}\d_{p+r,\tl p_m} \,.
\end{split}\end{equation} 
Since here (in contrast to the previous subsections) the contraction does not fix $\tl{r}_m$ to be either $r_m$ or $-(r_m + v_m)$, we cannot erase the cutoff $\theta (\{ \tl{r}_j , v_j \}_{j=1}^m)$. With the decomposition (\ref{eq:remove}), we can replace, on the r.h.s. of (\ref{eq:V11}), 
\[ \theta\big( \{r_j, v_j \}_{j=1}^{m} \big) \theta\big( \{\tl r_j, v_j \}_{j=1}^{m} \big) = \theta \big( \{r_j, v_j \}_{j=1}^{m-1} \big)  \theta_m \big( \{ r_j, v_j \}_{j=1}^{m} \big) \theta_m  \big( \{\tl r_j, v_j \}_{j=1}^{m} \big). \]
Writing 
\[ \theta_m \big( \{\tl r_j, v_j \}_{j=1}^{m} \big) \theta_m  \big( \{\tl r_j, v_j \}_{j=1}^{m} \big) = 1 + \Big[ \theta_m \big( \{ r_j, v_j \}_{j=1}^{m} \big) \theta_m  \big( \{\tl r_j, v_j \}_{j=1}^{m} \big) - 1 \Big] \] 
we split (similarly as we did in the last subsection) $\langle \xi_\nu, \text{V}_1 \xi_\nu \rangle = I_\cV + J_\cV$, with  
\be \label{eq:I-cV-2}
I_\cV = \frac 1 {N^2} \sum_{r \in \L^*}  \sum_{\substack{ v\in P_S, p \in P_H:\\  p+r,p-v,\, p+r-v \in P_H}} \hskip -0.5cm  N^\k \widehat V(r/N^{1-\k})  \eta_{p} \big( \eta_{p+r} +\eta_{p+r-v} \big)  \s_{v}^2 \, \| \xi_\nu \|^2 
\ee
and 
\begin{equation}\label{eq:JV0} \begin{split}
J_\cV =&\frac1{2N}\sum_{m\geq 1}\frac1{2^{m-1} (m-1)!}\frac1{N^m}\sum_{\substack{v_1 \in P_S\,, r_1  \in P_H: \\ r_1 + v_1 \in P_H}} \cdots \sum_{\substack{v_{m-1} \in P_S\,, r_{m-1}  \in P_H: \\ r_{m-1} + v_{m-1} \in P_H}}  \sum_{\substack{v_{m} \in P_S \,, r_{m}, \tl r_{m} \in P_H: \\ r_{m} + v_{m} ,\, \tl r_{m} +  v_{m} \in P_H}}  \\
&  \times \theta\big( \{r_j, v_j \}_{j=1}^{m-1} \big) \big[ \theta_m(\{r_j, v_j\}_{j=1}^{m})\theta_m(\{ r^\sharp_j ,v_j\}_{j=1}^m)-1\big] \prod_{i=1}^{m-1} (\eta_{r_i} + \eta_{r_i + v_i})^2 \s_{v_i}^2 \, \\
&\times  \eta_{r_m} \eta_{\tl r_m} \s_{v_m}^2 \sum_{\substack{r\in \L^*}} N^\k \widehat V(r/N^{1-\k})\hspace{-0.5cm}  \sum_{\substack{ p \in P_H :\, p+r,\\ p-v_m,\, p+r-v_m\in P_H}} \sum_{\substack{p_m\in \{-r_m,r_m+v_m\}\\\tl p_m\in \{-\tl r_m,\tl r_m+v_m\}}}\hspace{-0.3cm}\d_{p,p_m}\d_{p+r,\tl p_m} 
\end{split}\end{equation} 
where $r^\sharp_j = r_j$ for $j=1,\dots, m-1$ and $r^\sharp_m = \tl{r}_m$ in the argument of $\theta_m$. Observing that, with Lemma \ref{lm:nu-norms} and (\ref{eq:intVeta}), 
\[ \begin{split} \frac 1 {N^2} \sum_{r \in \L^*} & \sum_{\substack{v \in P_S, p \in \L^* : \\ p\in P_H^c \text{ or } p-v \in P_H^c}}   N^\k |\widehat V(r/N^{1-\k})| |\eta_{p}|  \big( |\eta_{p+r}| + |\eta_{p+r-v}| \big)  \s_{v}^2  \\ &\leq C N^{-2+2\k} \| \s_S \|^2 \Big[ \sum_{|p| \leq N^{1-\k -\eps}} |p|^{-2} \Big]  \sup_{p\in \L^*} \sum_{r \in \L^*}   |\widehat V(r/N^{1-\k})| |\eta_{p+r}| \leq C N^{5\k/2 - \eps} \end{split} \]
we conclude from (\ref{eq:I-cV-2}) (switching $p+r \to p$ and $v \to -v$) that 
\begin{equation}\label{eq:IcV-fin} \frac{I_\cV}{\| \xi_\nu \|^2}  \leq \frac 1 {N^2} \sum_{\substack{v \in P_S, p  \in P_H: \\ p+v \in P_H}} \, \big( N^\k \widehat V(\cdot/N^{1-\k}) \ast \eta \big)_p  (\eta_p +\eta_{p+v})  \s_v^2  + C N^{5\k/2-\eps}. \end{equation} 

Let us now focus on the term $J_\cV$. With 
\begin{equation}\label{eq:W-dec} \begin{split} 
\Big| \theta_m \big(  &\{r_j, v_j \}_{j=1}^{m} \big) \theta_m\big( \{\tl r_j, v_j \}_{j=1}^{m} \big) -1 \Big| \\ \leq
\; &\sum_{j=1}^{m-1} \delta_{v_m, v_j} + \sum_{j=1}^{m-1}  \Big[ \sum_{\substack{p_j \in \{ - r_j, r_j + v_j \} \\ p_m \in \{ - r_m , r_m + v_m \}}} \delta_{p_m, p_j} + \sum_{\substack{p_j \in \{ - r_j, r_j + v_j \} \\ \tl{p}_m \in \{ - \tl{r}_m , \tl{r}_m + v_m \}}} \delta_{\tl{p}_m, p_j} \Big]  \\ &+ 
 \sum_{\substack{j,k=1 \\ j \not = k}}^{m-1}  \Big[ \sum_{\substack{p_j \in \{ - r_j, r_j + v_j \} \\ p_m \in \{ - r_m , r_m + v_m \}}} \delta_{p_m, -p_j + v_k}  + \sum_{\substack{p_j \in \{ - r_j, r_j + v_j \} \\ \tl{p}_m \in \{ - \tl{r}_m , \tl{r}_m + v_m \}}} \delta_{\tl{p}_m, -p_j + v_k} \Big] \\ &+ 
 \sum_{\substack{j,k=1 \\ j \not = k}}^{m-1}  \sum_{\substack{p_j \in \{ - r_j, r_j + v_j \} \\ p_k \in \{ - r_k , r_k + v_k \}}} \delta_{v_m , p_j + p_k} \end{split} \end{equation} 
we can bound $|J_\cV| \leq \text{W}_1 + \text{W}_2 + \text{W}_3 + \text{W}_4$, with $W_\ell$ indicating the contribution to (\ref{eq:JV0}) arising from the $\ell$-th term, on the r.h.s. of (\ref{eq:W-dec}). 

The term $\text{W}_1$ contains the sum of $(m-1)$ identical contributions, corresponding to $j \in \{ 1, \dots , m-1 \}$ in the first term on the r.h.s. of (\ref{eq:W-dec}). Let us fix $j=m-1$. Estimating $\theta\big( \{r_j, v_j \}_{j=1}^{m-1} \big) \leq \theta\big( \{r_j, v_j \}_{j=1}^{m-2} \big)$ and reconstructing the expression (\ref{eq:normA}) for $\| \xi_\nu \|^2$, we can bound (the momenta $r',r'',\tilde{r}''$ correspond to $r_{m-1}, r_m, \tilde{r}_m$)
\[ 
\frac{\text{W}_1}{\| \xi_\nu \|^2} \leq C N^{-3} \sum_{r \in \L^*} N^\kappa |\widehat{V} (r/N^{1-\kappa})|    \hspace{-.3cm} \sum_{\substack{r', r'', \tl{r}'' \in P_H \\ v'  \in P_S}} \hspace{-.3cm}   (\eta_{r'} + \eta_{r'+v'})^2 |\eta_{r''}| |\eta_{\tl{r}''}| \s_{v'}^4 \hspace{-.4cm} \sum_{\substack{p'' \in \{ -r'', r''+v' \} \\ \tl{p}'' \in \{ -\tl{r}'', \tl{r}'' +v' \}}}  \hspace{-.5cm}  \delta_{p''+r ,\tl{p}''}.
\]
With Lemma \ref{lm:nu-norms} and with the estimate 
\begin{equation} \label{eq:Vetaeta}  \sup_{v \in P_S \cup \{ 0 \}} \frac{1}{N^2} \sum_{\substack{r \in \L^*, q \in P_H : \\ q-r \in P_H}} N^\kappa |\widehat{V} (r/N^{1-\k})| |\eta_{q-v} | |\eta_{q-r}|  \leq C N^{\k} \end{equation} 
which can be shown similarly to (\ref{eq:intVeta}) (using $V \in L^q (\bR^3)$, for some $q > 3/2$) we find 
\begin{equation}\label{eq:W1-fin} \frac{\text{W}_1}{\| \xi_\nu \|^2} \leq C N^{\kappa -1} \| \eta_H \|^2 \| \s_S \|_\infty^2 \| \s_S \|^2 \leq C N^{11\k/2 -2 + 2\e} \leq C N^{5\kappa/2-\eps}  \end{equation}
since $3\kappa - 2 + 4\eps < 0$.  Analogously, we bound, with (\ref{eq:intVeta}) and Lemma \ref{lm:nu-norms}
\begin{equation} \label{eq:W2-fin} \begin{split} \frac{\text{W}_2}{\| \xi_\nu \|^2} \leq \; &C N^{-3} \sum_{r \in \L^*} N^\kappa |\widehat{V} (r/N^{1-\kappa})|    \hspace{-.3cm} \sum_{\substack{r', r'', \tl{r}'' \in P_H \\ v', v''  \in P_S}} \hspace{-.3cm}   (\eta_{r'} + \eta_{r'+v'})^2 |\eta_{r''}| |\eta_{\tl{r}''}| \s_{v'}^2 \s_{v''}^2 \hspace{-.3cm}\\ &\times  \sum_{\substack{p'' \in \{ -r'', r''+v'' \} \\ \tl{p}'' \in \{ -\tl{r}'', \tl{r}'' +v'' \}}}  \hspace{-.3cm}  \delta_{p'' + r , \tl{p}''} \Big[ \sum_{\substack{p' \in 
\{ -r', r'+v' \} \\ p'' \in \{ -r'', r'' +v'' \}}}  \delta_{p', p''}  + \sum_{\substack{p' \in 
\{ -r', r'+v' \} \\ \tl{p}''  \in \{ -\tl{r}'', \tl{r}'' +v'' \}}}  \delta_{p', \tl{p}''}  \Big] 
\\ \leq \; &C N^{-3+5\k}  \| \s_S \|^4 \Big[\sum_{r' \in P_H} |r'|^{-6}\Big] \Big[ \sup_{r' \in \L^*} \sum_{r \in \L^* , r \not = -r'} \frac{|\widehat{V} (r/N^{1-\k})|}{|r+r'|^{2}} \Big] \leq C N^{5\k/2-\eps} .
\end{split} \end{equation} 
As for $\text{W}_3$, there are $(m-1)(m-2)$ possible choices of the indices $j,k$ in (\ref{eq:W-dec}), all leading to the same contribution. We fix $j = m-1$ and $k =m-2$. Estimating now $\theta\big( \{r_j, v_j \}_{j=1}^{m-1} \big) \leq \theta\big( \{r_j, v_j \}_{j=1}^{m-3} \big)$, we obtain, with (\ref{eq:intVeta}), 
\begin{equation}\label{eq:W3-fin} \begin{split} \frac{\text{W}_3}{\| \xi_\nu \|^2} \leq \; &C N^{-4} \sum_{r \in \L^*} N^\kappa |\widehat{V} (r/N^{1-\kappa})|    \\  &\times \sum_{\substack{r', r'', r''', \tl{r}''' \in P_H \\ v', v'', v'''  \in P_S}} \hspace{-.3cm}   (\eta_{r'} + \eta_{r'+v'})^2  (\eta_{r''} + \eta_{r''+v''})^2 |\eta_{r'''}| |\eta_{\tl{r}'''}| \s_{v'}^2 \s_{v''}^2 \s_{v'''}^2 \hspace{-.5cm}\\ &\times  \sum_{\substack{p''' \in \{ -r''', r'''+ v''' \} \\ \tl{p}''' \in \{ -\tl{r}'''  , \tl{r}''' + v''' \}}}  \hspace{-.5cm}  \delta_{p'''+r , \tl{p}'''}  \Big[ \sum_{\substack{p' \in 
\{ -r', r'+v' \} \\ p''' \in \{ -r''', r''' +v''' \}}} \hspace{-.5cm}  \delta_{p''', -p'+v''}  + \sum_{\substack{p' \in 
\{ -r', r'+v' \} \\ \tl{p}''' \in \{ -\tl{r}''', \tl{r}''' +v''' \}}} \hspace{-.3cm}  \delta_{\tl{p}''' , - p' + v''}  \Big] 
\\ \leq \; &C N^{-3+ 4\k} \| \s_S \|^6 \| \eta_H \|^2 \sum_{r' \in P_H} |r'|^{-6} \leq C N^{5\k/2} \cdot N^{12\k - 7 +4\eps} .
\end{split} \end{equation} 
Analogously, with Lemma \ref{lm:nu-norms} and (\ref{eq:Vetaeta}),  we find 
\[ \begin{split} \frac{\text{W}_4}{\| \xi_\nu \|^2} \leq \; &C N^{-4} \sum_{r \in \L^*} N^\kappa |\widehat{V} (r/N^{1-\kappa})|    \\  &\times \sum_{\substack{r', r'', r''', \tl{r}''' \in P_H \\ v', v'', v'''  \in P_S}} \hspace{-.3cm}   (\eta_{r'} + \eta_{r'+v'})^2  (\eta_{r''} + \eta_{r''+v''})^2 |\eta_{r'''}| |\eta_{\tl{r}'''} | \s_{v'}^2 \s_{v''}^2 \s_{v'''}^2 \hspace{-.5cm}\\ &\times  \sum_{\substack{p''' \in \{ -r''', r'''+ v''' \} \\ \tl{p}''' \in \{ -\tl{r}''' , \tl{r}'''+ v''' \}}}  \hspace{-.5cm}  \delta_{p''' + r , \tl{p}''' }  \sum_{\substack{p' \in 
\{ -r', r'+v' \} \\ p'' \in \{ -r'', r'' +v'' \}}} \hspace{-.5cm} \delta_{v''', p' +p''} 
\\ \leq \; &C N^{-2+5\k}\| \s_S \|^6 \sum_{r' \in P_H} |r'|^{-8} \leq C N^{5\k/2} \cdot N^{12\k -7 + 5\eps} .
\end{split} \]
Together with (\ref{eq:W1-fin}), (\ref{eq:W2-fin}), (\ref{eq:W3-fin}), we conclude that 
\begin{equation}\label{eq:JV-fin} | J_\cV | \leq C N^{5\k/2} \cdot \max \{ N^{-\eps} , N^{12\k - 7 + 5\eps} \}. \end{equation} 

Finally, we consider the term $\text{V}_2$, associated with the second case listed after (\ref{eq:VH0}). 
We fix $i =m$ and $j=m-1$ and we consider all possible contractions of $a_p$ with $a^*_{\tl p_m}$, of  $a_{q+r}$ with $a^*_{\tl{p}_{m-1}}$ and of $a_q^*, a_{p+r}^*$ with $a_{p_m}, a_{p_{m-1}}$, where $\tl{p}_\ell  \in \{-\tl r_\ell,\tl r_\ell+v_\ell\}$ and $p_\ell \in \{ -r_\ell , r_\ell + v_\ell \}$, for $\ell=m,m-1$. We obtain 
\[\begin{aligned} 
	&\bmedia{\xi_\nu,\text{V}_2\, \xi_\nu} \\
& = \frac{1}{2N} \sum_{m\geq 2}\frac1{(m-2)!}\frac1{N^m} \sum_{\substack{v_1 \in P_S\,, r_1, \tl r_1 \in P_H: \\ r_1 + v_1 ,\, \tl r_1 +  v_1 \in P_H}} \cdots 
\sum_{\substack{v_{m} \in P_S\,, r_{m}, \tl r_{m} \in P_H: \\ r_{m} + v_{m} ,\, \tl r_{m} +  v_{m} \in P_H}}\hskip -0.5cm\theta\big( \{r_j, v_j \}_{j=1}^{m} \big) \theta\big( \{\tl r_j, v_j \}_{j=1}^{m} \big)  \\
& \hskip 0.5cm \times \prod_{i=1}^{m-2}\eta_{r_i}\eta_{\tilde{r}_i} \big( \d_{\tl r_i,r_i}+\d_{-\tl r_i,r_i+v_i}\big)\s_{v_i}^2 \, \prod_{j=m,m-1} \eta_{r_j} \eta_{\tl r_j}  \s_{v_j}^2\\
		&\hspace{0.5cm}\times \sum_{\substack{r \in \L^*,\, p,q \in P_H: \\ p-r, q-r  \in P_H }} N^\k \widehat V(r/N^{1-\k})   \sum_{\substack{\tl p_\ell \in \{ -\tl r_\ell, \tl r_\ell+v_\ell\}  \\ \ell=m-1,m } } \d_{p, \tl p_m} \d_{q+r, \tl p_{m-1}} \\
& \hspace{0.5cm}  \times  \hskip -0.5cm 
\sum_{\substack{ p_\ell \in \{ - r_\ell,  r_\ell+v_\ell\}   \\ \ell=m-1,m } }  \big(\d_{q, p_m} \d_{p+r,  p_{m-1}} +  \d_{q,  p_{m-1}} \d_{p+r, p_{m}} \big)  \big( \d_{\tl p_m, p_m} + \d_{-\tl p_m+v_m, -p_{m-1}+v_{m-1}} \big) .
\end{aligned}\]
Estimating $\theta\big( \{r_j, v_j \}_{j=1}^{m} \big) \theta\big( \{\tl r_j, v_j \}_{j=1}^{m} \big) \leq \theta \big( \{r_j, v_j \}_{j=1}^{m-2} \big)$ and using Lemma \ref{lm:nu-norms} and the condition $3\k - 2 + 4\eps < 0$, we find 
\[\begin{aligned} \label{eq:Vb}
\frac{|\bmedia{\xi_\nu,\text{V}_2\, \xi_\nu}|}{\| \xi_\nu \|^2} \\ \leq \; & C N^{-3} \sum_{r \in \L^*} N^\kappa \widehat{V} (r/N^{1-\k}) \sum_{\substack{r',\tl{r}',r'', \tl{r}'' \in P_H \\ v',v'' \in P_S}} |\eta_{r'}| |\eta_{\tl{r}'}| |\eta_{r''}| |\eta_{\tl{r}''}| \s_{v'}^2 \s_{v''}^2 \\ &\times 
\sum_{\substack{p' \in \{ -r' , r' + v' \} \\ p'' \in \{ -r'' , r'' + v'' \}}} 
\sum_{\substack{\tl{p}' \in \{ -\tl{r}' , \tl{r}' + v' \} \\ \tl{p}'' \in \{ -\tl{r}'' , \tl{r}'' + v'' \} }} (\delta_{\tl{p}', p''+r} \delta_{\tl{p}'' + r, p'} + \delta_{\tl{p}', p'+r} \delta_{\tl{p}'' + r, p''}) \\ &\hspace{6cm} \times  (\delta_{\tl{p}'', p''} + \delta_{-\tl{p}'' + v'', -p'+v'}) \\ 
\leq  \; &C N^{-3+\k} \| \eta_H \|^4 \| \s_S \|^4 \leq C N^{10\k - 5 + 2\eps} \leq C N^{5\k/2 -\eps}\,.
\end{aligned} \]
With (\ref{eq:IcV-fin}) and (\ref{eq:JV-fin}), we obtain (\ref{eq:V}).


\appendix

\section{Proof of Proposition \ref{prop:localization}}\label{app:localization}

The proof of Prop. \ref{prop:localization} is based on standard results, which are collected in this section for the reader convenience. In particular we follow \cite{R} (see Lemma 2.1.3) and \cite[Sec. 12]{YY} for  Lemma \ref{lm:Dir} and \ref{lm:Ltilde} and the proof of Lemma 3.3.2 in  \cite{Aaen} for Lemma \ref{lm:GCtoC} (control on the second moment of $\cN$ allows us to avoid  the condition imposed in \cite{Aaen} that $V$ is strictly positive around the origin).

The proof of Prop. \ref{prop:localization} is divided in three parts. First, we show how to switch from periodic boundary conditions to Dirichlet boundary conditions, increasing a bit the size of the box. In the second step, we replicate the Dirichlet trial state obtained in the first step, to obtain an upper bound on the energy in a sequence of boxes, whose size increases to infinity (but with fixed density). In the last step, we show how to pass from the grand canonical to the canonical setting.  

Let $\Psi_L = \{ \Psi_L^{(n)} \}_{n \geq 0} \in \cF (\L_L)$ be a normalized trial state for the Fock-space Hamiltonian $\cH$ defined on the box $\L_L$ with periodic boundary conditions (in fact, we denote by $\Psi_L^{(n)} (x_1, \dots, x_n)$ the $L$-periodic extension of $\Psi_L^{(n)}$ to the whole space $\bR^{3n}$). For $u \in \L_L$, we define $\Psi_{L+2\ell,u}^D \in \cF (\L_{L+2\ell}^u)$, where $\L_{L+2\ell}^u=u+\L_{L+2\ell}$ is a box centered at $u$, with side length $L+2\ell$, setting, for any $n \in \bN$,
\be\label{eq:PsiDir}
	(\Psi_{L+2\ell,u}^{\mathrm{D}})^{(n)}(x_1, \dots , x_n) = \Psi_L^{(n)} (x_1, \dots , x_n) \prod_{i=1}^n Q_{L,\ell} (x_i - u)
\ee
where $Q_{L,\ell} (x_i) = \prod_{j=1}^3 q_{L,\ell} (x_i^{(j)})$ with $q_{L,\ell} : \bR \to [0;1]$ defined by 
\[
	q_{L,\ell}(t)=\begin{cases}
	\cos\big(\frac{\pi(t+L/2-\ell)}{4\ell}\big) \qquad &\text{if $\big|t+\frac{L}{2}\big|\leq \ell$}\\
	1&\text{if $|t|<\frac{L}{2}-\ell$}\\
	\cos\big(\frac{\pi(t-L/2+\ell)}{4\ell}\big) &\text{if $\big|t-\frac L2\big|\leq \ell$}\\
	0 &\text{otherwise\,.}
	\end{cases}
\]
By definition $(\Psi_{L+2\ell,u}^{\mathrm{Dir}})^{(n)}$ satisfies Dirichlet boundary condition on the box 
$\L_{L+2\ell}^u$. The following lemma allows us to compare energy and moments of the number of particles of $\Psi^\text{D}_{L+2\ell , u}$ with those of $\Psi_L$. 
\begin{lemma}\label{lm:Dir}
Under the assumptions of Prop. \ref{prop:localization}, let $\Psi_{L+2\ell, u}^\text{D}$ be defined as in 
\eqref{eq:PsiDir} with $u\in \L_L$. Then we have  $\|	\Psi_{L+2\ell,u}^{\mathrm{D}}\|=1$. Moreover for all $j \in \bN$ 
\[\label{eq:Ndir}
\bmedia{\Psi_{L+2\ell,u}^{\mathrm{D}},\cN^j \Psi_{L+2\ell,u}^{\mathrm{D}}}=\bmedia{\Psi_L, \cN^j \Psi_L} \,,
\]
and there exists $\bar{u}\in\L_{L}$ such that
\be\label{eq:enDir}
\bmedia{\Psi_{L+2\ell,\bar u}^{\mathrm{D}},\cH\,\Psi_{L+2\ell,\bar u}^{\mathrm{D}}}\leq \bmedia{\Psi_L,\cH\,\Psi_L}+\frac{C}{L\ell}\bmedia{\Psi_L,\cN \Psi_L}
\ee
for a universal constant $C > 0$. 
\end{lemma}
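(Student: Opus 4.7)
The starting point is the Pythagorean-type identity $q_{L,\ell}(t)^2 + q_{L,\ell}(t - L)^2 = 1$ for $|t - L/2| \leq \ell$, which by $L$-periodic tiling upgrades to the partition of unity $\sum_{k \in \bZ^3} Q_{L,\ell}(x + kL)^2 = 1$ for every $x \in \bR^3$. Combined with the $L$-periodicity of $|\Psi_L^{(n)}|^2$ in each argument, this identity reduces, one variable at a time and for every $u$, the integral $\int_{\bR^{3n}} |\Psi_L^{(n)}|^2 \prod_k Q_{L,\ell}(x_k - u)^2 \, d\vec x$ to $\int_{\L_L^n} |\Psi_L^{(n)}|^2 \, d\vec x$. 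Applied to $\|\Psi_{L+2\ell,u}^{\mathrm D}\|^2$ this immediately gives $\|\Psi_{L+2\ell,u}^{\mathrm D}\|^2 = \|\Psi_L\|^2 = 1$ for every $u$, and since $\cN^j$ acts as multiplication by $n^j$ on the $n$-particle sector the same computation furnishes $\langle \Psi_{L+2\ell,u}^{\mathrm D}, \cN^j \Psi_{L+2\ell,u}^{\mathrm D}\rangle = \langle \Psi_L, \cN^j \Psi_L\rangle$ for every $u \in \L_L$.

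For the energy I would treat the kinetic and potential parts of $\cH$ separately. Writing $Q_k^u := Q_{L,\ell}(x_k - u)$, expanding $|\nabla_i(\Psi_L \prod_k Q_k^u)|^2$ by Leibniz, and integrating the cross term by parts in $x_i$ (no boundary contribution, since $Q_k^u$ vanishes on $\partial \L_{L+2\ell}^u$), one copy of the $|\nabla Q_i^u|^2$ piece cancels and we obtain
\[\begin{aligned}
&\Big\langle \Psi_{L+2\ell,u}^{\mathrm D}, -\sum_i \Delta_i \,\Psi_{L+2\ell,u}^{\mathrm D}\Big\rangle \\
&\qquad = \sum_i \int |\nabla_i \Psi_L|^2 \prod_k (Q_k^u)^2 \, d\vec x - \sum_i \int |\Psi_L|^2 (Q_{L,\ell}\Delta Q_{L,\ell})(x_i - u) \prod_{k\neq i}(Q_k^u)^2 \, d\vec x.
\end{aligned}\]
The first sum reduces to $\langle \Psi_L, -\sum_i \Delta_i \Psi_L\rangle$ for every $u$ by the partition-of-unity reduction applied to the $L$-periodic $|\nabla_i \Psi_L|^2$; the second sum is pointwise non-negative since an explicit computation gives $q_{L,\ell} q_{L,\ell}'' = -(\pi/(4\ell))^2 q_{L,\ell}^2$ on the non-flat regions and $0$ elsewhere, so $Q_{L,\ell}\Delta Q_{L,\ell} \leq 0$. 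For the potential, bounding $(Q_k^u)^2 \leq 1$ on the $n-2$ spectator coordinates and applying the same reduction there recasts $\langle \Psi_{L+2\ell,u}^{\mathrm D}, V(x_i - x_j) \Psi_{L+2\ell,u}^{\mathrm D}\rangle^{(n)}$ as an integral of $V(x_i - x_j) Q_{L,\ell}^2(x_i - u) Q_{L,\ell}^2(x_j - u)$ against the $L$-periodic two-body density $\rho_{ij}^{(n)}$ of $\Psi_L^{(n)}$.

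To produce the good $\bar u$, I would average both pieces over $u \in \L_L$. For the kinetic error, integrating out the spectator coordinates $x_k$, $k \neq i$, via the partition of unity reduces the contribution of index $i$ to $\int_{\bR^3} \tilde\rho_i^{(n)}(x_i) \int_{\L_L} (-Q_{L,\ell}\Delta Q_{L,\ell})(x_i - u)\, du\, dx_i$ with $\tilde\rho_i^{(n)}$ the $L$-periodic one-body density; the identity $\sum_k \int_{\L_L} (-Q_{L,\ell}\Delta Q_{L,\ell})(x_i + kL - u)\, du = \int_{\bR^3}(-Q_{L,\ell}\Delta Q_{L,\ell})(v)\, dv \leq CL^2/\ell$ (a direct computation from $\int q_{L,\ell}^2 = L$ and $\int |q_{L,\ell}q_{L,\ell}''| = O(1/\ell)$), together with periodicity of $\tilde\rho_i^{(n)}$ and summation over $i$ and $n$, yields an averaged kinetic error of at most $(CL^2/\ell)\langle \Psi_L, \cN \Psi_L\rangle$. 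For the potential, changing variables to $y = x_j - x_i$ and swapping the $u$ and $x_i$ integrations makes the inner integral factorize as $G(y) \int_{\L_L} \rho_{ij}^{(n)}(w, w+y)\, dw$ with $G(y) := \int_{\bR^3} Q_{L,\ell}(v)^2 Q_{L,\ell}(v+y)^2\, dv \leq L^3$; reassembling the $y$-integral and invoking $R < \ell < L/2$ (so $V$ agrees with its periodization under the nearest-image convention) gives $\int_{\L_L} du\, \langle \Psi_{L+2\ell,u}^{\mathrm D}, V(x_i - x_j) \Psi_{L+2\ell,u}^{\mathrm D}\rangle \leq L^3 \langle \Psi_L, V(x_i - x_j) \Psi_L\rangle$. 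Dividing by $L^3$, summing over pairs, and applying the mean value principle then furnishes $\bar u \in \L_L$ realizing \eqref{eq:enDir}. The chief technical obstacle is this potential averaging: it requires exploiting simultaneously the $L$-periodicity of $\rho_{ij}^{(n)}$ and the localization of $V$ in order to turn a triple integral with three incompatible periodic/compact structures into a clean multiple of the periodic expectation $\langle \Psi_L, V(x_i - x_j) \Psi_L\rangle$.
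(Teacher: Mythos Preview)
Your argument is correct, and for the norm, the moments of $\cN$, and the kinetic energy it coincides with the paper's proof (the partition-of-unity reduction and the IMS-type identity are exactly what the paper uses). The one substantive difference is in the treatment of the potential energy. You keep $Q_{L,\ell}^2(x_i-u)Q_{L,\ell}^2(x_j-u)$ intact and only recover the bound after averaging over $u$, which forces you to deal with the triple integral you flag as the ``chief technical obstacle.'' The paper bypasses this entirely: since $V\ge 0$ and $\mathrm{supp}\,V\subset B_R(0)$ with $R<L$, one has the pointwise inequality $V(x_i-x_j)\le V_L(x_i-x_j)$ for the periodization $V_L$, and then the partition-of-unity reduction applies to \emph{all} $n$ variables (including $x_i,x_j$), yielding $\langle\Psi_{L+2\ell,u}^{\mathrm D},V(x_i-x_j)\Psi_{L+2\ell,u}^{\mathrm D}\rangle\le\langle\Psi_L,V_L(x_i-x_j)\Psi_L\rangle$ for \emph{every} $u$. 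Thus the paper only needs to average the kinetic error term. Your averaging route is perfectly valid and in fact gives an exact identity before the crude bound $G(y)\le L^3$, but the paper's device is shorter.

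One small imprecision: when you write ``bounding $(Q_k^u)^2\le 1$ on the $n-2$ spectator coordinates and applying the same reduction there,'' the first clause is not what you want---replacing $(Q_k^u)^2$ by $1$ would leave you integrating an $L$-periodic function over $\Lambda_{L+2\ell}^u$, which overshoots. The correct step (which you evidently intend, since you land on the genuine two-body density $\rho_{ij}^{(n)}$) is simply the partition-of-unity reduction on each spectator variable; this gives an equality, not an inequality.
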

\begin{proof}
For an arbitrary $L$-periodic function $\psi\in L^2_{\mathrm{loc}}(\bR)$, we find 
\be\label{eq:|psiq|}
\int_{-\frac L2-\ell}^{\frac L2+\ell}dt |\psi(t)|^2 q(t)^2 = \int_{-\frac L2}^{\frac L2}dt |\psi(t)|^2.
\ee
To prove (\ref{eq:|psiq|}), we combine (using the periodicity of $\psi$) the integral over $[-L/2-\ell ; -L/2]$ with the integral over $[L/2 - \ell ; L/2]$ and the integral over $[-L/2 ; - L/2 + \ell ]$ with the integral over $[L/2 ; L/2 +\ell]$ (using that $\cos^2 x + \cos^2 (x - \pi /2) = 1$). 

Applying \eqref{eq:|psiq|} (separately on each variable), we obtain that $\|  (\Psi_{L+2\ell,u}^\text{D} )^{(n)} \| = \| \Psi_L^{(n)} \|$, for all $n \in \bN$. This implies that $\| \Psi_{L+2\ell,u}^\text{D} \| = \| \Psi_L \| = 1$ and that $\langle \Psi_{L+2\ell,u}^\text{D} , \cN^j \Psi_{L+2\ell , u}^\text{D} \rangle = \langle \Psi_L, \cN^j \Psi_L \rangle$ for all $j \in \bN$. 

To compute the expectation of the kinetic energy in the state $\Psi_{L+2\ell,u}^{\mathrm{D}}$, we observe that, for any $L$-periodic $\psi \in L^2_{\mathrm{loc}}(\bR)$ with $\psi' \in  L^2_{\mathrm{loc}}(\bR)$, we have (since $\psi'$ is also $L$-periodic) 
\[
\int_{-\frac L2-\ell}^{\frac L2+\ell}dt \, |(q\psi)'(t)|^2=\int_{\frac L2}^{\frac L2}dt \, |\psi' (t)|^2 + \int_{-\frac L2-\ell}^{\frac L2+\ell}dt \, \big[ |\psi(t)|^2 q'(t)^2 + q(t) q'(t) \frac{d}{dt}|\psi(t)|^2\big]
\]
	where we used periodicity of $\psi'$ and \eqref{eq:|psiq|}. Integrating by parts and using $q(\pm (L/2+\ell)) = q' (\pm (L/2 - \ell)) = 0$, we get 
\be\label{eq:nabla_psiq}\begin{aligned}
		\int_{-\frac L2-\ell}^{\frac L2+\ell}dt \, |(q\psi)'(t)|^2= &\int_{-\frac L2}^{\frac L2} dt \, |\psi'(t)|^2-\int_{-\frac L2-\ell}^{\frac L2+\ell}dt \, |\psi(t)|^2 \, q(t)q''(t) \\
			\leq & \int_{-\frac L2}^{\frac L2}dt \,  |\psi'(t)|^2 + \frac C {\ell^2} \int_\bR dt \, 
			|\psi(t)|^2 \chi_{L,\ell}(t)
	\end{aligned}\ee
	where $\chi_{L,\ell}(t)=\chi_\ell(t+L/2)+\chi_\ell(t-L/2)$ with $\chi_r(t)$ the characteristic function of $[-r,r]$ and we used $|q'' (t)|\leq C \ell^{-2} \chi_{L,\ell}(t)$. Applying (\ref{eq:nabla_psiq}) (separately in every direction), we obtain
\begin{equation}\label{eq:kin-loc} \begin{split}  
\| \nabla_{x_j} &(\Psi^\text{D}_{L+2\ell, u})^{(n)} \|^2 \\ \leq \; & \| \nabla_{x_j} (\Psi^\text{D}_{L+2\ell,u})^{(n)} \|^2 \\ &+ \frac{C}{\ell^2} \int_{\bR^3} dx_j\, \wt{\chi}_{L,\ell} (x_j - u) \int_{\L_L^{n-1}} dx_1 \dots dx_{j-1} dx_{j+1} \dots dx_n \; |\Psi_L^{(n)} (x_1, \dots , x_n) |^2  \end{split} \end{equation} 
where we defined $\widetilde{\chi}_{L,\ell} (x) =\sum_{k=1}^3 \chi_{L,\ell} (x^{(k)}) \prod_{\substack{j\neq k}}^3 \chi_{\frac L2}(x^{(j)})$. 

To compute the potential energy of $\psi_L$, we have to consider the $L$-periodic extension $V_L (x)  
= \sum_{m\in \bZ^3}V(x+mL)$ of $V$. Since we assumed $V$ to be positive and supported in $B_R (0)$ and that $L>R$, we get $V(x)\leq V_L(x)$ which implies that, for any $i \not = j$,  
\[ \begin{split} 
|(\Psi_{L+2\ell,u}^{\mathrm{D}})^{(n)}(x_1, &\dots , x_n)|^2 \, V(x_i-x_j) \\ &\leq \Big| \Psi_L^{(n)} (x_1, \dots , x_n) \, \sqrt{V_L (x_i-x_j)} \Big|^2 \, \prod_{k=1}^n Q_{L,\ell} (x_k-u)^2.
\end{split} \]
Applying \eqref{eq:|psiq|}, we obtain 
\be\label{eq:pot_Dir}\begin{aligned}
\int_{(\L_{L+2\ell}^u)^n} dx_1 \dots dx_n \, &|(\Psi_{L+2\ell,u}^{\mathrm{D}})^{(n)}(x_1, \dots , x_n)|^2  V (x_i-x_j) \\ &\leq \int_{\L^n_L} dx_1 \dots dx_n \,  |\Psi^{(n)}_L (x_1, \dots , x_n)|^2 \, V_L (x_i-x_j).
	\end{aligned}\ee

From (\ref{eq:kin-loc}) and (\ref{eq:pot_Dir}), we conclude (using the bosonic symmetry) 
\begin{multline*}
\bmedia{\Psi_{L+2\ell,u}^{\mathrm{D}},\cH \Psi_{L+2\ell,u}^{\mathrm{D}}}\\
		\leq\bmedia{\Psi_L,\cH\Psi_L}+\frac{C}{\ell^2}\sum_{n\geq0}n \int_{\bR^3} dx_1 \, \widetilde{\chi}_{L,\ell}(x_1-u) \int_{\L_L^{n-1}} dx_2\dots dx_n \, |\Psi_L^{(n)}(x_1,\dots,x_n)|^2.
\end{multline*}
Averaging over $u\in \L_L$ we conclude (since $\| \wt{\chi}_{L,\ell} \|_1 \leq C L^2 \ell$) 
\[\int_{\L_L}du\,\bmedia{\Psi_{L+2\ell,u}^{\mathrm{D}},\cH\, \Psi_{L+2\ell,u}^{\mathrm{D}}}\leq L^3\bmedia{\Psi_L,\cH\,\Psi_L}+\frac{C L^2}{\ell}  \langle \Psi_L , \cN \Psi_L \rangle\,.  \]
Hence, there exists $\bar{u}\in \L_L$ so that  \eqref{eq:enDir} holds.

\end{proof}

From now on, let us define $\Psi_{L+2\ell}^\text{D} \in \cF (\L_{L+2\ell})$, setting $(\Psi_{L+2\ell}^\text{D})^{(n)} (x_1, \dots , x_n) = (\Psi_{L+2\ell , \bar{u}}^\text{D})^{(n)} (x_1 - \bar{u}, \dots , x_n - \bar{u})$, with $\Psi_{L+2\ell,\bar{u}}^\text{D}$ from Lemma \ref{lm:Dir}. Since $\Psi_{L+2\ell}^\text{D}$ satisfies Dirichlet boundary conditions, we can replicate it into several adjacent copies of $\L_{L+2\ell}$,  separated by corridors of size $R$ (to avoid interactions between different boxes). This allows us to construct a sequence of trial states on boxes, with increasing volume (but keeping the density fixed).  

Let $t \in \bN$ and $\tl{L} = t (L+2\ell +R)$. We think of the large box $\L_{\tl{L}}$ as the (almost) disjoint union of $t^3$ shifted copies of the small box $\L_{L+2\ell+R}$, centered at \begin{equation}\label{eq:ci} (-\tl{L}/2 , - \tl{L}/2, - \tl{L}/2) + (L+2\ell +R) \cdot (i_1 - 1/2 , i_2 - 1/2, i_3 - 1/2)  \end{equation} 
with $i_1, i_2, i_3 \in \{ 1, \dots , t \}$. Let $\{ c_i \}_{i=1}^{t^3}$ denote an enumeration of the centers (\ref{eq:ci}). We define $\Psi^\text{D}_{\tl{L}} \in \cF (\L_{\tl{L}})$ by setting 
\begin{equation}\label{eq:Psi-large} 
(\Psi_{\tl{L}}^\text{D} )^{(m)} (x_1, \dots , x_{m}) = \frac{1}{\| (\Psi^\text{D}_{L+2\ell})^{(n)} \|^{t^3 -1}}  \prod_{i=1}^{t^3} (\Psi_{L+2\ell}^\text{D})^{(n)} (x_{(i-1)n + 1} - c_i , \dots , x_{i n} - c_i) \end{equation} 
if $m = n t^3$ for an $n \in \bN$, and $(\Psi_{\tl{L}}^\text{D} )^{(m)} = 0$ otherwise (here we set $(\Psi_{L+2\ell}^\text{D})^{(n)} = 0$ if one of its arguments lies outside $\L_{L+2\ell}$). More precisely, $(\Psi_{\tl{L}}^\text{D} )^{(m)}$ should be defined as the symmetrization of (\ref{eq:Psi-large}) (but we can use (\ref{eq:Psi-large}) to compute the expectation of permutation symmetric observables). 

\begin{lemma}\label{lm:Ltilde}
Under the assumptions of Prop. \ref{prop:localization}, let $\Psi_{\tl{L}}^\text{D}$ be defined as above. Then $\| \Psi_{\tl{L}}^\text{D} \| = 1$,  
\[\label{eq:NGC}
\bmedia{\Psi_{\tilde{L}}^\mathrm{D},\cN^j\Psi_{\tilde{L}}^\mathrm{D}}=t^{3j}\bmedia{\Psi_{L+2\ell}^\mathrm{D},\cN^j \Psi_{L+2\ell}^\mathrm{D}} \]
for all $j \in \bN$, and 
\begin{equation}\label{eq:enGC}
\bmedia{\Psi_{\tilde{L}}^\mathrm{D},\cH\Psi_{\tilde{L}}^\mathrm{D}} = t^3 \bmedia{\Psi_{L+2\ell}^{\mathrm{D}}, \cH \Psi_{L+2\ell}^{\mathrm{D}}}.
\end{equation}
\end{lemma}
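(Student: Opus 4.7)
The plan is to exploit the geometric fact that the construction \eqref{eq:Psi-large} places the $t^3$ shifted copies of $\Psi_{L+2\ell}^{\mathrm D}$ on (essentially) disjoint supports: the Dirichlet boundary conditions satisfied by $\Psi_{L+2\ell}^{\mathrm D}$ guarantee that each factor vanishes outside its assigned small box, and the $t^3$ small boxes are pairwise disjoint. Moreover the corridors of width $R$ separating adjacent small boxes will ensure that whenever particle coordinates $x_j$ and $x_k$ lie in different small boxes one has $|x_j-x_k|\geq R$, so $V(x_j-x_k)=0$ by the assumption $\supp V \subset B_R(0)$. Thus overlap integrals and interaction terms between different copies will either factorise cleanly or vanish outright.

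First I would fix $n \in \mathbb{N}$ and work in the sector $m = nt^3$. The factorisation over disjoint supports should immediately give
\[
\bigl\| (\Psi_{\tl L}^{\mathrm D})^{(nt^3)} \bigr\|^2 = \frac{1}{\|(\Psi_{L+2\ell}^{\mathrm D})^{(n)}\|^{2(t^3-1)}} \prod_{i=1}^{t^3} \bigl\| (\Psi_{L+2\ell}^{\mathrm D})^{(n)} \bigr\|^2 = \bigl\| (\Psi_{L+2\ell}^{\mathrm D})^{(n)} \bigr\|^2.
\]
Summing over $n$ and invoking Lemma \ref{lm:Dir} then yields $\|\Psi_{\tl L}^{\mathrm D}\|=1$. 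Since $(\Psi_{\tl L}^{\mathrm D})^{(m)}$ vanishes unless $m=nt^3$, the moments of $\cN$ should follow in the same way,
\[
\bmedia{\Psi_{\tl L}^{\mathrm D}, \cN^j \Psi_{\tl L}^{\mathrm D}} = \sum_{n\geq 0} (nt^3)^j \bigl\| (\Psi_{\tl L}^{\mathrm D})^{(nt^3)} \bigr\|^2 = t^{3j} \bmedia{\Psi_{L+2\ell}^{\mathrm D}, \cN^j \Psi_{L+2\ell}^{\mathrm D}}.
\]

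For the energy identity I would split $\cH$ into kinetic and potential parts and treat the sector $m=nt^3$ separately. For the kinetic term, any derivative $\nabla_{x_j}$ with $j$ in the block assigned to box $i$ only hits the factor $(\Psi_{L+2\ell}^{\mathrm D})^{(n)}(\,\cdot\,-c_i)$, while the remaining factors integrate to $\|(\Psi_{L+2\ell}^{\mathrm D})^{(n)}\|^{2(t^3-1)}$ and cancel the prefactor; summing over all $j$ in all $t^3$ blocks will then produce $t^3$ times the kinetic energy of $\Psi_{L+2\ell}^{\mathrm D}$ in the $n$-particle sector. For the pair potential, intra-block pairs reproduce the small-box potential energy by the same factorisation, while all inter-block pairs contribute zero by the geometric argument above; combining the two pieces and summing over $n$ will give \eqref{eq:enGC}. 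The one delicate point will be the symmetrisation: strictly speaking $(\Psi_{\tl L}^{\mathrm D})^{(m)}$ must be symmetrised over all $(nt^3)!$ permutations of its arguments, but since the $t^3$ building blocks have mutually disjoint supports and each factor is already symmetric in its $n$ arguments, only the $(n!)^{t^3}$ intra-block permutations give nonzero contributions in overlaps with permutation-symmetric observables and all contribute identically; the combinatorial prefactors will cancel in every computation, justifying the paper's remark that \eqref{eq:Psi-large} may be used directly when evaluating expectations of permutation-symmetric quantities such as $\cN^j$ and $\cH$.
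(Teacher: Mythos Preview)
Your proposal is correct and follows essentially the same route as the paper's own proof: factorisation of the norm via disjoint supports, the direct computation of the $\cN^j$ moments from $(\Psi_{\tl L}^{\mathrm D})^{(m)}=0$ unless $m=nt^3$, and the observation that kinetic derivatives act on a single factor while inter-block potential terms vanish thanks to the corridors of width $R$. Your explicit discussion of the symmetrisation point is a welcome elaboration of the paper's parenthetical remark.
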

\begin{proof}
From the definition (\ref{eq:Psi-large}), we have $\| (\Psi_{\tl{L}}^\text{D})^{(nt^3)} \| =  \| (\Psi^\text{D}_{L+2\ell})^{(n)} \|$ for all $n \in \bN$. Since $(\Psi_{\tl{L}}^\text{D})^{(m)} = 0$, if $m \not = n t^3$, we conclude that $\| \Psi_{\tl{L}}^\text{D} \| = \| \Psi_{L+2\ell}^\text{D} \| = 1$ and also that, for $j \in \bN$,
\[
\bmedia{\Psi_{\tilde{L}}^\mathrm{D}, \cN^j \Psi_{\tilde{L}}^\mathrm{D}} \hskip -0.05cm=\sum_{n\geq 0} (t^3 n)^j \|(\Psi_{\tilde{L}}^\mathrm{D})^{(t^3n)}\|^2 \hskip -0.05cm=t^{3j} \sum_{n\geq 0} n^j \|(\Psi_{L+2\ell}^{\mathrm{D}})^{(n)}\|^2 \hskip -0.05cm= t^{3j} \bmedia{\Psi_{L+2\ell}^{\mathrm{D}},\cN^j \Psi_{L+2\ell}^{\mathrm{D}}}.
	\]

To prove (\ref{eq:enGC}), we observe, first of all, that for any $i=1, \dots , n t^3$,when the operator 
$\nabla_{x_i}$ acts on $(\Psi^\text{D}_{\tl{L}})^{(n t^3)}$, it only hits one of the factor $(\Psi^\text{D}_{L+2\ell})^{(n)}$ on the r.h.s. of (\ref{eq:Psi-large}). Similarly, for any $i,j \in \{ 1, \dots , m \}$, the operator $V(x_i - x_j)$ has non-zero expectation in the state $(\Psi^\text{D}_{\tl{L}})^{(n t^3)}$ only 
if $x_i, x_j$ are arguments of the same factor $(\Psi^\text{D}_{L+2\ell})^{(n)}$ on the r.h.s. of (\ref{eq:Psi-large}) (this observation is exactly the reason for introducing corridors of size $R$ 
between the small boxes, where the wave function vanishes). We conclude that $\bmedia{\Psi_{\tilde{L}}^\mathrm{D},\cH \Psi_{\tilde{L}}^\mathrm{D}} = t^3 \bmedia{\Psi_{L+2\ell}^{\mathrm{D}}, \cH \Psi_{L+2\ell}^{\mathrm{D}}}$, as claimed.
\end{proof} 
	
Finally, in Lemma \ref{lm:GCtoC}  we show how to obtain an upper bound for the ground state energy per particle in the canonical ensemble, starting from a trial state in the grand-canonical setting. Recall the notation $E(N,L)$ for the ground state energy of the Hamiltonian \eqref{eq:HN-0}, describing $N$ particles in the box $\L_L$, with Dirichlet boundary conditions. For $\r > 0$ with $\r L^3 \in \bN$, we introduce the notation 
\[ e_{L}(\r)=\frac{E(\r  L^3, L)}{ L^3}\,. \]
Comparing with the definition (\ref{eq:e-rho}), we find $e (\rho) = \lim_{L \to \infty} e_L (\rho)$ (where the limit has to be taken along sequences of $L$, with $\rho L^3 \in \bN$). 
 In the proof of Lemma \ref{lm:GCtoC}  we use the existence of the thermodynamic limit of the specific energy and its convexity (see \cite[Thm. 3.5.8 and 3.5.11]{Ruelle}), together with the following result on the Legendre transform of convex functions.

\begin{lemma} \label{lm:legendre}
Let $D \subset \bR$ be a closed interval, $f : D \to \bR$ be convex and continuous (also at the boundary of $D$). We define the Legendre transform $f^* : \bR \to \bR$ of $f$ by 
\be \label{eq:LegendreT}
 f^* (y) = \sup_{x \in D} \left[ x y - f(x) \right] 
\ee
Then $f^*$ is well-defined (because, by continuity, $x \to xy - f (x)$ is bounded on $D$, for all $y \in \bR$) and, for all $x \in D$, 
\begin{equation}\label{eq:lege1} f (x) = \sup_{y \in \bR}  \left[ xy - f^* (y) \right] \, . \end{equation} 
\end{lemma}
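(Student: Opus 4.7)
The plan is to prove the two inequalities defining the identity \eqref{eq:lege1} separately. The easy direction is a direct consequence of the definition: for every $x \in D$ and every $y \in \RRR$ we have $f^*(y) \geq xy - f(x)$, hence $xy - f^*(y) \leq f(x)$, and taking the supremum over $y \in \RRR$ yields $\sup_{y \in \RRR}[xy - f^*(y)] \leq f(x)$.

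For the reverse inequality, fix $x_0 \in D$ and look for $y_0 \in \RRR$ realizing equality (or at least realizing values arbitrarily close to $f(x_0)$). The plan is to exhibit a subgradient of $f$ at $x_0$, i.e.\ a $y_0 \in \RRR$ such that
\[
f(x) \geq f(x_0) + y_0 (x - x_0) \qquad \text{for all } x \in D.
\]
Rearranging, this means $x y_0 - f(x) \leq x_0 y_0 - f(x_0)$ for every $x \in D$, so $f^*(y_0) = x_0 y_0 - f(x_0)$, from which $x_0 y_0 - f^*(y_0) = f(x_0)$, giving the desired matching bound.

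The existence of such a $y_0$ is the content of the one-dimensional supporting hyperplane property for convex functions. If $x_0$ is interior to $D$, then by convexity the one-sided derivatives $f'_{-}(x_0)$ and $f'_{+}(x_0)$ both exist in $\RRR$ and satisfy $f'_{-}(x_0) \leq f'_{+}(x_0)$; any $y_0$ in this interval works, since the slope inequality
$\frac{f(x)-f(x_0)}{x-x_0} \geq f'_{+}(x_0)$ for $x>x_0$, and $\frac{f(x)-f(x_0)}{x-x_0} \leq f'_{-}(x_0)$ for $x<x_0$, are standard monotonicity properties of difference quotients of convex functions.

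The main care is needed at the endpoints of $D$. If $x_0$ is the right endpoint, the left derivative $f'_{-}(x_0)$ exists in $\RRR \cup \{-\infty\}$; if it is finite, any $y_0 \geq f'_{-}(x_0)$ is a subgradient (for $x < x_0$ use the slope bound, for $x = x_0$ it is trivial). If $f'_{-}(x_0) = -\infty$, which can only occur when $x_0$ is the sole point of $D$ of interest, continuity and convexity on a nondegenerate interval preclude this; in the trivial case $D=\{x_0\}$ any $y_0$ works. The symmetric argument covers the left endpoint. This exhausts all cases and completes the proof.
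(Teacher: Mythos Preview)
Your overall strategy matches the paper's: the easy inequality is identical, and for the reverse inequality both you and the paper appeal to a supporting line below the graph of $f$. The difference is that you insist on a support line \emph{through} $(x_0, f(x_0))$ (a subgradient at $x_0$), whereas the paper more cautiously fixes any $t \leq f(x_0)$ and finds a line through $(x_0,t)$ lying below the graph, deducing $t \leq \sup_y[x_0 y - f^*(y)]$ and then letting $t \uparrow f(x_0)$.

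This difference matters, and your endpoint discussion contains a genuine error. At a right endpoint $x_0$, the difference quotients $\frac{f(x)-f(x_0)}{x-x_0}$ for $x<x_0$ are \emph{increasing} as $x\to x_0^-$, so $f'_-(x_0)$ lies in $\bR\cup\{+\infty\}$, not $\bR\cup\{-\infty\}$ as you wrote. The case $f'_-(x_0)=+\infty$ is not excluded by continuity and convexity: take $f(x)=-\sqrt{1-x^2}$ on $D=[-1,1]$ and $x_0=1$. Here there is \emph{no} subgradient at $x_0$, so your argument produces no $y_0$ with $x_0 y_0 - f^*(y_0)=f(x_0)$; indeed $f^*(y)=\sqrt{1+y^2}$ and the supremum $\sup_y[y-\sqrt{1+y^2}]=0=f(1)$ is approached but never attained. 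The symmetric issue arises at a left endpoint with $f'_+(x_0)=-\infty$.

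The fix is exactly the paper's: for $t<f(x_0)$ the point $(x_0,t)$ is strictly separated from the (closed, convex) epigraph by a non-vertical line, which after translation passes through $(x_0,t)$ and stays below the graph; this yields $t\leq x_0 y - f^*(y)$ for some $y$, and taking $t\to f(x_0)^-$ completes the proof without ever needing a subgradient at the endpoint.
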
 
\begin{proof}
By definition of $f^*$, we have $f^* (y) \geq xy - f(x)$ for all $x \in D, y \in \bR$. This implies that $f(x) \geq xy - f^* (y)$ for all $x \in D, y \in \bR$ and therefore that 
\be\label{eq:Legendre_bound} f(x) \geq \sup_{y \in \bR}  \left[ xy - f^* (y) \right] \ee
for all $x \in D$. On the other hand, fix $x_0 \in D$ and $t \leq f(x_0)$. Then, by convexity of $f$ (and by its continuity at the boundaries of $D$), we find a line through $(x_0,t)$ lying below the graph of $f$. In other words, there exists $y \in \bR$ such that $f(x) \geq t + y (x-x_0)$ for all $x \in D$. Thus $y x_0 - t \geq y x - f(x)$ for all $x \in D$, which implies that 
\[ y x_0 -t \geq f^* (y) \]
and therefore that $t \leq y x_0 - f^* (y)$. In particular, $t \leq \sup_{y \in \bR} [ y x_0 - f^* (y)]$. Since $t \leq f(x_0)$ was arbitrary, we conclude that $f(x_0) \leq \sup_{y \in \bR} [ y x_0 - f^* (y)]$. With (\ref{eq:Legendre_bound}) , we obtain that $f(x) = \sup_{y \in \bR}  \left[ xy - f^* (y) \right]$ for all $x \in D$. 
\end{proof}

\begin{lemma}\label{lm:GCtoC} 
Under the assumptions of Prop. \ref{prop:localization}, fix $\rho > 0$  and suppose that there exists a sequence $\Psi_L^\text{D} \in \cF (\L_L)$ (parametrized by $L$ with $\rho L^3 \in \bN$), satisfying Dirichlet boundary conditions, such that 
\begin{equation}\label{eq:cN-lower}  \langle \Psi_L^\text{D} , \cN \Psi^\text{D}_L \rangle \geq  \r ( 1 + c' \r) L^3 \,, \qquad \langle \Psi_L^\text{D} , \cN^{2} \Psi^\text{D}_L\rangle \leq C' (\r L^3)^2  \,.\end{equation} 
for some constants $c', C' > 0$. Then we have 
\[ e (\rho) \leq \lim_{L \to \infty} \frac{\langle \Psi_L^\text{D}, \cH \Psi_L^\text{D} \rangle}{L^3} \,.\]
\end{lemma}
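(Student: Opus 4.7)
The overall strategy is to reduce the grand-canonical upper bound to a sector-by-sector canonical bound, and then to invert the Legendre duality of Lemma \ref{lm:legendre} to extract $e(\rho)$ from information about the expected density. First, I would write $p_n^{(L)} = \| (\Psi_L^\text{D})^{(n)} \|^2$. Since $(\Psi_L^\text{D})^{(n)}$ is a Dirichlet $n$-particle wave function on $\L_L$, the variational characterisation of $E(n,L)$ gives
\[
\frac{\langle \Psi_L^\text{D}, \cH \, \Psi_L^\text{D} \rangle}{L^3} \;\geq\; \sum_{n \geq 0} p_n^{(L)} \, \frac{E(n,L)}{L^3} \;=\; \sum_{n \geq 0} p_n^{(L)} \, e_L (n/L^3).
\]
A tiling argument analogous to the one underlying Lemma \ref{lm:Ltilde} (embedding $t^3$ Dirichlet copies of $\L_L$, separated by corridors of width $R$, inside $\L_{t(L+R)}$) shows that $e_L(\rho') \geq e(\rho') - o_L(1)$ uniformly on compact sets of $\rho'$, so in the limit $L \to \infty$ one may replace $e_L$ by $e$ on the right-hand side.

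Next I would apply convex duality. By Ruelle's theorems quoted above, $e$ is convex and continuous on every compact subinterval of $[0,\infty)$, and since $e(0)=0$ with $e \geq 0$ (the Hamiltonian being non-negative), convexity forces $e$ to be non-decreasing on $[0,\infty)$. Fix $M > 2\sqrt{C'}\rho$ and set $f := e|_{[0,M]}$. Lemma \ref{lm:legendre} gives
\[
e(\rho) \;=\; \sup_{\mu \in \bR} \big[\mu \rho - f^*(\mu) \big],
\]
and by the monotonicity and convexity of $e$ the supremum is attained at some subgradient $\mu^* \in [0,\infty)$ of $e$ at $\rho$. The defining inequality $e(\rho') \geq \mu^* \rho' - f^*(\mu^*)$ for every $\rho' \in [0,M]$, combined with $e \geq 0$ outside $[0,M]$, yields
\[
\sum_{n \geq 0} p_n^{(L)} \, e(n/L^3) \;\geq\; \mu^* \!\!\! \sum_{n/L^3 \leq M} \!\!\! p_n^{(L)} \, (n/L^3) - f^*(\mu^*).
\]
The second moment hypothesis $\langle \cN^2 \rangle_{\Psi_L^\text{D}} \leq C' \rho^2 L^6$, combined with Chebyshev's inequality, gives $\sum_{n/L^3 > M} p_n^{(L)} (n/L^3) \leq C'\rho^2/M$, so the restricted sum above is at least $\rho_\Psi^{(L)} - C'\rho^2/M$, where $\rho_\Psi^{(L)} := \langle \cN \rangle_{\Psi_L^\text{D}}/L^3 \geq \rho(1+c'\rho)$.

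Putting the pieces together,
\[
\frac{\langle \Psi_L^\text{D}, \cH \, \Psi_L^\text{D} \rangle}{L^3} \;\geq\; e(\rho) + \mu^* \big(\rho_\Psi^{(L)} - \rho\big) - \mu^* \, \frac{C'\rho^2}{M} - o_L(1).
\]
Since $\mu^* \geq 0$ and $\rho_\Psi^{(L)} - \rho \geq c'\rho^2$, the correction $\mu^*(\rho_\Psi^{(L)} - \rho) - \mu^* C'\rho^2/M$ is non-negative as soon as $M \geq C'/c'$, and taking $L \to \infty$ produces $\lim_L \mathrm{LHS} \geq e(\rho)$, which is the claim. The main technical obstacle is twofold: justifying $e_L(\rho') \geq e(\rho')$ in a form uniform enough to survive summation against the weights $p_n^{(L)}$ (dealt with by the standard tiling/subadditivity argument that underlies the existence of the thermodynamic limit), and controlling the tail of the distribution $p_n^{(L)}$ at densities $n/L^3 > M$, which is exactly what the second-moment hypothesis buys us; the slack $\rho(1+c'\rho)$ in the first-moment hypothesis is precisely the positive room needed to absorb the tail error $\mu^* C'\rho^2/M$ after sending $L \to \infty$.
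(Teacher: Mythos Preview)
Your proposal is correct and follows essentially the same strategy as the paper: a sector-by-sector variational lower bound $\langle \cH \rangle \geq \sum_n p_n E(n,L)$, the tiling/subadditivity inequality $e_L(\rho') \geq (1+R/L)^3 e(\rho'/(1+R/L)^3)$ to pass from $e_L$ to $e$, Legendre duality from Lemma~\ref{lm:legendre}, tail control via the second-moment bound, and the slack $c'\rho^2$ in the first-moment hypothesis to absorb the tail loss. The only cosmetic difference is that you fix the optimal dual parameter $\mu^*$ (a subgradient of $e$ at $\rho$) from the outset and pass to the limit before applying duality, whereas the paper carries a free $\mu \geq 0$ through the estimate, takes $L\to\infty$, and then sups over $\mu$ at the very end.
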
 

\begin{proof}
Using positivity of $\cH$ , we have, for any $\mu \geq 0$ and $M>0$,  
	\be\label{eq:GCtoC}\begin{aligned}
		&\frac{\bmedia{\Psi_{L}^\mathrm{D},\cH \Psi_{L}^\mathrm{D}}}{L^3} \\
& \quad \geq \frac{\mu}{L^3}\bmedia{\Psi_{L}^{\mathrm{D}},\cN\Psi_{L}^{\mathrm{D}}} +\frac{\bmedia{\Psi_{L}^{\mathrm{D}},(\cH-\mu\cN)\chi(\cN\leq M L^3) \Psi_{L}^{\mathrm{D}}}}{{L}^3} 
-\frac{\mu}{L^3}\bmedia{\Psi_{L}^{\mathrm{D}},\cN\chi(\cN>M L^3) \Psi_{L}^{\mathrm{D}}}\\
& \quad  \geq \frac{\mu}{L^3}\bmedia{\Psi_{L}^{\mathrm{D}},\cN\Psi_{L}^{\mathrm{D}}}+\sum_{m=0}^{M L^3}\bigg(e_{L}\bigg(\frac{m}{{L}^3}\bigg)-\mu \frac{m}{{L}^3}\bigg) \Big\|(\Psi_{{L}}^{\mathrm{D}})^{(m)}\Big\|^2
- \frac{\mu}{M L^6}\bmedia{\Psi_{L}^{\mathrm{D}},\cN^2\Psi_{L}^{\mathrm{D}}} \,,
\end{aligned}\ee
where we used the inequality $\chi(\cN> M L^3) \leq \cN/ (M L^3)$. 
Hence, with \eqref{eq:cN-lower} and fixing $M$ large enough (depending on $c', C'$) we find
\be\label{eq:GCbound}\begin{aligned}
\frac{\bmedia{\Psi_{L}^\mathrm{D},\cH \Psi_{L}^\mathrm{D}}}{L^3} & \geq  \mu \r  +\sum_{m=0}^{M L^3}\bigg(e_{L}\bigg(\frac{m}{{L}^3}\bigg)-\mu \frac{m}{{L}^3}\bigg) \Big\|(\Psi_{{L}}^{\mathrm{D}})^{(m)}\Big\|^2 \,.
\end{aligned}\ee
Next, we claim that 
\be\label{eq:copies}
e_{L}(\r)\geq \Big(1+\frac R{L}\Big)^3e\bigg(\r\Big(1+\frac{R}{L}\Big)^{-3}\bigg)\,.
\ee
Indeed, starting from an arbitrary normalized trial state $\psi$ describing $N = \r L^3$ particles in a box of side length $L$, with Dirichlet boundary conditions, we can construct, for any $r \in \bN$, a trial state describing $N'= N r^3  = \r L^3 r^3$ particles in a box of side length $L'= r (L+R)$, again with Dirichlet boundary conditions, by placing $r^3$ copies of the state $\psi$ in adjacent boxes and using that (thanks to the corridors of size $R$ between the boxes) particles in different boxes do not interact. 
This construction is very similar to the one presented around Lemma \ref{lm:Ltilde} (the difference is that here we work in the canonical setting, which makes things slightly simpler). Since $N' = [\r / (1 +R/L)^3 ] L'^3$, optimizing the choice of $\psi$, we obtain that $E ([\r / (1 + R/L)^3] L'^3 , L') \leq r^3 E (\rho L^3, L)$ and therefore that 
\[ e_{L'} (\r/(1 + R/L)^3) \leq e_L (\rho) / (1+R/L)^3 \,.\]
Taking the limit $L' \to \infty$ (along the sequence $L' = r (L+R)$, $r \in \bN$), we obtain (\ref{eq:copies}). 
Then \eqref{eq:GCbound} and \eqref{eq:copies} yield
\[
	\frac{\bmedia{\Psi_{L}^\mathrm{D},\cH \Psi_{L}^\mathrm{D}}}{L^3}\geq\;\mu\r-\Big(1+\frac RL\Big)^3e^*(\mu).
\]
where $e^*$ denotes the Legendre transform of $e : D \to \bR$, defined on the domain $D=[0,M]$, as in  \eqref{eq:LegendreT} (here we use the convexity of the specific energy $e$). It follows that 
	\[
		\lim_{L\to+\infty}\frac{\bmedia{\Psi_{L}^\mathrm{D},\cH \Psi_{L}^\mathrm{D}}}{L^3}\geq\;\mu\r-e^*(\mu) 
	\]
for all $\mu  \geq 0$. Thus 
\[
\lim_{L\to+\infty}\frac{\bmedia{\Psi_{L}^\mathrm{D},\cH \Psi_{L}^\mathrm{D}}}{L^3}\geq \; \sup_{\mu \geq 0} \Big[ \mu\r-e^*(\mu) \Big] = \sup_{\mu \in \bR} \Big[ \mu\r-e^*(\mu) \Big] = e (\rho) \]
where we used the fact that $e^* (0) = 0$ (because $e (\rho) \geq 0$ for all $\rho \geq 0$ and $e (0) = 0$) and $e^* (\mu) \geq -e (0) = 0$ for all $\mu \in \bR$ in the second step and Lemma \ref{lm:legendre} in the third step. 
\end{proof}

With Lemmas \ref{lm:Dir}, \ref{lm:Ltilde} and  \ref{lm:GCtoC} we are ready to show Prop. \ref{prop:localization}.  

\begin{proof}[Proof of Prop. \ref{prop:localization}] 
Given a normalized $\Psi_L \in \cF (\L_L)$ satisfying periodic boundary conditions with 
\[ 
\langle \Psi_L, \cN \Psi_L \rangle \geq \rho (1+c' \r) (L + 2\ell + R)^3 \,, \hskip 0.5cm  \langle \Psi_L, \cN^2 \Psi_L \rangle \leq C' \rho^2 (L + 2\ell + R)^6  \]
we find with Lemma \ref{lm:Dir} a normalized $\Psi_{L+2\ell}^\text{D} \in \cF (\L_{L+2\ell})$ satisfying Dirichlet conditions such that 
\[ 
  \langle \Psi_{L+2\ell}, \cN \Psi_{L+2\ell} \rangle \geq \rho (1+c' \r) (L + 2\ell + R)^3 \,, \hskip 0.5cm  \langle \Psi_{L+2\ell}, \cN^2 \Psi_{L+2\ell} \rangle \leq C' \rho^2 (L + 2\ell + R)^6  
\]
and 
\[ \langle \Psi_{L+2\ell}^\text{D} , \cH   \Psi_{L+2\ell}^\text{D} \rangle \leq \langle \Psi_L , \cH \Psi_L \rangle + \frac{C}{L\ell} \langle \Psi_L , \cN \Psi_L \rangle\,. \]
With Lemma \ref{lm:Ltilde}, we obtain a sequence $\Psi^\text{D}_{\tl{L}} \in \cF (\L_{\tl{L}})$, with $\tl{L} =t (L+2\ell+R)$ for $t \in \bN$, such that 
\[  \langle \Psi_{\tl{L}}^\text{D} , \cN   \Psi_{\tl{L}}^\text{D} \rangle \geq   \rho (1+ c' \r) \tl{L}^3 \,, \hskip 0.5cm   \langle \Psi_{\tl{L}}^\text{D} , \cN^2   \Psi_{\tl{L}}^\text{D} \rangle  \leq  C' \rho^2 \tl{L}^6  \]
and \[ \langle \Psi_{\tl{L}}^\text{D} , \cH   \Psi_{\tl{L}}^\text{D} \rangle 
\leq t^3 \langle \Psi_L , \cH \Psi_L \rangle + \frac{C t^3}{L\ell} \langle \Psi_L , \cN \Psi_L \rangle \,. \]
With Lemma \ref{lm:GCtoC}, we conclude that 
\[ \begin{split} 
e(\rho) &\leq \lim_{\tl{L} \to \infty}  \frac{\langle \Psi_{\tl{L}}^\text{D} , \cH   \Psi_{\tl{L}}^\text{D} \rangle}{\tl{L}^3} 
\\ & \leq \frac{1}{(1+2\ell/L + R/L)^3} \left[ \frac{\langle \Psi_L, \cH \Psi_L \rangle}{L^3} + \frac{C}{L^4 \ell} \langle \Psi_L, \cN \Psi_L \rangle \right] \\  &\leq  \frac{\langle \Psi_L, \cH \Psi_L \rangle}{L^3} + \frac{C}{L^4 \ell} \langle \Psi_L, \cN \Psi_L \rangle\,.
\end{split}  \] 
\end{proof}


\def\bskip{\\[-0.6cm]}

\end{document}